\keywords{Algebraic quantum field theory, gauge theory, model categories, operads, stacks}
\title{Higher Structures in Algebraic Quantum Field Theory}
\author[M.~Benini]{Marco Benini\inst{a}}
\author[A.~Schenkel]{Alexander Schenkel\inst{b,}\footnote{Corresponding author email: \href{mailto:alexander.schenkel@nottingham.ac.uk}{\textsf{alexander.schenkel@notting\-ham.ac.uk}}}}
\address[1]{Fachbereich Mathematik, Universit\"at Hamburg, Bundesstr.~55, 20146 Hamburg, Germany}
\address[2]{School of Mathematical Sciences, University of Nottingham, University Park, Nottingham NG7 2RD, United Kingdom}
\shortauthors{M.~Benini and A.~Schenkel}
\begin{abstract}
A brief overview of the recent developments of operadic and higher categorical techniques in algebraic quantum field theory is given. The relevance of such mathematical structures for the description of gauge theories is discussed.
\end{abstract}
\theoremstyle{plain}
\newtheorem{theo}{Theorem}[section]
\newtheorem{lem}[theo]{Lemma}
\newtheorem{propo}[theo]{Proposition}
\newtheorem{cor}[theo]{Corollary}
\theoremstyle{definition}
\newtheorem{defi}[theo]{Definition}
\newtheorem{ex}[theo]{Example}
\newtheorem{rem}[theo]{Remark}
\newtheorem{var}[theo]{Variation}
\newtheorem{problem}[theo]{Open Problem}
\def\Loc{\mathbf{Loc}}
\def\Locc{\mathbf{Loc}_{\text{\large $\diamond$}}^{}}
\def\COpen{\mathbf{COpen}}
\def\Open{\mathbf{Open}}
\def\CastAlg{C^\ast\mathbf{Alg}}
\def\astAlg{{}^{\ast}\mathbf{Alg}}
\def\Alg{\mathbf{Alg}}
\def\QFT{\mathbf{QFT}}
\def\OCat{\mathbf{OrthCat}}
\def\Op{\mathbf{Op}}
\def\Set{\mathbf{Set}}
\def\CC{\mathbf{C}}
\def\DD{\mathbf{D}}
\def\MM{\mathbf{M}}
\def\HH{\mathbf{H}}
\def\Grpd{\mathbf{Grpd}}
\def\sSet{\mathbf{sSet}}
\def\Cart{\mathbf{Cart}}
\def\Man{\mathbf{Man}}
\def\Disk{\mathbf{Disk}}
\def\Ch{\mathbf{Ch}}
\def\AAA{\mathfrak{A}}
\def\BBB{\mathfrak{B}}
\def\CCC{\mathfrak{C}}
\def\bbC{\mathbb{C}}
\def\bbR{\mathbb{R}}
\def\bbL{\mathbb{L}}
\def\bbS{\mathbb{S}}
\def\O{\mathcal{O}}
\def\P{\mathcal{P}}
\def\E{\mathcal{E}}
\def\As{\mathsf{As}}
\def\oone{\mathbbm{1}}
\def\id{\mathrm{id}}
\def\Hom{\mathrm{Hom}}
\def\holim{\mathrm{holim}}
\def\colim{\mathrm{colim}}
\def\hocolim{\mathrm{hocolim}}
\def\Lan{\mathrm{Lan}}
\def\op{\mathrm{op}}
\def\dd{\mathrm{d}}
\newcommand\ovr[1]{\overline{#1}}
\newcommand\und[1]{\underline{#1}}
\begin{document}
\maketitle


\section{\label{sec:background}Background on AQFT}
Algebraic quantum field theory (AQFT) is a mathematical
framework to formalize and investigate quantum field theories (QFTs) on 
{\em Lorentzian} manifolds, i.e.\ on space-times in the sense of general relativity. 
We have emphasized the adjective Lorentzian because this is what makes AQFT different
from other mathematical\linebreak approaches to QFT,
such as (extended) topological QFT  \cite{Atiyah:1989vu,Lurie:2009keu}
or the factorization algebra approach of Costello and Gwilliam \cite{costello2016factorization}.
The original framework of Haag and Kastler \cite{Haag:1963dh} was restricted
to QFTs on Minkowski space-time, but a more flexible version of AQFT
that works on {\em all} (globally hyperbolic) Lorentzian manifolds was later developed
by Brunetti, Fredenhagen and Verch \cite{Brunetti:2001dx}. AQFT turns out to be a very powerful and 
successful framework not only for proving model-independent results for QFTs,
but also for studying concrete applications with a high level of mathematical rigor.
We refer to \cite{Brunetti:2015vmh} for an overview of some of the recent advances in AQFT.

Before we can provide a definition of what an AQFT is,
we have to make precise on which space-times
we would like our QFTs to live. 
We refer to \cite{Baer:0806.1036} for a concise introduction to Lorentzian geometry.
In order to avoid pathologies, one typically considers
only globally hyperbolic Lorentzian manifolds. These are
Lorentzian manifolds $M$ for which there exists a Cauchy
surface $\Sigma\subset M$, i.e.\ a codimension $1$ hypersurface 
that is intersected precisely once by every inextensible causal curve.
We further would like that $M$ is oriented and time-oriented
in order to have a volume form and a way to distinguish
between future and past. We collect all oriented and time-oriented
globally hyperbolic Lorentzian manifolds (of a fixed dimension $m\geq 2$)
in a category that we denote by $\Loc$. The morphisms $f:M\to N$ in $\Loc$
are orientation and time-orientation preserving isometric embeddings 
of $M$ into $N$ such that the image $f(M)\subseteq N$ is open and causally convex,
i.e.\ every causal curve in $N$ that starts and ends in $f(M)$ is entirely contained in $f(M)$.
There exists a distinguished class $W\subseteq \mathrm{Mor}\Loc$ 
of $\Loc$-morphisms, called Cauchy morphisms, 
which is given by all $f:M\to N$ such that the image $f(M)\subseteq N$ 
contains a Cauchy surface of $N$. Loosely speaking, one should
think of $f(M)\subseteq N$ as a `time slab' with respect to a time coordinate on $N$.
The category $\Loc$ may be endowed with a further structure that encodes  causal
independence of subspace-times. We call a pair of $\Loc$-morphisms 
$(f_1:M_1\to N,f_2:M_2\to N)$ to a common target 
causally disjoint if their images $f_1(M_1)\subseteq N$ and $f_2(M_2)\subseteq N$
are causally disjoint subsets of $N$, i.e.\ there exists no causal curve connecting $f_1(M_1)$ and $f_2(M_2)$.
The collection of all causally disjoint pairs of $\Loc$-morphisms
is denoted by ${\perp_\Loc} \subseteq \mathrm{Mor}\Loc \, {}_{\mathrm{t}}{\times}_{\mathrm{t}} \, \mathrm{Mor}\Loc$
and we shall simply write $f_1\perp_{\Loc}f_2$ whenever $(f_1,f_2)\in{\perp_{\Loc}}$.
The following variant of AQFT was proposed in \cite{Brunetti:2001dx} and it is called 
locally covariant QFT.
\begin{defi}\label{def:LCQFT}
A {\em locally covariant QFT} is a functor
$\AAA : \Loc\to \CastAlg$ to the category of unital 
$C^\ast$-algebras that satisfies the following properties:
\begin{enumerate}[i)]
\item {\em Isotony:} For all $\Loc$-morphisms $f:M\to N$, the $\ast$-homomorphism 
$\AAA(f) : \AAA(M)\to \AAA(N)$ is injective.

\item {\em Einstein causality:} For all causally disjoint $f_1\perp_{\Loc}f_2$, the induced commutator
\begin{flalign}
\big[\AAA(f_1)(a), \AAA(f_2)(b)\big]_{\AAA(N)} =0
\end{flalign}
is zero, for all $a\in \AAA(M_1)$ and $b\in \AAA(M_2)$.

\item {\em Time-slice axiom:} For all Cauchy morphisms $f\in W$, the $\ast$-homomorphism
$\AAA(f):\AAA(M)\to\AAA(N)$ is an isomorphism.
\end{enumerate}
\end{defi}

The physical interpretation is as follows: The $C^\ast$-algebra $\AAA(M)\in\CastAlg$
associated to $M\in\Loc$ describes the quantum observables of the theory that one
can measure in the space-time $M$. The $\ast$-homomorphism $\AAA(f) : \AAA(M)\to \AAA(N)$
associated to a $\Loc$-morphism $f:M\to N$ pushes forward
observables along this space-time embedding. The isotony axiom then says that no observables
are lost under such pushforwards, i.e.\ larger space-times are not allowed to have less observables
than smaller ones. The Einstein causality axiom formalizes that spacelike separated observables
commute with each other. Finally, the time-slice axiom implements a dynamical
law or `time evolution' as it states that the observable algebra of a small region containing 
a Cauchy surface is isomorphic to the observable algebra of the full space-time.

It might be a bit surprising for some readers that we did not mention Hilbert spaces
in Definition \ref{def:LCQFT}, which are the predominant structures 
in other approaches to QFT. The reason is that AQFT splits the 
problem of describing a QFT in two separate steps: 
First, one constructs a theory in the sense of Definition \ref{def:LCQFT},
which describes the quantum observables as abstract $C^\ast$-algebras. Second,
one studies (algebraic) states on these algebras, i.e.\ linear functionals $\omega_M : \AAA(M)\to \bbC$
that are positive $\omega_M(a^\ast a)\geq 0$, for all $a\in\AAA(M)$, and normalized  $\omega_M(\oone)=1$.
These states then define Hilbert spaces via the GNS-construction. This means that, while retaining a prominent role, in AQFT Hilbert spaces 
enter the game only at a later stage, when one analyzes the representation theory of a specific model. 
The advantage of axiomatizing only the observable algebras of a QFT,
as it was done in Definition \ref{def:LCQFT}, is that one does not have to make
any a priori choice of a distinguished `vacuum' state, but 
one treats all possible states on an equal footing. 
The reason why it is particularly important to do so is two-fold: 
1.)~different states can induce inequivalent Hilbert space representations, 
2.)~for QFTs on curved space-times there is no distinguished choice 
of vacuum state due to the lack of space-time symmetries.
Another more technical advantage is that
the observable algebras of a QFT behave local (cf.\ Einstein
causality in Definition \ref{def:LCQFT}), while states capture
non-local quantum features such as entanglement. Hence,
one can employ powerful local techniques for constructing and analyzing 
examples of AQFTs, which is particularly useful when discussing perturbatively
interacting models and their renormalization, see e.g.\ \cite{rejzner2016perturbative} for a recent overview.

We would like to add some comments about variations of
Definition \ref{def:LCQFT} that are considered 
in the literature. We shall also explain why we think such 
variations are reasonable for certain purposes.
\begin{var}
Instead of the category $\CastAlg$ of $C^\ast$-alge\-bras, one may consider also 
other categories to describe the observables of a QFT.
For example, one may consider the category $\astAlg$ of $\ast$-algebras
or various categories of topological $\ast$-algebras, sometimes over the ring
of formal power series $\bbC[[\hbar]]$. Such choices are useful for
formalizing perturbatively interacting AQFTs, where one does not 
have $C^\ast$-norms.
\end{var}

\begin{var}\label{var:Locc}
Instead of the category $\Loc$ of all oriented and time-oriented
globally hyperbolic space-times, one may consider also other categories
of space-times. For example, one may take the full subcategory $\Locc\subseteq \Loc$
of all space-times $M$ whose underlying manifold is diffeomorphic to $\bbR^m$. Note that
our notions of causal disjointness $\perp_{\Loc}$ and Cauchy morphisms $W$
restrict to this subcategory, hence we can make sense of all axioms listed
in Definition \ref{def:LCQFT} for functors $\AAA : \Locc\to\CastAlg$ defined on 
this subcategory. Physically, one interprets such AQFTs as QFTs
that are only defined on topologically trivial space-times. As another example,
one may take the category $\COpen(M)$ of causally convex open subsets $U\subseteq M$
of a fixed space-time $M$. There exists an evident functor $\COpen(M)\to \Loc$
along which we can pull back our notion of causal disjointness $\perp_{\Loc}$ and the 
Cauchy morphisms $W$. AQFTs on $\COpen(M)$ describe QFTs that
are defined on suitable subsets of a fixed space-time $M$, which is in the spirit of the
original Haag-Kastler approach \cite{Haag:1963dh}. 
\end{var}

\begin{var}\label{var:isotony}
The isotony axiom turns out to be too restrictive to capture 
important examples of QFTs that are sensitive to
topological data. For example, the functor $\AAA : \Loc\to\CastAlg$
describing gauge-invariant observables of Abelian Yang-Mills theory
violates the isotony axiom because of electric and magnetic charges
living in certain cohomology groups of the space-times. This feature was
observed first in \cite{Dappiaggi:2011zs} and it was later refined and generalized in
\cite{Benini:2013tra,Benini:2013ita,Becker:2014tla,Becker:2015ybl}.
Our current practice is to drop the isotony axiom from the definition of AQFTs.
In ongoing works, we attempt to find a suitable replacement by a 
descent (i.e.\ local-to-global) condition.
\end{var}

We would like to comment in more detail on the idea of
introducing a descent condition for AQFT. Loosely speaking,
descent means that the observable algebra $\AAA(M)$
on a `complicated' space-time $M$ can be obtained by patching together the
observable algebras $\AAA(U_i)$ on a suitable family of `simple' subspace-times
$U_i\subseteq M$. In practice, descent would allow us to replace
questions about the complicated observable algebra $\AAA(M)$ by a family of simpler questions
about the algebras $\AAA(U_i)$ and their interplay via embeddings. 
Such properties are prevalent in mathematics,
with concrete manifestations given by sheaf or cosheaf conditions.
To our surprise, it seems that descent in AQFT has not yet been
studied systematically, at least according to our best knowledge.
The reason for this might be that the most desirable 
kind of descent condition for AQFT, namely a cosheaf condition, 
turns out to be very restrictive and in particular it fails even
in the simplest examples of non-interacting AQFTs. Recall that
the cosheaf condition for the functor $\AAA : \Loc\to \CastAlg$ underlying
an AQFT states that, for every (suitable) cover $\{U_i\subseteq M\}$ by 
causally convex open subsets, the canonical morphism
\begin{flalign}\label{eqn:cosheafAQFT}
\colim\Bigg(
\!\!\xymatrix@C=1em{
\coprod\limits_{ij} \AAA(U_{ij}) \ar@<0.5ex>[r]\ar@<-0.5ex>[r]&
 \coprod\limits_{i} \AAA(U_i) }\!\!
\Bigg) ~\stackrel{\cong}{\longrightarrow}~ \AAA(M)
\end{flalign}
is an isomorphism of $C^\ast$-algebras. Here $U_{ij} := U_i\cap U_j$ denotes the intersections,
$\coprod$ the coproduct in $\CastAlg$ and $\colim$ the colimit in $\CastAlg$.
Looking at a simple example, we shall illustrate in Appendix \ref{app:cosheaf} that 
it is very hard to find covers $\{U_i\subseteq M\}$ such that this condition holds true.

A weaker but still useful descent condition can be obtained by
using ideas related to Fredenhagen's universal algebra \cite{Fredenhagen:1989pw,Fredenhagen:1993tx,Fredenhagen:1992yz}.
We shall provide here only a rough sketch of the idea and refer to 
Section \ref{subsec:universal} for a precise implementation.
Let us denote by $j: \Locc\to\Loc$ the full subcategory embedding
from Variation \ref{var:Locc}. We may restrict any AQFT $\AAA: \Loc\to \CastAlg$ to a functor 
$j^\ast\AAA := \AAA\circ j : \Locc\to\CastAlg$ on $\Locc$. It is easy to see
that this functor satisfies the analogues for $\Locc$ of the axioms in Definition \ref{def:LCQFT}, 
hence it is an AQFT that is however only defined on space-times
with underlying manifold diffeomorphic to $\bbR^m$. Applying Fredenhagen's
universal algebra construction to $j^\ast\AAA$, i.e.\ forming the left Kan extension
along the embedding functor $j : \Locc\to\Loc$ (cf.\ \cite{lang2014universal}), defines another functor
$\Lan_j j^\ast \AAA : \Loc\to \CastAlg$ on all of $\Loc$.
Note that there exists a canonical comparison natural transformation
$\epsilon_\AAA : \Lan_j j^\ast \AAA  \to \AAA$ given by the counit of the 
adjunction $\Lan_j \dashv j^\ast$. We hence may formalize
a descent condition by demanding that $\epsilon_{\AAA}$ is a 
natural isomorphism. In other words, this descent condition
formalizes the idea that the AQFT $\AAA$ on $\Loc$ is completely
determined by its values on the category $\Locc\subseteq \Loc$
of space-times diffeomorphic to $\bbR^m$.
This is similar to the descent condition in factorization 
homology \cite{Ayala:1206.5522}, which is a topological variant of
factorization algebras \cite{costello2016factorization}.

Let us mention the following issue with the latter descent condition,
which will be addressed and solved in Section \ref{subsec:universal} once
we have a more powerful mathematical machinery available.
Notice that there is no reason why the functor $\Lan_j j^\ast \AAA : \Loc\to\CastAlg$ 
should satisfy the axioms from Definition \ref{def:LCQFT}, i.e.\ $\Lan_j j^\ast \AAA$ 
is not necessarily an AQFT in the sense of Definition \ref{def:LCQFT}, even
if we drop the isotony axiom as in Variation \ref{var:isotony}.
This has as a consequence that not even simple examples of AQFTs,
such as the free Klein-Gordon theory, satisfy the present version
of the descent condition. (They however satisfy a slightly weaker
descent condition obtained by replacing $\Loc$ by the full subcategory 
$\Loc_0^{}\subseteq \Loc$ of connected space-times, cf.\ \cite{lang2014universal}.) 
A solution to this problem is to use a more refined version of
Fredenhagen's universal algebra construction that is obtained 
by methods from operad theory. This will be explained in
Section \ref{subsec:universal}.


\section{\label{sec:operads}AQFT from an algebraic perspective}
The aim of this section is to identify a colored operad
that controls the algebraic structures underlying AQFT.
The main advantages of this operadic perspective
are as follows: 1.)~It provides a suitable framework
for studying universal constructions for AQFTs, for example 
via operadic left Kan extensions. 
This will in particular allow us to formulate a precise version of the descent
condition sketched in Section \ref{sec:background}.
2.)~It provides a suitable starting point for investigating higher
structures in AQFT by importing ideas and techniques from
the homotopy theory of algebras over operads, see e.g.\ \cite{Hinich:9702015,Hinich:1311.4130}.
The second point will be discussed in detail in Section \ref{sec:homotopy}.
For details on the material presented below we refer to \cite{Benini:2017fnn}.
A generalization to other types of field theories (e.g.\ classical, linear, etc.)
can be found in \cite{Bruinsma:2018knq} and in Bruinsma's contribution to 
these proceedings \cite{contrib:bruinsma}.


\subsection{Orthogonal categories and AQFTs}
For the rest of this paper we shall adopt a very broad and 
flexible definition of AQFTs in which the space-time category
$\Loc$ is generalized to a so-called orthogonal category \cite{Benini:2017fnn}.
This allows us to treat different flavors of AQFTs, such
as locally covariant QFTs, AQFTs on a fixed space-time and
chiral conformal AQFTs, on an equal footing. 
\begin{defi}
An {\em orthogonal category} is a pair $\ovr{\CC} := (\CC,\perp)$
consisting of a small category $\CC$ and a subset
${\perp} \subseteq \mathrm{Mor}\CC \,{}_{\mathrm{t}}{\times}_{\mathrm{t}}\, \mathrm{Mor}\CC$ 
of the set of pairs of morphisms with a common target (called {\em orthogonality relation}), 
such that the following conditions hold true:
\begin{enumerate}[i)]
\item {\em Symmetry:} If $(f_1,f_2)\in{\perp}$, then $(f_2,f_1)\in {\perp}$.
\item {\em $\circ$-Stability:} If $(f_1,f_2)\in{\perp}$, then
$(g\,f_1\,h_1, g\,f_2\,h_2)\in{\perp}$, for all composable $\CC$-morphisms
$g$, $h_1$ and $h_2$.
\end{enumerate}
We denote orthogonal pairs $(f_1,f_2)\in{\perp}$ also by $f_1\perp f_2$.
An {\em orthogonal functor} $F : \ovr{\CC}\to \ovr{\DD}$ is a functor
$F : \CC\to \DD$ such that $F(f_1)\perp_\DD F(f_2)$ for all $f_1\perp_\CC f_2$.
We denote by $\OCat$ the category of orthogonal categories 
and orthogonal functors.
\end{defi}
\begin{ex}\label{ex:LocLocc}
The pair $\ovr{\Loc} = (\Loc,\perp_\Loc)$ discussed in Section \ref{sec:background}
is an orthogonal category. Endowing the subcategories $j : \Locc\to \Loc$
and $j_M : \COpen(M)\to \Loc$ from Variation \ref{var:Locc} with the pullback
orthogonality relations $j^\ast(\perp_\Loc)$ and $j_M^\ast(\perp_\Loc)$
defines orthogonal categories $\ovr{\Locc} $ and $\ovr{\COpen(M)}$. The embedding
functors $j : \ovr{\Locc} \to \ovr{\Loc}$ and $j_M : \ovr{\COpen(M)} \to \ovr{\Loc}$
are orthogonal functors. For an example that is not directly related to Lorentzian geometry, 
consider the category $\mathbf{Int}(\mathbb{S}^1)$ of open intervals $I\subset \mathbb{S}^1$ 
of the circle with morphisms given by subset inclusions $I\subseteq J \subset \mathbb{S}^1$.
We define an orthogonality relation on $\mathbf{Int}(\mathbb{S}^1)$ by declaring
two morphisms $I_1,I_2 \subseteq J\subset \mathbb{S}^1$  to be orthogonal if and only if
$I_1\cap I_2 = \emptyset$ are disjoint intervals. The corresponding 
orthogonal category $\ovr{\mathbf{Int}(\mathbb{S}^1)}$ features in chiral conformal 
QFT, see e.g.\ \cite{Kawahigashi:2015zha}.
\end{ex}

Throughout the whole section we shall fix a closed symmetric monoidal category
$\MM = (\MM,\otimes, I)$, which we further assume to be bicomplete,
i.e.\ all small limits and colimits exist in $\MM$.
\begin{defi}\label{def:QFTcats}
Let $\ovr{\CC} = (\CC,\perp)$ be an orthogonal category.
An {\em $\MM$-valued AQFT on $\ovr{\CC}$} is a functor
$\AAA : \CC\to \Alg_{\As}(\MM)$ to the category of
associative and unital algebras in $\MM$ that satisfies
the $\perp$-commutativity property: For all
$(f_1: c_1\to c) \perp (f_2: c_2\to c)$, the diagram
\begin{flalign}
\xymatrix@C=6em{
\ar[d]_-{\AAA(f_1)\otimes \AAA(f_2)}\AAA(c_1)\otimes \AAA(c_2)\ar[r]^-{\AAA(f_1)\otimes \AAA(f_2) } & \AAA(c)^{\otimes 2}\ar[d]^-{\mu_c}\\
\AAA(c)^{\otimes 2} \ar[r]_-{\mu_c^{\op}} & \AAA(c)
}
\end{flalign}
in $\MM$ commutes, where $\mu_c^{(\op)}$ denotes the (opposite) multiplication
in the algebra $\AAA(c)$. The category of
$\MM$-valued AQFTs on $\ovr{\CC}$ is defined as the full subcategory
\begin{flalign}
\QFT(\ovr{\CC}) \subseteq \Alg_{\As}(\MM)^{\CC}
\end{flalign}
of the functor category that consists of all $\perp$-commutative functors.
\end{defi}

\begin{rem}\label{rem:AQFTdefinition}
The following remarks are in order:
\begin{enumerate}[i)]
\item Motivated by Variation \ref{var:isotony}, we decided to omit the
isotony axiom in Definition \ref{def:QFTcats} because it is often 
violated in examples. 
\item Comparing Definitions \ref{def:QFTcats} and \ref{def:LCQFT},
it seems at first sight that we neglect the time-slice axiom in 
Definition \ref{def:QFTcats}. This is however not the case.
We shall prove in Proposition \ref{prop:timeslicelocalization}
below that the time-slice axiom may be encoded by localizing
the orthogonal category $\ovr{\Loc}$ at all Cauchy morphisms 
$W\subseteq\mathrm{Mor} \Loc$, which defines another orthogonal
category $\ovr{\Loc[W^{-1}]}$. 
\item Note that Definition \ref{def:QFTcats}
does not refer explicitly to $\ast$-involutions on algebras. These can be included
in a relatively straightforward way by choosing as target category $\MM$ an {\em involutive} closed symmetric 
monoidal category, see \cite{Benini:2018mcq} for the technical details. 
\end{enumerate}
\end{rem}

\begin{problem}
Coming back to the last point of the previous remark,
we would like to emphasize that, even though it is clear how
to include $\ast$-algebras in Definition \ref{def:QFTcats}, capturing $C^\ast$-algebras
as in Definition \ref{def:LCQFT} is more subtle and still an open problem.
The reason is that the category $\CastAlg$ of $C^\ast$-algebras is not 
(or at least not known to be) a category of $\ast$-algebras in a suitable 
involutive closed symmetric monoidal category. There are proposals
in the literature \cite{egger2011involutive,Egger:2007aa} to replace $\CastAlg$ by the category
of $\ast$-algebras in the involutive closed symmetric monoidal category
of {\em operator spaces} in order to obtain a categorical approach
to the theory of operator algebras.
To the best of our knowledge, such an approach has not been applied to AQFT yet.
\end{problem}

As promised in Remark \ref{rem:AQFTdefinition}, we shall now prove that
Definition \ref{def:QFTcats} includes the case of AQFTs satisfying the time-slice axiom.
Let $\ovr{\CC} = (\CC,\perp)$ be any orthogonal category and $W\subseteq\mathrm{Mor} \CC$
any subset of the set of morphisms. We denote by $\CC[W^{-1}]$
the localization of the category $\CC$ at $W$ and by $L:\CC\to\CC[W^{-1}]$
the corresponding localization functor.
(We refer to  \cite[Section 7.1]{Kashiwara:2006ww} for details on localizations of categories.)
We define $\perp_W$ to be the
smallest orthogonality relation on $\CC[W^{-1}]$ such that $L(f_1)\perp_W L(f_2) $,
for all $f_1\perp f_2$. This implies that $L : \ovr{\CC}\to \ovr{\CC[W^{-1}]}$ is 
an orthogonal functor. We shall denote by
\begin{flalign}
\QFT(\ovr{\CC})^{W-\mathrm{const}}\subseteq \QFT(\ovr{\CC})
\end{flalign}
the full subcategory of AQFTs satisfying the {\em $W$-constancy property}, i.e.\
$\AAA(f) :\AAA(c)\to\AAA(c^\prime)$ is an isomorphism in $ \Alg_{\As}(\MM)$, 
for all $(f:c\to c^\prime)\in W$.
\begin{propo}\label{prop:timeslicelocalization}
The pullback functor $L^\ast := (-)\circ L : \Alg_{\As}(\MM)^{\CC[W^{-1}]} \to \Alg_{\As}(\MM)^\CC$
on functor categories restricts to an equivalence of categories 
\begin{flalign}\label{eqn:localizationtmp}
L^\ast : \QFT(\ovr{\CC[W^{-1}]})\stackrel{\sim}{\longrightarrow} \QFT(\ovr{\CC})^{W-\mathrm{const}}\quad.
\end{flalign}
\end{propo}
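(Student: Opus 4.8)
The plan is to reduce the statement to the universal property of the localization $L:\CC\to\CC[W^{-1}]$ and then to check its compatibility with the two $\perp$-commutativity constraints that cut out the $\QFT$ subcategories. Recall that for any category $\DD$ the precomposition functor $L^\ast : \DD^{\CC[W^{-1}]}\to\DD^{\CC}$ is fully faithful, and that a functor $G:\CC\to\DD$ lies in its essential image if and only if $G$ sends every morphism in $W$ to an isomorphism (see \cite[Section 7.1]{Kashiwara:2006ww}). Applying this with $\DD = \Alg_{\As}(\MM)$ identifies the essential image of $L^\ast$ on the full functor categories with the full subcategory of $W$-constant functors. Hence the only thing left is to show that, under this equivalence, the $\perp_W$-commutative functors on the left correspond precisely to the $\perp$-commutative functors on the right.

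First I would verify that $L^\ast$ restricts as claimed. Given $\AAA\in\QFT(\ovr{\CC[W^{-1}]})$, the functor $L^\ast\AAA = \AAA\circ L$ is automatically $W$-constant, since $L(f)$ is an isomorphism for every $f\in W$ and functors preserve isomorphisms. It is moreover $\perp$-commutative: for $f_1\perp f_2$ in $\ovr{\CC}$ the defining property of $\perp_W$ gives $L(f_1)\perp_W L(f_2)$, and by functoriality the $\perp$-commutativity square of $\AAA\circ L$ at $(f_1,f_2)$ is literally the $\perp_W$-commutativity square of $\AAA$ at $(L(f_1),L(f_2))$, which commutes by hypothesis. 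Thus $L^\ast$ maps $\QFT(\ovr{\CC[W^{-1}]})$ into $\QFT(\ovr{\CC})^{W-\mathrm{const}}$. Since both are full subcategories of the respective functor categories and $L^\ast$ is fully faithful on the latter, the restriction is automatically fully faithful; it remains only to prove essential surjectivity.

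The essential surjectivity is the crux. Let $\BBB\in\QFT(\ovr{\CC})^{W-\mathrm{const}}$. By the universal property recalled above there exists $\AAA:\CC[W^{-1}]\to\Alg_{\As}(\MM)$ with $L^\ast\AAA\cong\BBB$, and I must show that this $\AAA$ is $\perp_W$-commutative, i.e.\ $\AAA\in\QFT(\ovr{\CC[W^{-1}]})$. To this end, consider the set $S$ of all pairs of $\CC[W^{-1}]$-morphisms with common target at which the commutativity square of Definition \ref{def:QFTcats} for $\AAA$ holds. The key point is that $S$ is itself an orthogonality relation. Symmetry of $S$ follows from the identity $\mu_c^{\op} = \mu_c\circ\tau$, the naturality of the symmetry $\tau$ of $\MM$, and $\tau^2=\id$. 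Stability under $\circ$ is where the algebra structure genuinely enters: precomposition of an orthogonal pair by further morphisms is handled by whiskering the square on the right, while postcomposition by $\AAA(h)$ uses that $\AAA(h)$ is a homomorphism for both $\mu$ and $\mu^{\op}$, so that the two legs of the square are transported to the two legs of the whiskered square. Because $\AAA\circ L\cong\BBB$ is $\perp$-commutative (commutativity of these squares being invariant under natural isomorphism), $S$ contains all pairs $(L(f_1),L(f_2))$ with $f_1\perp f_2$; since $\perp_W$ is by definition the smallest orthogonality relation with this property, we conclude $\perp_W\subseteq S$, i.e.\ $\AAA$ is $\perp_W$-commutative. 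This yields $L^\ast\AAA\cong\BBB$ with $\AAA\in\QFT(\ovr{\CC[W^{-1}]})$, completing essential surjectivity and hence the proof. I expect the verification that $S$ is $\circ$-stable to be the main obstacle, since it is the one step that truly uses that AQFTs take values in algebras rather than in arbitrary objects of $\MM$.
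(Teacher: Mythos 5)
Your proof is correct and takes essentially the same route as the paper's: restrict $L^\ast$, inherit full faithfulness from the universal property of the localization, and prove essential surjectivity by lifting a $W$-constant theory through $L$ and checking $\perp_W$-commutativity via the fact that $\perp_W$ is generated by the pairs $(L(f_1),L(f_2))$ with $f_1\perp f_2$. Your observation that the pairs at which the commutativity square of Definition \ref{def:QFTcats} holds form an orthogonality relation (symmetry via $\mu^{\op}=\mu\circ\tau$, $\circ$-stability via the homomorphism property) is exactly the verification the paper compresses into ``one easily checks''.
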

\begin{proof}
One immediately observes that, for every 
$\BBB\in \QFT(\ovr{\CC[W^{-1}]}) \subseteq  \Alg_{\As}(\MM)^{\CC[W^{-1}]}$, the 
functor $L^\ast \BBB =\BBB\circ L: \CC\to \Alg_\As(\MM)$ is $W$-constant
and $\perp$-commutative. Hence, $L^\ast$ restricts
to \eqref{eqn:localizationtmp}. By definition of localization, $L^\ast$ is fully faithful
and hence so is its restriction \eqref{eqn:localizationtmp} to full subcategories. 
It remains to show that \eqref{eqn:localizationtmp} is essentially surjective. 
Given any $\AAA \in  \QFT(\ovr{\CC})^{W-\mathrm{const}}$,
there exists by definition of localization a functor $\BBB : \CC[W^{-1}]\to \Alg_\As(\MM)$
and a natural isomorphism $\AAA \cong L^\ast \BBB$. One easily checks 
that $\BBB$ is $\perp_W$-commutative,
i.e.\ $\BBB \in \QFT(\ovr{\CC[W^{-1}]})$, by using that the orthogonality relation $\perp_W$ 
is generated by $L(f_1) \perp_W L(f_2) $, for all $f_1\perp f_2$.
\end{proof}
\begin{ex}
Applying this general proposition to the orthogonal localization $L : \ovr{\Loc}\to \ovr{\Loc[W^{-1}]}$ 
of the usual space-time category $\ovr{\Loc}$ at all Cauchy morphisms 
$W$ shows that $\QFT(\ovr{\Loc[W^{-1}]})$ is equivalent to
the category of AQFTs satisfying also the time-slice axiom.
\end{ex}


\subsection{\label{subsec:AQFToperad}AQFT operads}
From a category theoretical perspective, our Definition \ref{def:QFTcats} 
of the category of AQFTs is neither elegant nor very effective because we 
select a certain full subcategory of a functor category by demanding
additional {\em \underline{properties}}.
For example, it is a priori unclear if the category $\QFT(\ovr{\CC})$ of AQFTs 
admits all limits and colimits. Another problem, which is related to 
our discussion of descent in Section \ref{sec:background},
is as follows: Given an orthogonal functor $F : \ovr{\CC}\to \ovr{\DD}$,
it is a priori unclear if its left Kan extension $\Lan_F : \Alg_{\As}(\MM)^\CC\to \Alg_{\As}(\MM)^\DD$ 
restricts to a functor $\QFT(\ovr{\CC})\to \QFT(\ovr{\DD})$ between
the corresponding AQFT categories. More concretely, it is a priori unclear
if constructions like Fredenhagen's universal algebra define $\perp$-commutative
functors, i.e.\ bona fide AQFTs according to Definition \ref{def:QFTcats}.
These problems were addressed and solved in \cite{Benini:2017fnn}, where we 
have shown that there exists a colored operad $\O_{\ovr{\CC}}\in \Op(\Set)$ whose 
category of algebras is the AQFT category $\QFT(\ovr{\CC})$. In this approach
$\perp$-commutativity is not formulated as a property, 
instead it is encoded as a {\em\underline{structure}}
into the operad $\O_{\ovr{\CC}}$. 

Let us recall that a colored operad $\O\in\Op(\Set)$ with values in the category of sets $\Set$ is 
a generalization of the concept of a category where morphisms  are allowed to have
more than one input. 
The following picture visualizes this basic idea:
\begin{flalign}\label{eqn:operadpic}
\text{\begin{tabular}{cc}
~~~Category ($1$-to-$1$):~~~& Colored operad ($n$-to-$1$): \\
\begin{tikzpicture}
\draw[fill=gray!10] (0,0) circle (0.2cm);
\draw (0,0) node {{\footnotesize $f$}};
\draw[thick] (0,0.2) -- (0,0.5);
\draw[thick] (0,-0.2) -- (0,-0.5);
\draw (0,0.7) node {{\footnotesize $c^\prime$}};
\draw (0,-0.65) node {{\footnotesize $c$}};
\end{tikzpicture} & \begin{tikzpicture}
\draw[fill=gray!10] (0,0.2) -- (0.5,-0.2) -- (-0.5,-0.2) -- (0,0.2);
\draw (0,0) node {{\footnotesize $o$}};
\draw[thick] (0,0.2) -- (0,0.5);
\draw (0,0.7) node {{\footnotesize $c^\prime$}};
\draw[thick] (-0.4,-0.2) -- (-0.4,-0.5);
\draw (-0.4,-0.65) node {{\footnotesize $c_1$}};
\draw[thick] (0.4,-0.2) -- (0.4,-0.5);
\draw (0.4,-0.65) node {{\footnotesize $c_n$}};
\draw (0,-0.5) node {{\footnotesize $\cdots$}};
\end{tikzpicture}
\end{tabular}}
\end{flalign}
More precisely, a {\em colored operad} $\O\in\Op(\Set)$ 
is described by the following data:
\begin{enumerate}[i)]
\item an underlying set of objects, called colors in operad theory;
\item for each tuple $(\und{c},t) = ((c_1,\dots,c_n),t)$ of colors a set
$\O\big(\substack{t \\\und{c}}\big)\in\Set$ of operations from $\und{c}$ to $t$;
\item composition maps $\gamma : \O\big(\substack{t \\ \und{c}}\big)\times\prod_{i=1}^n 
\O\big(\substack{c_i\\ \und{b}_i}\big) \to \O\big(\substack{t\\ (\und{b}_1,\dots,\und{b}_n)}\big)$;
\item unit elements  $\oone \in \O\big(\substack{t\\ t}\big)$;
\item permutation group actions $\O(\sigma) : \O\big(\substack{t \\ \und{c}}\big)\to 
\O\big(\substack{t \\ \und{c}\sigma}\big)$, where $\sigma\in\Sigma_n$ is 
a permutation of $n$ letters.
\end{enumerate}
The composition maps are assumed to be associative in the obvious sense,
unital with respect to the unit operations and also equivariant with respect
to the permutation group actions. We refer to e.g.\ \cite{yau2016colored} for a detailed
definition of colored operads.

The {\em AQFT operad} $\O_{\ovr{\CC}}\in \Op(\Set)$ for an orthogonal category
$\ovr{\CC}$  admits the following simple presentation by
generators and relations \cite{Benini:2017fnn}:
\begin{enumerate}[i)]
\item {\em Generators:}  
\begin{flalign}
\text{\begin{tabular}{c}
\begin{tikzpicture}[cir/.style={circle,draw=black,inner sep=0pt,minimum size=2mm},
        poin/.style={circle, inner sep=0pt,minimum size=0mm},scale=0.8, every node/.style={scale=0.8}]
%
%
\node[poin] (Hout) [label=above:{\small $c^\prime$}] at (0,1) {};
\node[poin] (Hin) [label=below:{\small $c$}] at (0,0) {};
\draw[thick] (Hin) -- (Hout) node[midway,left] {{\small $f$}};
%
%
\node[poin] (Uout) [label=above:{\small $c$}] at (2,1) {};
\node[poin] (Uin) [label=below:{\small $\emptyset$}] at (2,0) {};
\draw[thick] (Uin) -- (Uout) node[midway,left] {{\small $1_c$}};
\draw[thick,fill=white] (Uin) circle (0.6mm);
%
%
\node[poin] (Mout) [label=above:{\small $c$}] at (4,1) {};
\node[poin] (Min1) [label=below:{\small $c$}] at (3.7,0) {};
\node[poin] (Min2) [label=below:{\small $c$}] at (4.3,0) {};
\node[poin] (V)  [label=right:{\small $\mu_c$}] at (4,0.5) {};
\draw[thick] (Min1) -- (V);
\draw[thick] (Min2) -- (V);
\draw[thick] (V) -- (Mout);
\end{tikzpicture}
\end{tabular}}
\end{flalign}
for all $\CC$-morphisms $f\in\mathrm{Mor}\CC$ and all objects $c\in\CC$.
The first generator describes the pushforward of observables along
the space-time embedding $f:c\to c^\prime$ and the other two generators the 
unit and multiplication of observables in the space-time $c$.

\item {\em Functoriality relations:}
\begin{flalign}
\text{\begin{tabular}{c}
\begin{tikzpicture}[cir/.style={circle,draw=black,inner sep=0pt,minimum size=2mm},
        poin/.style={circle, inner sep=0pt,minimum size=0mm},scale=0.8, every node/.style={scale=0.8}]
%
%
\node[poin] (Hout) [label=above:{\small $c$}] at (0,1) {};
\node[poin] (Hin) [label=below:{\small $c$}] at (0,0) {};
\draw[thick] (Hin) -- (Hout) node[midway,left] {{\small $\oone$}};
\node[poin] (EqH) at (0.5,0.5) {$=$};
\node[poin] (HHout) [label=above:{\small $c$}] at (1.5,1) {};
\node[poin] (HHin) [label=below:{\small $c$}] at (1.5,0) {};
\draw[thick] (HHin) -- (HHout) node[midway,left] {{\small $\id_c$}};
%
%
\node[poin] (Kout) [label=above:{\small $c^{\prime\prime}$}] at (3.5,1.25) {};
\node[poin] (Kin) [label=below:{\small $c$}] at (3.5,-0.25) {};
\node[poin] (Kmid) at (3.5,0.5) {};
\draw[thick] (Kin) -- (Kmid) node[midway,left] {{\small $g$}};
\draw[thick] (Kmid) -- (Kout) node[midway,left] {{\small $f$}};
\node[poin] (EqK) at (4,0.5) {$=$};
\node[poin] (KKout) [label=above:{\small $c^{\prime\prime}$}] at (5,1) {};
\node[poin] (KKin) [label=below:{\small $c$}] at (5,0) {};
\draw[thick] (KKin) -- (KKout) node[midway,left] {{\small $f\,g$}};
\end{tikzpicture}
\end{tabular}}
\end{flalign}
for all objects $c\in\CC$ and all pairs of composable $\CC$-morphisms
$(f,g)$. The first relation means that the operadic units are identified with the identity
morphisms of the category $\CC$ and the second relation means that the operadic
composition of $\CC$-morphisms agrees with their categorical composition in $\CC$.

\item {\em Algebra relations:}
\begin{flalign}
\text{\begin{tabular}{c}
\begin{tikzpicture}[cir/.style={circle,draw=black,inner sep=0pt,minimum size=2mm},
        poin/.style={circle, inner sep=0pt,minimum size=0mm},scale=0.8, every node/.style={scale=0.8}]
%
%
\node[poin] (Mout) [label=above:{\small $c$}] at (0,1) {};
\node[poin] (Min1) [label=below:{\small $\emptyset$}] at (-0.3,0) {};
\node[poin] (Min2) [label=below:{\small $c$}] at (0.3,0) {};
\node[poin] (V)  [label=right:{\small $\mu_c$}] at (0,0.5) {};
\draw[thick] (Min1) -- (V) node[pos=0.7,left] {{\small $1_c$}};
\draw[thick] (Min2) -- (V);
\draw[thick] (V) -- (Mout);
\draw[thick,fill=white] (Min1) circle (0.6mm);
\node[poin] (EqM) at (0.75,0.5) {$=$};
\node[poin] (MMout) [label=above:{\small $c$}] at (1.25,1) {};
\node[poin] (MMin) [label=below:{\small $c$}] at (1.25,0) {};
\draw[thick] (MMin) -- (MMout) node[midway,left] {{\small $\oone$}};
\node[poin] (EqMM) at (1.5,0.5) {$=$};
\node[poin] (MMMout) [label=above:{\small $c$}] at (2.25,1) {};
\node[poin] (MMMin1) [label=below:{\small $c$}] at (1.95,0) {};
\node[poin] (MMMin2) [label=below:{\small $\emptyset$}] at (2.55,0) {};
\node[poin] (VVV)  [label=left:{\small $\mu_c$}] at (2.25,0.5) {};
\draw[thick] (MMMin1) -- (VVV);
\draw[thick] (MMMin2) -- (VVV) node[pos=0.7,right] {{\small $1_c$}};
\draw[thick] (VVV) -- (MMMout);
\draw[thick,fill=white] (MMMin2) circle (0.6mm);
%
%
\node[poin] (Nout) [label=above:{\small $c$}] at (4.5,1.25) {};
\node[poin] (Nin1) [label=left:{\small $\mu_c$}] at (4.2,0.25) {};
\node[poin] (Nin2) at (7.3,0) {};
\node[poin] (NV)  [label=right:{\small $\mu_c$}] at (4.5,0.75) {};
\node[poin] (Ninin1) [label=below:{\small $c$}]  at (3.9,-0.25) {};
\node[poin] (Ninin2) [label=below:{\small $c$}]  at (4.5,-0.25) {};
\node[poin] (Ninin3) [label=below:{\small $c$}]  at (5.1,-0.25) {};
\draw[thick] (Nin1) -- (NV);
\draw[thick] (Ninin3) -- (NV);
\draw[thick] (Ninin1) -- (Nin1);
\draw[thick] (Ninin2) -- (Nin1);
\draw[thick] (NV) -- (Nout);
\node[poin] (EqN) at (5.5,0.5) {$=$};
\node[poin] (NNout) [label=above:{\small $c$}] at (6.5,1.25) {};
\node[poin] (NNin2) [label=right:{\small $\mu_c$}] at (6.8,0.25) {};
\node[poin] (NNV)  [label=left:{\small $\mu_c$}] at (6.5,0.75) {};
\node[poin] (NNinin1) [label=below:{\small $c$}]  at (5.9,-0.25) {};
\node[poin] (NNinin2) [label=below:{\small $c$}]  at (6.5,-0.25) {};
\node[poin] (NNinin3) [label=below:{\small $c$}]  at (7.1,-0.25) {};
\draw[thick] (NNinin1) -- (NNV);
\draw[thick] (NNin2) -- (NNV);
\draw[thick] (NNV) -- (NNout);
\draw[thick] (NNinin2) -- (NNin2);
\draw[thick] (NNinin3) -- (NNin2);
\end{tikzpicture}
\end{tabular}}
\end{flalign} 
for all objects $c\in\CC$. These relations express unitality and associativity
of the multiplication of observables in the space-time $c$.

\item {\em Compatibility relations:}
\begin{flalign}
\text{\begin{tabular}{c}
\begin{tikzpicture}[cir/.style={circle,draw=black,inner sep=0pt,minimum size=2mm},
        poin/.style={circle, inner sep=0pt,minimum size=0mm},scale=0.8, every node/.style={scale=0.8}]
%
%
\node[poin] (Kout) [label=above:{\small $c^{\prime}$}] at (0,1.25) {};
\node[poin] (Kin) [label=below:{\small $\emptyset$}] at (0,-0.25) {};
\node[poin] (Kmid) at (0,0.5) {};
\draw[thick] (Kin) -- (Kmid) node[midway,left] {{\small $1_c$}};
\draw[thick] (Kmid) -- (Kout) node[midway,left] {{\small $f$}};
\draw[thick,fill=white] (Kin) circle (0.6mm);
\node[poin] (EqK) at (0.5,0.5) {$=$};
\node[poin] (KKout) [label=above:{\small $c^{\prime}$}] at (1.5,1) {};
\node[poin] (KKin) [label=below:{\small $\emptyset$}] at (1.5,0) {};
\draw[thick] (KKin) -- (KKout) node[midway,left] {{\small $1_{c^\prime}$}};
\draw[thick,fill=white] (KKin) circle (0.6mm);
%
%
\node[poin] (MMout) [label=above:{\small $c^\prime$}] at (4,1.25) {};
\node[poin] (MMin1) [label=below:{\small $c$}] at (3.5,-0.25) {};
\node[poin] (MMin2) [label=below:{\small $c$}] at (4.5,-0.25) {};
\node[poin] (VV)  [label=right:{\small $\mu_{c}$}] at (4,0.5) {};
\draw[thick] (MMin1) -- (VV);
\draw[thick] (MMin2) -- (VV);
\draw[thick] (VV) -- (MMout)  node[midway,left] {{\small $f$}};
\node[poin] (EqM) at (5,0.5) {$=$};
\node[poin] (Mout) [label=above:{\small $c^\prime$}] at (6,1.25) {};
\node[poin] (Min1) [label=below:{\small $c$}] at (5.5,-0.25) {};
\node[poin] (Min2) [label=below:{\small $c$}] at (6.5,-0.25) {};
\node[poin] (V)  [label=right:{\small $\mu_{c^\prime}$}] at (6,0.5) {};
\draw[thick] (Min1) -- (V) node[midway,left] {{\small $f$}};
\draw[thick] (Min2) -- (V) node[midway,right] {{\small $f$}};
\draw[thick] (V) -- (Mout);
\end{tikzpicture}
\end{tabular}}
\end{flalign}
for all $\CC$-morphisms $f\in \mathrm{Mor}\CC$. These relations express
compatibility between the observable algebra structure on each space-time
and pushforwards of observables along space-time embeddings.

\item {\em $\perp$-commutativity relations:}
\begin{flalign}
\text{\begin{tabular}{c}
\begin{tikzpicture}[cir/.style={circle,draw=black,inner sep=0pt,minimum size=2mm},
        poin/.style={circle, inner sep=0pt,minimum size=0mm},scale=0.8, every node/.style={scale=0.8}]
%
%
%
\node[poin] (Mout) [label=above:{\small $c$}] at (0,1.25) {};
\node[poin] (Min1) [label=below:{\small $c_1$}] at (-0.5,-0.25) {};
\node[poin] (Min2) [label=below:{\small $c_2$}] at (0.5,-0.25) {};
\node[poin] (V)  [label=right:{\small $\mu_c$}] at (0,0.5) {};
\draw[thick] (Min1) -- (V) node[midway,left] {{\small $f_1$}};
\draw[thick] (Min2) -- (V) node[midway,right] {{\small $f_2$}};
\draw[thick] (V) -- (Mout);
\node[poin] (Eq) at (1.25,0.5) {$=$};
\node[poin] (MMout) [label=above:{\small $c$}] at (2.5,1.5) {};
\node[poin] (MMin1)  at (2,0) {};
\node[poin] (MMin2)  at (3,0) {};
\node[poin] (MMin11) [label=below:{\small $c_1$}] at (2,-0.5) {};
\node[poin] (MMin21) [label=below:{\small $c_2$}] at (3,-0.5) {};
\node[poin] (VV)  [label=right:{\small $\mu_c$}] at (2.5,0.75) {};
\draw[thick] (MMin1) -- (VV) node[midway,left] {{\small $f_2$}};
\draw[thick] (MMin2) -- (VV) node[midway,right] {{\small $f_1$}};
\draw[thick] (VV) -- (MMout);
\draw[thick] (MMin11) -- (MMin2);
\draw[thick] (MMin21) -- (MMin1);
\end{tikzpicture}
\end{tabular}}
\end{flalign}
for all orthogonal pairs $f_1\perp f_2$ of $\CC$-morphisms.
These are the key relations that encode the $\perp$-commu\-tativity
property of Definition \ref{def:QFTcats} as a {\em{\underline{structure}}} 
into the colored operad $\O_{\ovr{\CC}}$.
\end{enumerate}

As an alternative to this graphical description
of the AQFT operad $\O_{\ovr{\CC}}$ by generators and relations, there 
is the following more algebraic description. See \cite{Benini:2017fnn}
for a proof of the following theorem.
\begin{theo}\label{theo:AQFToperad}
The AQFT operad $\O_{\ovr{\CC}}\in\Op(\Set)$ described
above is (isomorphic to) the colored operad specified by
the following data:
\begin{enumerate}[i)]
\item the set of colors is the set of objects of  $\CC$;
\item the set of operations from $\und{c} = (c_1,\dots,c_n)$ to
$t$ is the quotient set
\begin{flalign}
\O_{\ovr{\CC}}\big(\substack{t \\ \und{c}}\big) = \Big(\Sigma_n \times \prod_{i=1}^n\CC(c_i,t)\Big)\Big/ \sim_{\perp}\quad,
\end{flalign}
where $\Sigma_n$ is the permutation group on $n$ letters, $\CC(c_i,t)$ are
the $\Hom$-sets of $\CC$, and the equivalence relation is defined as follows:
$(\sigma,\und{f}) \sim_\perp (\sigma^\prime, \und{f}^\prime)$ if and only if
$\und{f} = \und{f}^\prime$ and the right permutation
$\sigma\sigma^{\prime\,-1} : \und{f}\sigma^{-1}\to \und{f}\sigma^{\prime\,-1}$
is generated by transpositions of adjacent orthogonal pairs;

\item the compositions $\gamma : \O_{\ovr{\CC}}\big(\substack{t \\ \und{c}} \big)\times\prod_{i=1}^n\O_{\ovr{\CC}}\big(\substack{c_i \\ \und{b}_i} \big) \to$\linebreak $\O_{\ovr{\CC}}\big(\substack{t \\ (\und{b}_1,\dots,\und{b}_n)} \big)$ are
\begin{multline}
\gamma\big([\sigma,\und{f}],\big([\sigma_1,\und{g}_1],\dots,[\sigma_n,\und{g}_n]\big)\big)\\
~=~\big[\sigma(\sigma_1,\dots,\sigma_n), \und{f}(\und{g}_1,\dots,\und{g}_n)\big]\quad,
\end{multline}
where $\sigma(\sigma_1,\dots,\sigma_n) = \sigma\langle k_{\sigma^{-1}(1)},\dots, k_{\sigma^{-1}(n)}\rangle 
~(\sigma_1\oplus\cdots\oplus\sigma_n)$ is the product of the block permutation induced by $\sigma$
and the sum permutation induced by the $\sigma_i$, where $k_i$ is the length of the tuple $\und{b}_i$,
and $\und{f}(\und{g}_1,\dots,\und{g}_n) = \big(f_1\,g_{11},\dots, f_1\,g_{1k_1},\dots, f_n\,g_{n1},\dots,f_n\,g_{nk_n}\big)$
is given by composition in the category $\CC$;

\item the units are $[e,\id_t]\in \O_{\ovr{\CC}}\big(\substack{t \\ t}\big)$, where 
$e\in\Sigma_1$ is the identity permutation and $\id_t:t\to t$ the identity morphism in $\CC$;

\item the permutation actions are $\O_{\ovr{\CC}}(\sigma^\prime): [\sigma,\und{f}]\mapsto$\linebreak $[\sigma\sigma^\prime,\und{f}\sigma^\prime]$.
\end{enumerate}
\end{theo}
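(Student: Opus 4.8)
The plan is to produce an explicit operad isomorphism in two stages: first construct a comparison morphism $\Phi$ out of the presented operad $\O_{\ovr{\CC}}$, then show it is invertible by means of a normal-form argument. Before either stage, I would verify that the data in items i)--v) really does assemble into a colored operad. The one nontrivial point is that the composition map $\gamma$ descends to the quotient by $\sim_\perp$: if one replaces a representative $(\sigma,\und{f})$ by a $\sim_\perp$-equivalent one, the result of $\gamma$ must change only within its $\sim_\perp$-class. This is precisely where the $\circ$-stability axiom of the orthogonality relation enters, since $\gamma$ pre- and post-composes the morphisms $f_i$ with further $\CC$-morphisms, and $\circ$-stability guarantees that an adjacent orthogonal transposition remains an adjacent orthogonal transposition after such composition. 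Associativity, unitality and equivariance of $\gamma$ are the usual block-permutation/sum-permutation bookkeeping of the associative symmetric operad, unaffected by passing to the quotient.

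Because $\O_{\ovr{\CC}}$ is presented by generators and relations, a morphism out of it is the same as a choice of images for the generators $f$, $\mu_c$, $1_c$ that satisfies all the stated relations. I would therefore set $\Phi(f) = [e,f]$ for a $\CC$-morphism $f$, $\Phi(\mu_c) = [e,(\id_c,\id_c)]$, and $\Phi(1_c) = [e,()]$, the unique nullary operation with target $c$. Checking well-definedness amounts to evaluating the composition formula on each relation. For instance, both sides of the compatibility relation evaluate to $[e,(f,f)]$, and the $\perp$-commutativity relation becomes the identity $[e,(f_1,f_2)] = [\tau,(f_1,f_2)]$ where $\tau$ is the adjacent transposition; this holds in the quotient exactly because $f_1\perp f_2$, i.e.\ $\tau$ is a transposition of an adjacent orthogonal pair. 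The functoriality and algebra relations are verified by the same direct computation.

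To invert $\Phi$ I would exhibit a normal form for operations of $\O_{\ovr{\CC}}$. Using the algebra relations (associativity and unitality of $\mu_c$) together with the compatibility relations (which say the generators $f$ are algebra morphisms, hence commute past $\mu$ and $1$), every word in the generators can be rewritten as an iterated multiplication $\mu_t$ of single pushforwards $f_i : c_i \to t$. Such a reduced word is labelled by a permutation $\sigma$ recording the order in which the inputs are multiplied and by the tuple $\und{f} = (f_1,\dots,f_n)$. Sending $[\sigma,\und{f}]$ to the corresponding reduced composite defines a candidate inverse $\Psi$; its well-definedness with respect to $\sim_\perp$ is exactly the content of the $\perp$-commutativity relations, and one checks directly that $\Psi$ is a morphism of operads and that $\Phi\circ\Psi$ and $\Psi\circ\Phi$ are the identities.

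The main obstacle I expect is the injectivity half of the normal-form argument, namely showing that the presentation imposes no identifications beyond $\sim_\perp$. One must argue that the algebra, functoriality and compatibility relations only serve to bring a word into the reduced form above, while the $\perp$-commutativity relations act on reduced words purely by transposing adjacent orthogonal inputs. Thus the equivalence on reduced words generated by all the operadic relations coincides with $\sim_\perp$ rather than some coarser relation. Keeping the permutation bookkeeping consistent with the block/sum-permutation composition formula of item iii), so that the order-recording permutation $\sigma$ is tracked correctly through each rewriting step, is the delicate part of this verification.
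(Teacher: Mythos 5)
Note first that the paper never proves this theorem inline: it defers entirely to \cite{Benini:2017fnn}, so your proposal must be judged on its own merits rather than against a proof printed here. Your architecture is sound and is essentially the expected argument: verify that the data in items i)--v) assembles into a colored operad (the only nontrivial point being that $\gamma$ descends to the $\sim_\perp$-quotient, which is exactly where $\circ$-stability enters, as you say); define $\Phi$ on the generators by $f\mapsto[e,f]$, $\mu_c\mapsto[e,(\id_c,\id_c)]$, $1_c\mapsto[e,()]$ and check the four families of relations (your evaluations are correct, in particular the $\perp$-commutativity relation becomes $[e,(f_1,f_2)]=[\tau,(f_1,f_2)]$, which holds in the quotient precisely when $f_1\perp f_2$); then invert by a normal-form argument.

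One refinement makes the step you flag as the main obstacle essentially disappear. You do not need to show by a confluence-style analysis of the rewriting system that the relations impose no identifications beyond $\sim_\perp$; this is resolved formally by the two-sided inverse you already construct. Once $\Psi$ is well defined on $\sim_\perp$-classes (note that transposing a pair which is adjacent in the multiplication order but not in the chosen bracketing requires conjugating by the associativity relations before applying $\perp$-commutativity) and $\Phi\circ\Psi=\id$ is verified by direct computation, surjectivity of $\Psi$ --- i.e.\ the existence of normal forms, proved by structural induction on words using the functoriality, algebra and compatibility relations --- already yields $\Psi\circ\Phi=\id$: every $x$ in the presented operad equals $\Psi(a)$ for some $a$, whence $\Psi\Phi(x)=\Psi\Phi\Psi(a)=\Psi(a)=x$. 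In particular distinct classes $[\sigma,\und{f}]$ automatically have distinct normal forms, since $\Phi$ separates them. So your proof strategy goes through; the genuinely delicate work is confined to the well-definedness of $\Psi$ and the inductive reduction to normal form, both of which you correctly identify.
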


The relevance of the AQFT operad $\O_{\ovr{\CC}}\in\Op(\Set)$ is that its category
of algebras is precisely the category $\QFT(\ovr{\CC})$ of AQFTs defined in Definition
\ref{def:QFTcats}. But what are algebras over operads?
Loosely speaking, an $\MM$-valued algebra over a colored operad $\O\in\Op(\Set)$
is something like a `representation' of the operations described by $\O$ as $\MM$-morphisms
between a colored family of objects $A_c \in\MM$, for all colors $c$. The following 
picture visualizes this basic idea:
\begin{flalign}
\xymatrix{
\parbox{1.6cm}{\begin{tikzpicture}
\draw[fill=gray!10] (0,0.2) -- (0.5,-0.2) -- (-0.5,-0.2) -- (0,0.2);
\draw (0,0) node {{\footnotesize $o$}};
\draw[thick] (0,0.2) -- (0,0.5);
\draw (0,0.7) node {{\footnotesize $c^\prime$}};
\draw[thick] (-0.4,-0.2) -- (-0.4,-0.5);
\draw (-0.4,-0.65) node {{\footnotesize $c_1$}};
\draw[thick] (0.4,-0.2) -- (0.4,-0.5);
\draw (0.4,-0.65) node {{\footnotesize $c_n$}};
\draw (0,-0.5) node {{\footnotesize $\cdots$}};
\end{tikzpicture}}
\ar@{~>}[r]^-{\text{represent}}&
~\quad ~\Big(\bigotimes\limits_{i=1}^n A_{c_i} \ar[r]^-{A(o)} ~& ~ A_{c^\prime}\Big)
}
\end{flalign}
Observe how the input and output colors match on both sides of this picture.
There are of course certain compatibility conditions to be fulfilled,
namely these $\MM$-morphisms must be compatible with operadic compositions,
operadic units and the permutation actions. We refer to e.g.\ \cite{yau2016colored} and \cite{Benini:2017fnn}
for a detailed definition of algebras over colored operads.
The main theorem justifying the relevance of the colored operads $\O_{\ovr{\CC}}\in\Op(\Set)$
is as follows. See \cite{Benini:2017fnn} for a proof.
\begin{theo}\label{theo:AQFTalgebraoperad}
For every orthogonal category $\ovr{\CC}$, the category
$\Alg_{\O_{\ovr{\CC}}}(\MM)$ of algebras over the AQFT operad $\O_{\ovr{\CC}}\in\Op(\Set)$
is (isomorphic to) the category $\QFT(\ovr{\CC})$ of $\MM$-valued AQFTs on $\ovr{\CC}$ from
Definition \ref{def:QFTcats}.
\end{theo}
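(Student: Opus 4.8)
The plan is to construct mutually inverse functors $\Phi : \Alg_{\O_{\ovr{\CC}}}(\MM) \to \QFT(\ovr{\CC})$ and $\Psi : \QFT(\ovr{\CC}) \to \Alg_{\O_{\ovr{\CC}}}(\MM)$, exploiting the presentation of $\O_{\ovr{\CC}}$ by generators and relations recalled above. The guiding principle is that, for any colored operad given by generators and relations, an $\MM$-valued algebra amounts to a family of objects $\{A_c\}_{c\in\CC}$ in $\MM$ together with one $\MM$-morphism for each generating operation, subject only to the listed relations holding as identities of composite $\MM$-morphisms. I would therefore read off this data for $\O_{\ovr{\CC}}$ directly from its three families of generators and match it, relation by relation, with the defining structure of a $\perp$-commutative functor.

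To define $\Phi$, given an $\O_{\ovr{\CC}}$-algebra $A$ I set $\AAA(c) := A_c$ and equip it with the multiplication $\mu_c$ and unit $1_c$ obtained as the images of the two algebra generators; the algebra relations are exactly associativity and unitality, so $\AAA(c) \in \Alg_{\As}(\MM)$. The image of the morphism generator $f$ defines $\AAA(f) : \AAA(c) \to \AAA(c')$, the functoriality relations turn $c\mapsto\AAA(c)$, $f\mapsto\AAA(f)$ into a functor valued in $\MM$, and the compatibility relations say precisely that each $\AAA(f)$ preserves units and multiplications, i.e.\ is a morphism in $\Alg_{\As}(\MM)$. Finally, the $\perp$-commutativity relations are verbatim the $\perp$-commutativity property of Definition \ref{def:QFTcats}, so $\AAA \in \QFT(\ovr{\CC})$. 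On morphisms, a map of $\O_{\ovr{\CC}}$-algebras is a family $\phi_c : A_c \to A'_c$ compatible with every operation; compatibility with the algebra generators makes each $\phi_c$ an algebra homomorphism and compatibility with the morphism generators makes $\phi$ natural, which is exactly a morphism in $\QFT(\ovr{\CC})$.

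For the inverse $\Psi$ I would instead use the explicit description of operations from Theorem \ref{theo:AQFToperad}: an operation $[\sigma,\und{f}] \in \O_{\ovr{\CC}}(\substack{t\\\und{c}})$ with $\und{f} = (f_1,\dots,f_n)$ is generated by the unary operations $f_i : c_i \to t$ followed by an $n$-fold multiplication in the target, reordered according to $\sigma$. For $\AAA\in\QFT(\ovr{\CC})$ I would set $A_c := \AAA(c)$ and define the action of $[\sigma,\und{f}]$ as the composite of $\AAA(f_1)\otimes\cdots\otimes\AAA(f_n)$ with the iterated multiplication of $\AAA(t)$, permuted by $\sigma$. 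Associativity and unitality of $\AAA(t)$ together with functoriality of $\AAA$ then yield compatibility with operadic composition and units, and the permutation actions are respected by construction.

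The main obstacle is the well-definedness of $\Psi$: since $[\sigma,\und{f}]$ is an equivalence class under $\sim_\perp$, the assigned $\MM$-morphism must not depend on the chosen representative. By Theorem \ref{theo:AQFToperad} the relation $\sim_\perp$ is generated by transpositions of adjacent orthogonal pairs, so it suffices to verify that interchanging two orthogonal factors $f_1\perp f_2$ in the iterated product leaves the composite unchanged. This is precisely the $\perp$-commutativity property, which equates $\mu_t\circ(\AAA(f_1)\otimes\AAA(f_2))$ with $\mu_t^{\op}\circ(\AAA(f_1)\otimes\AAA(f_2))$; an induction over the generating transpositions then handles the general case. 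Granting this, $\Phi$ and $\Psi$ agree on the underlying algebra-valued functor and are therefore mutually inverse on objects, while the morphism dictionary of the second paragraph shows they are inverse functors. This establishes the claimed isomorphism $\Alg_{\O_{\ovr{\CC}}}(\MM)\cong\QFT(\ovr{\CC})$ of categories.
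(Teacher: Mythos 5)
Your proposal is correct and follows essentially the same route as the paper: the paper itself defers the full proof to \cite{Benini:2017fnn}, but the remark immediately following Theorem \ref{theo:AQFTalgebraoperad} sketches exactly your mechanism, namely that an operation $[\sigma,\und{f}]$ acts on a $\perp$-commutative functor by pushing forward along the $f_i$ and multiplying in the order prescribed by $\sigma$, with well-definedness on $\sim_\perp$-equivalence classes guaranteed by the $\perp$-commutativity property applied to adjacent orthogonal transpositions. Your two-functor formulation (reading off generators and relations for one direction, using the explicit description of Theorem \ref{theo:AQFToperad} for the other) is the natural way to make that sketch precise, and it is the argument carried out in the cited reference.
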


\begin{rem}
It is instructive to have a closer look at the action of the operations
$[\sigma,\und{f}]\in \O_{\ovr{\CC}}\big(\substack{t \\ \und{c} }\big)$
of the AQFT operad on a $\perp$-commutative
functor $\AAA :\CC\to \Alg_{\As}(\MM)$. One observes that the corresponding
$\MM$-morphism
\begin{subequations}
\begin{flalign}
\AAA\big([\sigma,\und{f}]\big) \,:\, \bigotimes\limits_{i=1}^n \AAA(c_i)~\longrightarrow ~\AAA(t)
\end{flalign}
is concretely given by
\begin{multline}
\AAA\big([\sigma,\und{f}]\big) \big(a_1\otimes\cdots \otimes a_n\big) \\
=\AAA(f_{\sigma^{-1}(1)})\big(a_{\sigma^{-1}(1)}\big)\cdots \AAA(f_{\sigma^{-1}(n)})\big(a_{\sigma^{-1}(n)}\big)\quad.
\end{multline}
\end{subequations}
In words, this means that one first pushes forward each observable $a_i\in\AAA(c_i)$
along the space-time embedding $f_i :c_i\to t$ and then multiplies the
resulting observables in $\AAA(t)$ according to the order prescribed by the permutation $\sigma$.
Note that this is well-defined on the equivalence classes of operations (cf.\ Theorem \ref{theo:AQFToperad})
because $\AAA$ is a $\perp$-commutative functor, i.e.\ one is allowed to interchange the order
of multiplication for observables obtained by pushforward along orthogonal pairs of $\CC$-morphisms.
\end{rem}


\subsection{\label{subsec:universal}Universal constructions}
The result of Theorem \ref{theo:AQFTalgebraoperad}
that AQFTs are algebras over a colored operad $\O_{\ovr{\CC}}\in\Op(\Set)$
is very useful for studying universal constructions in AQFT.
For our first observation, let us recall
that the target category $\MM$ is by assumption
bicomplete.
\begin{propo}
For every orthogonal category $\ovr{\CC}$,
the category $\QFT(\ovr{\CC})$ of $\MM$-valued AQFTs on $\ovr{\CC}$
admits all small limits and colimits. 
\end{propo}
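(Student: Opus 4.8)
The plan is to invoke Theorem \ref{theo:AQFTalgebraoperad}, which provides an isomorphism $\QFT(\ovr{\CC}) \cong \Alg_{\O_{\ovr{\CC}}}(\MM)$ with the category of algebras over the $\Set$-valued colored operad $\O_{\ovr{\CC}}$. It therefore suffices to establish the following general fact: for any $\Set$-valued colored operad $\O$ and any bicomplete closed symmetric monoidal category $\MM$, the category $\Alg_\O(\MM)$ is bicomplete. The whole argument rests on the free-forgetful adjunction $F \dashv U$, where $U : \Alg_\O(\MM) \to \MM^{\mathrm{Obj}\,\CC}$ forgets down to the product of copies of $\MM$ indexed by the colors (the objects of $\CC$) and $F$ is the free $\O$-algebra functor. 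This adjunction exhibits $\Alg_\O(\MM)$ as the Eilenberg-Moore category of the monad $T := U F$ on $\MM^{\mathrm{Obj}\,\CC}$.

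For limits there is nothing subtle: the forgetful functor $U$ creates all small limits, as the algebra structure maps of a prospective limit are assembled colorwise from those of the diagram. Since $\MM$, and hence the product category $\MM^{\mathrm{Obj}\,\CC}$, is complete, it follows at once that $\Alg_\O(\MM)$ admits all small limits, computed colorwise in $\MM$.

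The colimits carry the real content. First I would record the explicit form of the monad: writing $S\cdot Y := \coprod_{s\in S} Y$ for the copower of $Y\in\MM$ by a set $S$, and $(-)_{\Sigma_n}$ for the coinvariants (the colimit of the $\Sigma_n$-action), one has
\[
T(X)_t \;=\; \coprod_{n\geq 0}\Big(\coprod_{\und{c}\,\in\,(\mathrm{Obj}\,\CC)^{n}} \O\big(\substack{t\\\und{c}}\big)\cdot\big(X_{c_1}\otimes\cdots\otimes X_{c_n}\big)\Big)_{\Sigma_n} .
\]
Because $\MM$ is closed, each functor $Y\otimes(-)$ is a left adjoint and hence preserves all colimits, so the $n$-fold tensor product preserves colimits in each variable separately; since reflexive coequalizers are sifted and sifted colimits commute with finite products, the $n$-fold tensor functor therefore preserves reflexive coequalizers. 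As coproducts, copowers and $\Sigma_n$-coinvariants likewise preserve them, it follows that $T$ preserves reflexive coequalizers. This is precisely the hypothesis needed for the classical cocompleteness theorem for categories of algebras over a monad (cf.\ \cite{yau2016colored}).

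Granting that, the construction of arbitrary colimits is formal. Reflexive coequalizers in $\Alg_\O(\MM)$ are created by $U$, since $T$ preserves them. The free functor $F$, being a left adjoint, preserves coproducts, so coproducts of free algebras exist and are themselves free. Every $\O$-algebra $A$ arises as the coequalizer of a canonical reflexive pair of free algebras, namely the bar presentation built from $FUFU(A)$ and $FU(A)$, so an arbitrary coproduct of algebras may be realized as a reflexive coequalizer of coproducts of free algebras and hence exists. Finally, small coproducts together with coequalizers furnish all small colimits, completing the argument. The single genuine obstacle is the preservation of reflexive coequalizers by $T$, i.e.\ the interchange of finite tensor powers with reflexive coequalizers in $\MM$; this is exactly where the closed and bicomplete hypotheses on $\MM$ enter, and once it is secured everything else is a routine assembly from the free presentation.
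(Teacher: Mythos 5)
Your proposal is correct and takes essentially the same approach as the paper: both reduce the statement via Theorem \ref{theo:AQFTalgebraoperad} to the general fact that the category of algebras over a colored operad in a bicomplete closed symmetric monoidal category is itself bicomplete. The only difference is that the paper cites this fact (referring to \cite{fresse2017homotopy} for the uncolored case), whereas you supply its standard monadic proof --- limits created by the forgetful functor, and colimits built from reflexive coequalizers using that the free-algebra monad preserves them --- which is essentially the argument behind the cited reference.
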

\begin{proof}
By Theorem \ref{theo:AQFTalgebraoperad}, we have that 
$\QFT(\ovr{\CC})\cong \Alg_{\O_{\ovr{\CC}}}(\MM)$ is the category
of algebras over a colored operad. The latter category is known to be bicomplete
whenever the target category $\MM$ is bicomplete.
See e.g.\ \cite[Proposition 1.3.6]{fresse2017homotopy} for a direct proof in the case of 
uncolored operads, which generalizes easily to colored operad.
\end{proof}

Using this result, it is possible to build new AQFTs on $\ovr{\CC}$
by forming limits or colimits of diagrams of AQFTs on the same orthogonal category 
$\ovr{\CC}$. We expect that such constructions might be relevant for formalizing
the algebraic adiabatic limit in perturbative AQFTs (see e.g.\ \cite{Brunetti:2009qc}), however we did not look
into this in any detail. 

Let us now consider a more interesting class of universal constructions
that relate AQFTs on different orthogonal categories. Let us recall
that such constructions are relevant e.g.\ for discussing descent,
which involves AQFTs on both the space-time category $\ovr{\Loc}$
and the category $\ovr{\Locc}$ of space-times whose 
underlying manifold is diffeomorphic to $\bbR^m$. 
The key observation that allows us to develop such constructions
is that the assignment $\ovr{\CC}\mapsto \O_{\ovr{\CC}}$ 
of the AQFT operad to an orthogonal category is functorial $\O_{(-)}:\OCat\to \Op(\Set)$.
This means that given any orthogonal functor $F : \ovr{\CC}\to \ovr{\DD}$, 
we obtain a colored operad morphism $\O_F : \O_{\ovr{\CC}}\to \O_{\ovr{\DD}}$
and hence a pullback functor $\O_F^{\ast} : \Alg_{\O_{\ovr{\DD}}}(\MM) \to \Alg_{\O_{\ovr{\CC}}}(\MM)$
between the corresponding categories of algebras.
Under the natural identification $\QFT(-)\cong \Alg_{\O_{(-)}}(\MM)$ given in 
Theorem \ref{theo:AQFTalgebraoperad}, the pullback functor $\O_F^{\ast} $
gets identified with the restriction of the pullback functor
$F^\ast := (-)\circ F : \Alg_{\As}(\MM)^\DD \to \Alg_{\As}(\MM)^\CC$ 
to the full subcategories of AQFTs from Definition \ref{def:QFTcats}. This functor
always admits a left adjoint.
\begin{theo}\label{theo:algadjunction}
For every orthogonal functor $F : \ovr{\CC}\to \ovr{\DD}$, there exists
an adjunction
\begin{flalign}
\xymatrix{
F_! \,:\, \QFT(\ovr{\CC}) ~\ar@<0.5ex>[r] & \ar@<0.5ex>[l]~ \QFT(\ovr{\DD}) \,:\, F^\ast
}\quad,
\end{flalign}
where the right adjoint $F^\ast$ is the restriction of the pullback functor on functor categories
to the subcategories of AQFTs.
\end{theo}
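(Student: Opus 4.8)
The plan is to pass to the operadic description and realize the sought left adjoint as an operadic induction functor. By Theorem~\ref{theo:AQFTalgebraoperad} there are natural isomorphisms $\QFT(\ovr{\CC})\cong\Alg_{\O_{\ovr{\CC}}}(\MM)$ and $\QFT(\ovr{\DD})\cong\Alg_{\O_{\ovr{\DD}}}(\MM)$, and by functoriality of the assignment $\ovr{\CC}\mapsto\O_{\ovr{\CC}}$ the orthogonal functor $F$ induces a morphism of colored operads $\O_F:\O_{\ovr{\CC}}\to\O_{\ovr{\DD}}$ in $\Op(\Set)$. Under these isomorphisms $F^\ast$ is identified with the restriction functor $\O_F^\ast:\Alg_{\O_{\ovr{\DD}}}(\MM)\to\Alg_{\O_{\ovr{\CC}}}(\MM)$ along $\O_F$; in particular $F^\ast$ really does land in $\QFT(\ovr{\CC})$, which reflects the orthogonality-preservation built into $F$. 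It therefore suffices to construct a left adjoint to the restriction functor along any morphism of $\Set$-valued colored operads and to set $F_!:=\O_{F\,!}$.

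To produce such a left adjoint I would invoke the adjoint lifting theorem for monadic functors. Write $F_0:\mathrm{Obj}\,\CC\to\mathrm{Obj}\,\DD$ for the map on colors underlying $\O_F$, and recall that the forgetful functors $U_{\ovr{\CC}}:\Alg_{\O_{\ovr{\CC}}}(\MM)\to\MM^{\mathrm{Obj}\,\CC}$ and $U_{\ovr{\DD}}:\Alg_{\O_{\ovr{\DD}}}(\MM)\to\MM^{\mathrm{Obj}\,\DD}$ are monadic, with monads $T_{\ovr{\CC}}$ and $T_{\ovr{\DD}}$ built from the operads in the usual way. These fit into a commuting square $U_{\ovr{\CC}}\circ\O_F^\ast=F_0^\ast\circ U_{\ovr{\DD}}$, where $F_0^\ast:\MM^{\mathrm{Obj}\,\DD}\to\MM^{\mathrm{Obj}\,\CC}$ is restriction along $F_0$. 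Since the colors form discrete sets and $\MM$ is cocomplete, $F_0^\ast$ has a left adjoint $(F_0)_!$ computed fiberwise by coproducts over the fibers of $F_0$. The adjoint lifting theorem then yields a left adjoint to the top functor $\O_F^\ast$, the only extra hypothesis being that $\Alg_{\O_{\ovr{\DD}}}(\MM)$ admit reflexive coequalizers; this is supplied by the cocompleteness established in the preceding Proposition.

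Concretely, the value $F_!(\AAA)$ on an $\O_{\ovr{\CC}}$-algebra $\AAA$ is the reflexive coequalizer in $\Alg_{\O_{\ovr{\DD}}}(\MM)$
\[
\xymatrix@C=3em{
T_{\ovr{\DD}}\,(F_0)_!\,T_{\ovr{\CC}}(\AAA) \ar@<0.5ex>[r]\ar@<-0.5ex>[r] & T_{\ovr{\DD}}\,(F_0)_!\,(\AAA) \ar[r] & F_!(\AAA)
}
\]
where $T_{\ovr{\CC}}$ is applied to the underlying family of $\AAA$, one parallel arrow uses the $\O_{\ovr{\CC}}$-algebra structure of $\AAA$, and the other uses $\O_F$ together with the multiplication of the monad $T_{\ovr{\DD}}$. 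This exhibits $F_!$ as the operadic left Kan extension $\Lan_{\O_F}$, which is exactly the mechanism that turns Fredenhagen-type universal constructions into bona fide AQFTs and so connects back to the descent discussion of Section~\ref{sec:background}.

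The genuinely non-formal input, and hence the step I expect to be the main obstacle, is the monadicity of algebras over colored operads: one must verify that the free-$\O_{\ovr{\CC}}$-algebra functor exists and that $U_{\ovr{\CC}}$ creates the reflexive coequalizers appearing above, so that Beck's monadicity theorem applies. This is standard for algebras over colored operads valued in a cocomplete closed symmetric monoidal category (see e.g.\ \cite{yau2016colored,fresse2017homotopy}) and is essentially the same structural fact already underlying the preceding Proposition. Once monadicity and the fiberwise color-change adjunction $(F_0)_!\dashv F_0^\ast$ are in hand, the existence of $F_!\dashv F^\ast$ is a formal consequence; the remaining verifications (triangle identities, and compatibility with the isomorphisms of Theorem~\ref{theo:AQFTalgebraoperad}) are routine bookkeeping.
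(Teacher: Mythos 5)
Your proposal is correct, and its skeleton coincides with the paper's proof: both pass through Theorem~\ref{theo:AQFTalgebraoperad} to identify $\QFT(\ovr{\CC})\cong\Alg_{\O_{\ovr{\CC}}}(\MM)$ and $\QFT(\ovr{\DD})\cong\Alg_{\O_{\ovr{\DD}}}(\MM)$, use functoriality of $\ovr{\CC}\mapsto\O_{\ovr{\CC}}$ to obtain the operad morphism $\O_F$, and then realize $F_!$ as the left adjoint (operadic left Kan extension) of the pullback functor $\O_F^\ast$. The difference lies in how that left adjoint is obtained: the paper simply cites its existence from \cite{Benini:2017fnn}, whereas you prove it, via monadicity of the forgetful functors to the product categories $\MM^{\mathrm{Obj}\,\CC}$ and $\MM^{\mathrm{Obj}\,\DD}$, the fiberwise color-change adjunction $(F_0)_!\dashv F_0^\ast$, and the adjoint lifting theorem, with the needed reflexive coequalizers in $\Alg_{\O_{\ovr{\DD}}}(\MM)$ supplied by the cocompleteness result of the preceding Proposition. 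Your hypotheses do hold in the paper's setting -- $\MM$ is bicomplete and closed symmetric monoidal, so free-algebra monads exist and preserve reflexive coequalizers, making the forgetful functors monadic and Beck's/Johnstone's lifting argument applicable -- and your coequalizer presentation of $F_!(\AAA)$ is the standard explicit model of the operadic left Kan extension. What your route buys is self-containedness: the black box the paper outsources to \cite{Benini:2017fnn} is opened and verified. What the paper's route buys is brevity and generality by reference, since the change-of-operad adjunction is established there once and for all for arbitrary morphisms of colored operads, independently of the monadicity technology you invoke.
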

\begin{proof}
The operadic pullback functor $\O_F^{\ast} : \Alg_{\O_{\ovr{\DD}}}(\MM) \to \Alg_{\O_{\ovr{\CC}}}(\MM)$
admits a left adjoint given by operadic left Kan extension, see e.g.\ \cite{Benini:2017fnn} for a brief review.
The natural identification of Theorem \ref{theo:AQFTalgebraoperad} then proves our claim. 
\end{proof}

Using the adjunctions from Theorem \ref{theo:algadjunction},
one can develop and study interesting universal constructions that
relate AQFTs on different orthogonal categories. Before 
looking at concrete examples inspired by physics, let us first 
note the following structural result for the full subcategory
$\QFT(\ovr{\CC}) \subseteq \Alg_{\As}(\MM)^\CC$ of AQFTs
from Definition \ref{def:QFTcats}. Given any orthogonal category
$\ovr{\CC} = (\CC,\perp)$, we form the orthogonal category
$(\CC,\emptyset)$ with the trivial (empty) orthogonality relation
and observe that the identity functor defines an orthogonal
functor $p_{\ovr{\CC}} :=\id_\CC : (\CC,\emptyset) \to \ovr{\CC}$. Note that
$\QFT(\CC,\emptyset) = \Alg_{\As}(\MM)^\CC$ is the functor category.
The following result is immediate.
\begin{propo}\label{propo:QFTreflective}
For every orthogonal category $\ovr{\CC}$, 
the adjunction
\begin{flalign}
\xymatrix{
{p_{\ovr{\CC}}}_! \,:\,  \Alg_{\As}(\MM)^\CC ~\ar@<0.5ex>[r] & \ar@<0.5ex>[l]~ \QFT(\ovr{\CC}) \,:\, {p_{\ovr{\CC}}}^\ast
}
\end{flalign}
corresponding to the canonical orthogonal functor $p_{\ovr{\CC}}=\id_\CC : (\CC,\emptyset) \to \ovr{\CC}$
exhibits $\QFT(\ovr{\CC})$ as a full reflective subcategory of the functor category
$ \Alg_{\As}(\MM)^\CC$.
\end{propo}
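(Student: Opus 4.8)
The plan is to obtain the adjunction directly from Theorem~\ref{theo:algadjunction} and then to recognize the right adjoint as the full-subcategory inclusion. First I would apply Theorem~\ref{theo:algadjunction} to the orthogonal functor $p_{\ovr{\CC}} = \id_\CC : (\CC,\emptyset)\to\ovr{\CC}$. This is a legitimate orthogonal functor: since the source carries the empty orthogonality relation, the orthogonality-preservation condition $p_{\ovr{\CC}}(f_1)\perp p_{\ovr{\CC}}(f_2)$ has no hypotheses to satisfy and so holds vacuously. Theorem~\ref{theo:algadjunction} then yields an adjunction ${p_{\ovr{\CC}}}_! \dashv {p_{\ovr{\CC}}}^\ast$ between $\QFT(\CC,\emptyset)$ and $\QFT(\ovr{\CC})$, and since the empty orthogonality relation imposes no $\perp$-commutativity constraint we have $\QFT(\CC,\emptyset) = \Alg_{\As}(\MM)^\CC$, as noted above.

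The key step is to identify the right adjoint ${p_{\ovr{\CC}}}^\ast$ with the inclusion of $\QFT(\ovr{\CC})$ into $\Alg_{\As}(\MM)^\CC$. By Theorem~\ref{theo:algadjunction}, ${p_{\ovr{\CC}}}^\ast$ is the restriction to the AQFT subcategories of the pullback functor $(-)\circ p_{\ovr{\CC}}$ on functor categories; but the underlying functor of $p_{\ovr{\CC}}$ is $\id_\CC$, so this pullback is literally the identity functor on $\Alg_{\As}(\MM)^\CC$. Restricting it to the full subcategory $\QFT(\ovr{\CC})\subseteq\Alg_{\As}(\MM)^\CC$ and landing in $\Alg_{\As}(\MM)^\CC$ therefore returns exactly the full-subcategory inclusion.

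To conclude reflectivity it then suffices to invoke that $\QFT(\ovr{\CC})$ is a full subcategory of $\Alg_{\As}(\MM)^\CC$ by Definition~\ref{def:QFTcats}: a full-subcategory inclusion is automatically fully faithful, so the counit of the adjunction ${p_{\ovr{\CC}}}_!\dashv{p_{\ovr{\CC}}}^\ast$ is a natural isomorphism, which is precisely the assertion that $\QFT(\ovr{\CC})$ is a full reflective subcategory with reflector ${p_{\ovr{\CC}}}_!$. The only point requiring any care — and hence the \emph{main obstacle}, though a mild one — is the bookkeeping of the previous paragraph identifying ${p_{\ovr{\CC}}}^\ast$ with the inclusion rather than with some other functor; everything else is a formal consequence of Theorem~\ref{theo:algadjunction}. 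If one wanted an intrinsic description of the reflector, one could further note that ${p_{\ovr{\CC}}}_!$ is the operadic left Kan extension along the quotient operad morphism $\O_{p_{\ovr{\CC}}} : \O_{(\CC,\emptyset)}\to\O_{\ovr{\CC}}$ imposing the $\perp$-commutativity relations, but this explicit form is not needed for the statement.
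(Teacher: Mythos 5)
Your argument is correct and is exactly the unpacking of what the paper leaves implicit: the paper offers no separate proof, declaring the result ``immediate'' from the observations that $p_{\ovr{\CC}}=\id_\CC:(\CC,\emptyset)\to\ovr{\CC}$ is an orthogonal functor, that $\QFT(\CC,\emptyset)=\Alg_{\As}(\MM)^\CC$, and from the adjunction of Theorem~\ref{theo:algadjunction}. Your identification of ${p_{\ovr{\CC}}}^\ast$ with the full-subcategory inclusion (pullback along the identity) and the conclusion of reflectivity are precisely the intended steps, so you have taken the same route as the paper.
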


A corollary of this result is that the left adjoint functor $F_!$ 
from Theorem \ref{theo:algadjunction} can be related to the ordinary left Kan extension
$\mathrm{Lan}_F : \Alg_{\As}(\MM)^\CC\to \Alg_{\As}(\MM)^\DD$ of $\Alg_{\As}(\MM)$-valued functors.
\begin{cor}
For every orthogonal functor $F : \ovr{\CC}\to \ovr{\DD}$, there exists
a natural isomorphism of functors
\begin{flalign}
F_!\,\cong\, {p_{\ovr{\DD}}}_!\circ \mathrm{Lan}_F\circ {p_{\ovr{\CC}}}^\ast ~:~ \QFT(\ovr{\CC})\longrightarrow\QFT(\ovr{\DD})\quad,
\end{flalign}
where $F_! $ is the left adjoint from Theorem \ref{theo:algadjunction},
$\mathrm{Lan}_F $ the ordinary categorical left Kan extension and $p_!\dashv p^\ast$ the adjunction from
Proposition \ref{propo:QFTreflective}.
\end{cor}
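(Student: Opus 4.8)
The plan is to identify both $F_!$ and the composite ${p_{\ovr{\DD}}}_!\circ\mathrm{Lan}_F\circ{p_{\ovr{\CC}}}^\ast$ as \emph{left adjoints of one and the same functor}, namely the restricted pullback $F^\ast:\QFT(\ovr{\DD})\to\QFT(\ovr{\CC})$ of Theorem \ref{theo:algadjunction}, and then to invoke uniqueness of adjoints. For $F_!$ this identification is precisely the content of Theorem \ref{theo:algadjunction}, so all the work lies in exhibiting the composite as a left adjoint to $F^\ast$.

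To do this I would compute, for every $\AAA\in\QFT(\ovr{\CC})$ and $\BBB\in\QFT(\ovr{\DD})$, a chain of natural bijections between the relevant hom-sets. Writing for brevity $i_{\ovr{\CC}}:={p_{\ovr{\CC}}}^\ast$ and $i_{\ovr{\DD}}:={p_{\ovr{\DD}}}^\ast$ for the fully faithful inclusions of AQFTs into the functor categories, and $r_{\ovr{\DD}}:={p_{\ovr{\DD}}}_!$ for the reflector, I would use first the reflective adjunction ${p_{\ovr{\DD}}}_!\dashv{p_{\ovr{\DD}}}^\ast$ of Proposition \ref{propo:QFTreflective}, and then the standard adjunction $\mathrm{Lan}_F\dashv (-)\circ F$ between the full functor categories (which exists because $\Alg_{\As}(\MM)$ is bicomplete), to obtain
\[
\QFT(\ovr{\DD})\big(r_{\ovr{\DD}}\,\mathrm{Lan}_F\,i_{\ovr{\CC}}\AAA,\BBB\big)
\;\cong\; \Alg_{\As}(\MM)^{\CC}\big(i_{\ovr{\CC}}\AAA,\;(i_{\ovr{\DD}}\BBB)\circ F\big)\quad.
\]

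The crucial step — and the only one that is not pure formality — is to recognise the functor $(i_{\ovr{\DD}}\BBB)\circ F$ on the right-hand side. Since $F:\ovr{\CC}\to\ovr{\DD}$ is an \emph{orthogonal} functor and $\BBB$ is $\perp$-commutative, the pullback $\BBB\circ F$ is again $\perp$-commutative, i.e.\ it lies in the full subcategory $\QFT(\ovr{\CC})$ and equals $i_{\ovr{\CC}}(F^\ast\BBB)$ for the restricted pullback $F^\ast\BBB=\BBB\circ F$ of Theorem \ref{theo:algadjunction}. This is exactly the observation that makes $F^\ast$ well-defined on the AQFT subcategories in the first place. Feeding this back and using that $i_{\ovr{\CC}}$ is fully faithful yields
\[
\Alg_{\As}(\MM)^{\CC}\big(i_{\ovr{\CC}}\AAA,\;i_{\ovr{\CC}}(F^\ast\BBB)\big)\;\cong\; \QFT(\ovr{\CC})\big(\AAA,F^\ast\BBB\big)\quad,
\]
so that the composite is left adjoint to $F^\ast$, naturally in $\AAA$ and $\BBB$.

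Finally I would conclude by uniqueness of left adjoints: since both $F_!$ and ${p_{\ovr{\DD}}}_!\circ\mathrm{Lan}_F\circ{p_{\ovr{\CC}}}^\ast$ are left adjoint to the same functor $F^\ast$, they are canonically naturally isomorphic, which is the claim. The only point requiring genuine care is the middle identification $(i_{\ovr{\DD}}\BBB)\circ F = i_{\ovr{\CC}}(F^\ast\BBB)$, since it is there that orthogonality of $F$ — rather than mere functoriality — enters; everything else is a formal manipulation of the two adjunctions together with the fact that $i_{\ovr{\CC}}$ is fully faithful.
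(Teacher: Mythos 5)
Your proposal is correct, and it supplies a complete argument for a statement that the paper itself leaves unproved (the corollary is presented as an immediate consequence of Proposition \ref{propo:QFTreflective}). Your route differs slightly in organization from the argument the paper is gesturing at. You exhibit ${p_{\ovr{\DD}}}_!\circ\mathrm{Lan}_F\circ{p_{\ovr{\CC}}}^\ast$ directly as a left adjoint of the restricted pullback $F^\ast$ via a chain of natural hom-set bijections, using the reflective adjunction on the $\ovr{\DD}$ side, the ordinary Kan extension adjunction, and fullness of $\QFT(\ovr{\CC})\subseteq\Alg_{\As}(\MM)^{\CC}$, and you then conclude by uniqueness of left adjoints. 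The implicit argument in the paper instead exploits that $F\circ p_{\ovr{\CC}} = p_{\ovr{\DD}}\circ F$ as orthogonal functors $(\CC,\emptyset)\to\ovr{\DD}$, where on the left-hand composite $F$ is regarded as an orthogonal functor $(\CC,\emptyset)\to(\DD,\emptyset)$ whose induced left adjoint is exactly $\mathrm{Lan}_F$ (since $\QFT(\CC,\emptyset)=\Alg_{\As}(\MM)^{\CC}$); compositionality of left adjoints then gives $F_!\circ{p_{\ovr{\CC}}}_!\cong{p_{\ovr{\DD}}}_!\circ\mathrm{Lan}_F$, and precomposition with ${p_{\ovr{\CC}}}^\ast$ together with the counit isomorphism ${p_{\ovr{\CC}}}_!\,{p_{\ovr{\CC}}}^\ast\cong\id$ — which is precisely what reflectivity in Proposition \ref{propo:QFTreflective} provides — yields the claim. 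The two arguments are equivalent in content, and you correctly isolate the only non-formal ingredient: the identification $(i_{\ovr{\DD}}\BBB)\circ F = i_{\ovr{\CC}}(F^\ast\BBB)$, i.e.\ that orthogonality of $F$ forces the pullback of a $\perp_{\DD}$-commutative functor to be $\perp_{\CC}$-commutative, is the same observation that makes $F^\ast$ (and hence Theorem \ref{theo:algadjunction}) well defined and that underlies the counit isomorphism in the compositional route. What your version buys is that it never needs to match the operadic left adjoint over $(\CC,\emptyset)$ against the ordinary $\mathrm{Lan}_F$; what the compositional version buys is brevity and a transparent explanation of why the corollary is attributed to Proposition \ref{propo:QFTreflective}.
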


Our first concrete example for an adjunction as in Theorem \ref{theo:algadjunction}
is motivated physically by our goal to introduce a descent condition for AQFTs.
Let $\ovr{\DD}$ be any orthogonal category and $\ovr{\CC}\subseteq\ovr{\DD}$ 
a {\em full orthogonal subcategory}, i.e.\ $\CC\subseteq\DD$ is a full subcategory
such that $f_1\perp_\CC f_2$ if and only if $f_1\perp_\DD f_2$. 
For instance, $\ovr{\DD}$ may be the space-time category $\ovr{\Loc}$ from Example
\ref{ex:LocLocc} and $\ovr{\CC}$ the category $\ovr{\Locc}$ of space-times
whose underlying manifold is diffeomorphic to $\bbR^m$. In general,
one should interpret $\ovr{\DD}$ as a category of `all space-times'
and $\ovr{\CC}\subseteq \ovr{\DD}$ as a full subcategory of `nice space-times'.
Embedding the full subcategory defines an orthogonal functor that we shall denote
by $j: \ovr{\CC}\to \ovr{\DD}$. As a consequence of Theorem
\ref{theo:algadjunction}, we obtain an adjunction
\begin{flalign}\label{eqn:jadjunction}
\xymatrix{
j_! \,:\, \QFT(\ovr{\CC}) ~\ar@<0.5ex>[r] & \ar@<0.5ex>[l]~ \QFT(\ovr{\DD}) \,:\, j^\ast
}
\end{flalign}
between AQFTs on `nice space-times' and AQFTs on `all space-times'.
The right adjoint functor $j^\ast$ should be interpreted as a restriction
functor that restricts an AQFT $\AAA\in\QFT(\ovr{\DD})$ that is defined
on all of $\ovr{\DD}$ to an AQFT $j^\ast \AAA \in\QFT(\ovr{\CC})$ on
the full orthogonal subcategory $\ovr{\CC}$. More interestingly, the
left adjoint $j_! $ is a universal extension functor that extends
an AQFT $\BBB\in\QFT(\ovr{\CC})$ that is defined only on `nice space-times'
in $\ovr{\CC}\subseteq\ovr{\DD}$ to all of $\ovr{\DD}$.
In contrast to Fredenhagen's universal algebra construction \cite{Fredenhagen:1989pw,Fredenhagen:1993tx,Fredenhagen:1992yz,lang2014universal},
which is given by left Kan extension $\mathrm{Lan}_j : \Alg_{\As}(\MM)^\CC\to\Alg_{\As}(\MM)^\DD$
of the underlying functors, our left adjoint $j_!$ always defines an AQFT $j_!\BBB\in\QFT(\ovr{\DD})$
on $\ovr{\DD}$ and not only a functor $\mathrm{Lan}_j \BBB : \DD\to \Alg_{\As}(\MM)$
that might violate the $\perp$-commutativity axiom from Definition \ref{def:QFTcats}.
(It was shown in \cite{Benini:2017fnn} that $\mathrm{Lan}_j \BBB $ violates
$\perp$-commutativity on non-connected space-times.) The following
result states an important technical property of the adjunction \eqref{eqn:jadjunction}.
\begin{propo}\label{prop:jcorefl}
For every full orthogonal subcategory embedding
$j : \ovr{\CC}\to\ovr{\DD}$, the adjunction \eqref{eqn:jadjunction}
exhibits $\QFT(\ovr{\CC})$ as a full coreflective subcategory of
$\QFT(\ovr{\DD})$.
\end{propo}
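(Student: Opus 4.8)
The plan is to prove that the left adjoint $j_!$ of the adjunction \eqref{eqn:jadjunction} is fully faithful. Granting this, $j_!$ realizes $\QFT(\ovr{\CC})$ as a full subcategory of $\QFT(\ovr{\DD})$ whose inclusion carries the right adjoint $j^\ast$, which is exactly the statement that $\QFT(\ovr{\CC})$ is full coreflective with coreflector $j^\ast$. By the standard triangle-identity argument, full faithfulness of the \emph{left} adjoint in $j_!\dashv j^\ast$ is equivalent to the unit $\eta:\id_{\QFT(\ovr{\CC})}\to j^\ast\,j_!$ being a natural isomorphism. So everything reduces to computing this unit, and the whole burden of the proof is to locate where the hypothesis that $\ovr{\CC}\subseteq\ovr{\DD}$ is a \emph{full} orthogonal subcategory gets used.

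That hypothesis enters through the operadic picture of Theorem \ref{theo:AQFTalgebraoperad}, under which $j^\ast$ is the operadic pullback $\O_j^\ast$ and $j_!$ is the operadic left Kan extension along $\O_j:\O_{\ovr{\CC}}\to\O_{\ovr{\DD}}$. Using the explicit presentation of Theorem \ref{theo:AQFToperad}, I would first observe that $\O_j$ is \emph{fully faithful} as a morphism of colored operads, i.e.\ the maps $\O_{\ovr{\CC}}\big(\substack{t\\ \und{c}}\big)\to\O_{\ovr{\DD}}\big(\substack{j(t)\\ j(\und{c})}\big)$ are bijective for all tuples of colors in $\CC$: fullness of $j$ on $\Hom$-sets gives $\CC(c_i,t)\cong\DD(j(c_i),j(t))$, while the condition $f_1\perp_\CC f_2\Leftrightarrow f_1\perp_\DD f_2$ guarantees that the two equivalence relations $\sim_\perp$ coincide, so the quotient sets agree. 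The unit $\eta$ is then the operadic analogue of the classical fact that left Kan extension along a fully faithful functor has invertible unit, and I would make this transparent via the corollary preceding this proposition together with the reflective structure of Proposition \ref{propo:QFTreflective}.

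Concretely, writing $G:={p_{\ovr{\CC}}}^\ast\BBB$ for the underlying $\perp_\CC$-commutative functor of $\BBB\in\QFT(\ovr{\CC})$ and unravelling $j^\ast$ as the restriction of $(-)\circ j$ with $j_!\cong{p_{\ovr{\DD}}}_!\circ\mathrm{Lan}_j\circ{p_{\ovr{\CC}}}^\ast$, the underlying functor of $j^\ast j_!\BBB$ is $\big({p_{\ovr{\DD}}}_!(\mathrm{Lan}_j G)\big)\circ j$. The main obstacle, and the only substantive step, is that the commutification reflector ${p_{\ovr{\DD}}}_!$ genuinely imposes new relations, since $\mathrm{Lan}_j G$ alone need not be $\perp_\DD$-commutative on non-connected space-times (the failure flagged in Section \ref{sec:background} and \cite{Benini:2017fnn}); I must show these new relations become invisible after restriction along $j$. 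Since $j$ is fully faithful, the comma category $(j\downarrow j(c))$ has terminal object $(c,\id_{j(c)})$, so $(\mathrm{Lan}_j G)(j(c))\cong G(c)$, and any element in the image of a pushforward along $f:d\to j(c)$ is, by fullness, a product of elements $G(h)(x)$ with $h$ a $\CC$-morphism into $c$. The reflection ${p_{\ovr{\DD}}}_!(\mathrm{Lan}_j G)(j(c))$ is therefore the quotient of $G(c)$ by the commutators $[(\mathrm{Lan}_j G)(f_1)(a_1),(\mathrm{Lan}_j G)(f_2)(a_2)]$ over orthogonal pairs $f_1\perp_\DD f_2$ with target $j(c)$ (this family is already closed under pushforward by $\circ$-stability of $\perp_\DD$). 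Writing the two factors as products of $G(h_1)(x_1)$ and $G(h_2)(x_2)$ with $f_1\,g_1=j(h_1)$ and $f_2\,g_2=j(h_2)$, $\circ$-stability gives $j(h_1)\perp_\DD j(h_2)$, fullness of the orthogonal subcategory forces $h_1\perp_\CC h_2$, and $\perp_\CC$-commutativity of $G$ makes $G(h_1)(x_1)$ and $G(h_2)(x_2)$ commute in $G(c)$.

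Hence every generating commutator already vanishes, the quotient is trivial, and $\big({p_{\ovr{\DD}}}_!(\mathrm{Lan}_j G)\big)(j(c))\cong G(c)$ naturally in $c$; one then checks that this isomorphism is precisely the unit $\eta_\BBB$. Therefore $\eta$ is a natural isomorphism, $j_!$ is fully faithful, and $j_!\dashv j^\ast$ exhibits $\QFT(\ovr{\CC})$ as a full coreflective subcategory of $\QFT(\ovr{\DD})$. For general target $\MM$, where the commutator-and-ideal reasoning of the previous paragraph is less literal, I would instead invoke the clean operadic formulation established in the second paragraph: full faithfulness of the operad morphism $\O_j$ makes the unit of its left Kan extension adjunction invertible directly, the concrete computation above serving to exhibit exactly how the full-orthogonality hypothesis is consumed.
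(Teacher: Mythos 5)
Your proof is correct, and since the paper states Proposition \ref{prop:jcorefl} without proof (deferring to \cite{Benini:2017fnn}), it can only be measured against the argument that the paper's operadic framework is set up to deliver --- measured that way, it matches in substance. The reduction to invertibility of the unit is standard; the observation that $\O_j:\O_{\ovr{\CC}}\to\O_{\ovr{\DD}}$ is bijective on all operation sets with colors in $\CC$ is exactly the right key fact, and it is precisely where both hypotheses enter, since fullness of $\CC\subseteq\DD$ identifies the $\Hom$-sets while $\perp_\CC=j^\ast(\perp_\DD)$ identifies the equivalence relations $\sim_\perp$ of Theorem \ref{theo:AQFToperad}; and your computation through the factorization $j_!\cong{p_{\ovr{\DD}}}_!\circ\Lan_j\circ{p_{\ovr{\CC}}}^\ast$ handles the two genuinely delicate points correctly, namely that the generating commutators at $j(c)$ are already closed under pushforward by $\circ$-stability of $\perp_\DD$, and that fullness lets every generator be rewritten through $\CC$-morphisms so that $\perp_\CC$-commutativity of $\BBB$ annihilates the whole ideal. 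The one soft spot is your final paragraph: the ideal-and-commutator reasoning is literal only for linear targets such as vector spaces or $\Ch(k)$, whereas the proposition sits in Section \ref{sec:operads}, where $\MM$ is an arbitrary bicomplete closed symmetric monoidal category; for that generality you invoke, but do not prove, the operadic analogue of ``left Kan extension along a fully faithful functor has invertible unit''. That analogue is true but not formal: one proves it by presenting $j_!\AAA$ as a reflexive coequalizer of free $\O_{\ovr{\DD}}$-algebras on the underlying colored object of $\AAA$ (placed at the colors of $\CC$, with the initial object of $\MM$ at all other colors), then using that the forgetful functor creates reflexive coequalizers and that the initial object is absorbing under $\otimes$ (so colors outside $\CC$ contribute nothing), whence at each color $j(c)$ --- by the very bijectivity of $\O_j$ you established --- the coequalizer collapses to the canonical presentation of $\AAA(c)$. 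With that lemma supplied, or cited from \cite{Benini:2017fnn}, your argument is complete.
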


\begin{rem}
Let us explain in more detail why this result is crucial for the interpretation
of $j^\ast$ as a restriction functor and $j_!$ as an extension functor.
Given an AQFT $\BBB \in\QFT(\ovr{\CC})$,
we form the extension $j_! \BBB \in\QFT(\ovr{\DD})$ and ask if this
alters the values of the AQFT on the subcategory $\ovr{\CC}\subseteq\ovr{\DD}$,
i.e.\ if the restriction $j^\ast j_!\BBB$ of the extension is isomorphic to the original theory $\BBB$.
Proposition \ref{prop:jcorefl} states that this is the case and that the unit
$\eta_\BBB : \BBB\to j^\ast j_!\BBB$ of the adjunction  \eqref{eqn:jadjunction}
provides such an isomorphism. In other words, the extension functor
$j_!$ does not alter the values of AQFTs on the full subcategory $\ovr{\CC}\subseteq\ovr{\DD}$.
\end{rem}

We now can formalize our sketchy ideas from Section \ref{sec:background} 
about a descent condition in AQFT.
\begin{defi}\label{def:jlocal}
An AQFT $\AAA \in \QFT(\ovr{\DD})$ is called {\em $j$-local}
if the corresponding component $\epsilon_{\AAA} : j_! j^\ast\AAA \to \AAA$ 
of the counit of the adjunction \eqref{eqn:jadjunction} is an isomorphism.
We denote the full subcategory of $j$-local AQFTs by\linebreak
$\QFT(\ovr{\DD})^{j-\mathrm{loc}} \subseteq \QFT(\ovr{\DD})$.
\end{defi}

\begin{cor}
For every full orthogonal subcategory embedding
$j : \ovr{\CC}\to\ovr{\DD}$, the adjunction \eqref{eqn:jadjunction}
restricts to an adjoint equivalence
$j_! : \QFT(\ovr{\CC}) \rightleftarrows \QFT(\ovr{\DD})^{j-\mathrm{loc}} : j^\ast$.
\end{cor}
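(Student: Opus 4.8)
The plan is to deduce everything formally from the coreflectivity established in Proposition \ref{prop:jcorefl}, combined with the standard abstract characterization of the essential image of a fully faithful left adjoint. First I would record that Proposition \ref{prop:jcorefl} says precisely that the unit $\eta : \id_{\QFT(\ovr{\CC})} \to j^\ast\, j_!$ of the adjunction \eqref{eqn:jadjunction} is a natural isomorphism, i.e.\ that $j_!$ is fully faithful. This is the only nontrivial input; the remainder of the argument is pure adjunction calculus.

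Next I would identify the $j$-local AQFTs with the essential image of $j_!$. The reverse inclusion is immediate: if $\AAA$ is $j$-local then $\epsilon_\AAA : j_! j^\ast\AAA \to \AAA$ is an isomorphism, exhibiting $\AAA$ as isomorphic to $j_!(j^\ast\AAA)$ and hence in the essential image. The forward inclusion is the step that uses the triangle identity: for any $\BBB\in\QFT(\ovr{\CC})$, the identity $\epsilon_{j_!\BBB}\circ j_!(\eta_\BBB) = \id_{j_!\BBB}$ together with the fact that $\eta_\BBB$, and therefore $j_!(\eta_\BBB)$, is an isomorphism forces $\epsilon_{j_!\BBB}$ to be an isomorphism. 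By naturality of $\epsilon$ along any chosen isomorphism $\AAA\cong j_!\BBB$, the component $\epsilon_\AAA$ is then an isomorphism as well, so every object of the essential image is $j$-local. This establishes that $\QFT(\ovr{\DD})^{j-\mathrm{loc}}$ is exactly the essential image of $j_!$.

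Finally I would corestrict the adjunction \eqref{eqn:jadjunction} to these subcategories. Since $j_!$ lands in its own essential image, which we have just identified with $\QFT(\ovr{\DD})^{j-\mathrm{loc}}$, and since $j^\ast$ sends every object of $\QFT(\ovr{\DD})$ into $\QFT(\ovr{\CC})$, the adjunction restricts to an adjunction $j_! \dashv j^\ast$ between $\QFT(\ovr{\CC})$ and $\QFT(\ovr{\DD})^{j-\mathrm{loc}}$. The restricted unit is still $\eta$, a natural isomorphism by coreflectivity, and the restricted counit is a natural isomorphism by the very definition of $j$-locality; an adjunction whose unit and counit are both natural isomorphisms is an adjoint equivalence, which is the claim. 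I do not anticipate a genuine obstacle here, as the entire argument is formal once Proposition \ref{prop:jcorefl} is available; the only step deserving care is the triangle-identity computation showing that objects in the essential image of $j_!$ are automatically $j$-local.
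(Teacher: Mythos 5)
Your proof is correct and follows exactly the route the paper intends: the corollary is stated without proof precisely because it is the standard adjunction-calculus consequence of Proposition \ref{prop:jcorefl} (the unit being a natural isomorphism) together with Definition \ref{def:jlocal}, which is what you have spelled out. The triangle-identity step showing that every object in the essential image of $j_!$ is $j$-local, and the transfer along naturality of $\epsilon$, are exactly the details the paper leaves implicit.
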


\begin{rem}
The physical interpretation is that
$j$-local AQFTs $\AAA\in \QFT(\ovr{\DD})^{j-\mathrm{loc}}$
are those AQFTs on $\ovr{\DD}$ that are completely determined
by their restriction to the subcategory $\ovr{\CC}\subseteq \ovr{\DD}$. 
In the case of $\ovr{\Locc}\subseteq\ovr{\Loc}$, this means that
the value $\AAA(M)$ of a $j$-local AQFT on {\em any} space-time
$M\in\Loc$ is completely determined by the values of $\AAA$
on the subcategory $\Locc$ of space-times with underlying manifold
diffeomorphic to $\bbR^m$. Hence, $j$-locality is a type of descent 
condition for AQFTs.
\end{rem}

\begin{ex}\label{ex:jlocal}
From the results in \cite{lang2014universal} and \cite{Benini:2017fnn}
one can conclude that the free Klein-Gordon AQFT
is $j$-local in the above sense for $j: \ovr{\Locc}\to\ovr{\Loc}$.
\end{ex}

Our second concrete example for an adjunction as in Theorem \ref{theo:algadjunction}
is motivated by the time-slice axiom of AQFTs. Let $\ovr{\CC}$ be any orthogonal category
and $W\subseteq \mathrm{Mor} \CC$ any subset of the set of morphisms.
The corresponding {\em orthogonal localization} functor $L : \ovr{\CC}\to \ovr{\CC[W^{-1}]}$
(see the text before Proposition \ref{prop:timeslicelocalization}) defines
an adjunction
\begin{flalign}\label{eqn:Ladjunction}
\xymatrix{
L_! \,:\, \QFT(\ovr{\CC}) ~\ar@<0.5ex>[r] & \ar@<0.5ex>[l]~ \QFT(\ovr{\CC[W^{-1}]}) \,:\, L^\ast
}
\end{flalign}
between AQFTs on $\ovr{\CC}$ and AQFTs on $\ovr{\CC[W^{-1}]}$.
The following result is a direct consequence of Proposition \ref{prop:timeslicelocalization}.
\begin{propo}\label{propo:Lreflective}
For every orthogonal localization $L : \ovr{\CC}\to \ovr{\CC[W^{-1}]}$,
the right adjoint functor $L^\ast$ in \eqref{eqn:Ladjunction} is fully faithful
and its essential image is the full subcategory $\QFT(\ovr{\CC})^{W-\mathrm{const}}\subseteq
\QFT(\ovr{\CC})$ of $W$-constant AQFTs. Hence, the adjunction \eqref{eqn:Ladjunction}
exhibits $ \QFT(\ovr{\CC[W^{-1}]})$ as a full reflective subcategory of $\QFT(\ovr{\CC})$
and it restricts to an adjoint equivalence $L_! : \QFT(\ovr{\CC})^{W-\mathrm{const}} \rightleftarrows
\QFT(\ovr{\CC[W^{-1}]}) : L^\ast$.
\end{propo}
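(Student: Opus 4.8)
The plan is to deduce everything formally from Proposition~\ref{prop:timeslicelocalization}, together with the standard fact that an adjunction whose right adjoint is fully faithful exhibits a full reflective subcategory. First I would identify the right adjoint $L^\ast$ appearing in \eqref{eqn:Ladjunction} with the functor analyzed in Proposition~\ref{prop:timeslicelocalization}. By Theorem~\ref{theo:algadjunction}, the right adjoint of the adjunction associated to the orthogonal functor $L : \ovr{\CC}\to \ovr{\CC[W^{-1}]}$ is, by construction, the restriction of the pullback functor $(-)\circ L : \Alg_{\As}(\MM)^{\CC[W^{-1}]}\to \Alg_{\As}(\MM)^{\CC}$ to the full subcategories of AQFTs. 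This is precisely the functor $L^\ast$ whose (co)restriction is studied in Proposition~\ref{prop:timeslicelocalization}, so there is nothing to reconcile between the two uses of the symbol.

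Given this identification, Proposition~\ref{prop:timeslicelocalization} immediately yields the first two claims: it states that $L^\ast : \QFT(\ovr{\CC[W^{-1}]}) \to \QFT(\ovr{\CC})$ corestricts to an equivalence onto the full subcategory $\QFT(\ovr{\CC})^{W-\mathrm{const}}$. Being an equivalence onto a \emph{full} subcategory of its codomain, $L^\ast$ is in particular fully faithful as a functor into $\QFT(\ovr{\CC})$, and its essential image is exactly $\QFT(\ovr{\CC})^{W-\mathrm{const}}$. At this point I would invoke the standard categorical fact that for an adjunction $L_!\dashv L^\ast$ with fully faithful right adjoint the counit $\varepsilon : L_!\,L^\ast \to \id$ is a natural isomorphism; this is exactly the assertion that $L^\ast$ realizes $\QFT(\ovr{\CC[W^{-1}]})$ as a full reflective subcategory of $\QFT(\ovr{\CC})$, with reflector $L_!$.

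Finally, for the adjoint equivalence I would restrict the adjunction \eqref{eqn:Ladjunction} along the essential image. Since $L^\ast$ is already an equivalence onto $\QFT(\ovr{\CC})^{W-\mathrm{const}}$ by Proposition~\ref{prop:timeslicelocalization}, and since the unit $\eta : \id \to L^\ast L_!$ of a reflective subcategory is an isomorphism precisely on the objects of that subcategory, the restriction of $L_!$ to $\QFT(\ovr{\CC})^{W-\mathrm{const}}$ has both its unit and counit invertible, yielding the claimed adjoint equivalence $L_! : \QFT(\ovr{\CC})^{W-\mathrm{const}} \rightleftarrows \QFT(\ovr{\CC[W^{-1}]}) : L^\ast$. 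I expect no serious obstacle: essentially all the content already lives in Proposition~\ref{prop:timeslicelocalization}, and the remainder is the purely formal translation of ``equivalence onto a full subcategory'' into the vocabulary of reflective subcategories and adjoint equivalences. The only point demanding a moment of care is the identification carried out in the first step, namely that the right adjoint produced abstractly by Theorem~\ref{theo:algadjunction} is literally the pullback functor of Proposition~\ref{prop:timeslicelocalization} rather than merely some functor of the same shape; this, however, is immediate from the explicit description of the right adjoint recorded in Theorem~\ref{theo:algadjunction}.
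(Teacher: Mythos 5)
Your proof is correct and follows essentially the same route as the paper, which simply declares the proposition to be a direct consequence of Proposition~\ref{prop:timeslicelocalization}; your write-up just makes explicit the formal steps (identifying the right adjoint of Theorem~\ref{theo:algadjunction} with the pullback functor, invoking the standard fact that a fully faithful right adjoint yields a reflective subcategory, and restricting to the essential image to get the adjoint equivalence) that the paper leaves implicit.
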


An immediate corollary of this result is that there exist
equivalent characterizations of $W$-constant AQFTs.
\begin{cor}\label{cor:Wconstancy}
Let $\AAA\in\QFT(\ovr{\CC})$. Then the following are equivalent:
\begin{enumerate}[i)]
\item $\AAA$ is $W$-constant, i.e.\ for all $f\in W$ the $\Alg_{\As}(\MM)$-morphism
$\AAA(f) : \AAA(c)\to \AAA(c^\prime)$ is an isomorphism.
\item The component $\eta_{\AAA} : \AAA \to L^\ast L_!\AAA$ of the unit
of the adjunction \eqref{eqn:Ladjunction} is an isomorphism.
\end{enumerate}
\end{cor}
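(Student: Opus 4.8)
The plan is to deduce this entirely from Proposition \ref{propo:Lreflective} together with a standard fact about reflective subcategories, so that essentially no new computation is required. Proposition \ref{propo:Lreflective} tells us that $L^\ast$ is fully faithful with essential image exactly $\QFT(\ovr{\CC})^{W-\mathrm{const}}$, and that \eqref{eqn:Ladjunction} exhibits $\QFT(\ovr{\CC[W^{-1}]})$ as a full reflective subcategory. The task is then to show that, for such an adjunction, the objects $\AAA$ at which the unit $\eta_\AAA$ is an isomorphism are precisely the objects of the essential image of the right adjoint, i.e.\ the $W$-constant AQFTs.

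First I would treat the implication from (ii) to (i). If $\eta_\AAA : \AAA \to L^\ast L_!\AAA$ is an isomorphism, then $\AAA$ is isomorphic to $L^\ast(L_!\AAA)$ and hence lies in the essential image of $L^\ast$. By Proposition \ref{propo:Lreflective} this essential image is $\QFT(\ovr{\CC})^{W-\mathrm{const}}$, so $\AAA$ is $W$-constant. This direction is immediate.

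For the converse (i) $\Rightarrow$ (ii) I would first recall the standard adjunction-theoretic fact that a right adjoint is fully faithful if and only if the counit $\epsilon_\BBB : L_! L^\ast\BBB \to \BBB$ is an isomorphism for every $\BBB\in\QFT(\ovr{\CC[W^{-1}]})$; this holds here by Proposition \ref{propo:Lreflective}. Given a $W$-constant $\AAA$, the same proposition supplies a $\BBB\in\QFT(\ovr{\CC[W^{-1}]})$ and an isomorphism $\AAA\cong L^\ast\BBB$, so by naturality of $\eta$ it suffices to prove that $\eta_{L^\ast\BBB}$ is an isomorphism. Here I would invoke the triangle identity $(L^\ast\epsilon_\BBB)\circ\eta_{L^\ast\BBB} = \id_{L^\ast\BBB}$: since $\epsilon_\BBB$ is an isomorphism, so is $L^\ast\epsilon_\BBB$, and therefore $\eta_{L^\ast\BBB}$ is an isomorphism as well; transporting along $\AAA\cong L^\ast\BBB$ then gives that $\eta_\AAA$ is an isomorphism.

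I do not anticipate a genuine obstacle, since the statement is a formal consequence of Proposition \ref{propo:Lreflective}. The only point requiring a little care is the direction (i) $\Rightarrow$ (ii), where one should argue via the triangle identity and the invertibility of the counit rather than attempt to exhibit an inverse of $\eta_\AAA$ directly. The whole argument is an instance of the general principle that, for a reflective subcategory, the reflection unit is invertible exactly on the subcategory being reflected onto.
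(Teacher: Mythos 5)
Your proof is correct and matches the paper's intent: the paper offers no separate argument, stating the result as an ``immediate corollary'' of Proposition \ref{propo:Lreflective}, and your argument is precisely the standard reflective-subcategory reasoning (unit invertible exactly on the essential image of the fully faithful right adjoint, via the triangle identity and invertibility of the counit) that the paper leaves implicit.
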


\begin{rem}
We would like to stress that our adjunctions
from this section are not only theoretically interesting,
but they already found concrete applications to physical problems.
We refer to \cite{Benini:2017dfw} for a 
study of AQFTs on space-times with time-like boundaries 
from this perspective.
\end{rem}


\section{\label{sec:gaugetheory}Higher structures in gauge theory}
The aim of this section is to explain in rather non-technical terms the structural 
differences between `ordinary' field theories, such as Klein-Gordon theory, 
and gauge theories, such as Yang-Mills theory. The latter are instances 
of {\em higher structures} and therefore require refined concepts of 
category theory for their formalization, e.g.\ $\infty$-category theory \cite{Lurie:0608040,Lurie:book}
or model category theory \cite{hovey2007model,dwyer1995homotopy}. The incorporation
of such higher structures into AQFT will be discussed in Section \ref{sec:homotopy}.


\subsection{\label{subsec:groupoids}Groupoids of gauge fields}
The main difference between `ordinary' field theories
and gauge theories is of course the presence of 
gauge symmetries. Even though this observation
sounds like a tautology, it has profound consequences
on how one should think of the spaces of fields in these 
situations. For `ordinary' theories, the collection
of all fields $\mathfrak{F}$ has the structure of a set, i.e.\ 
given two fields $\Phi,\Phi^\prime\in\mathfrak{F}$, we can
ask if these fields are the same or not by testing whether 
$\Phi = \Phi^\prime$ holds true. In a gauge theory
this becomes more complicated because in addition
to gauge fields $A$ one also has gauge transformations
$A \stackrel{g}{\longrightarrow}A^\prime$ between gauge fields.
Hence, the collection of all gauge fields $\mathfrak{G}$ has the structure of
a groupoid and not that of a set! The following picture visualizes the basic idea:
\begin{flalign}\label{eqn:gaugetheorypic}
\text{\begin{tabular}{cc}
{\small `Ordinary' field theory:} & {\small Gauge theory:}\\
~~\begin{tikzpicture}[scale=0.80]
\draw [fill=gray,opacity=0.15, rounded corners] (-2.5,-1) rectangle (1,1); 
\draw [fill=black,thick] (-2,-0.5) circle (0.06);
\draw [fill=black,thick] (0,-0.5) circle (0.06);
\draw [fill=black,thick] (-1,0.5) circle (0.06);
\draw (-2,-0.8) node {{\small {$\Phi$}}};
\draw (-1,0.8) node {{\small {$\Phi^\prime$}}};
\draw (0,-0.8) node {{\small {$\Phi^{\prime\prime}$}}};
\end{tikzpicture}~~
&~~
\begin{tikzpicture}[scale=0.80]
\draw [fill=gray,opacity=0.15, rounded corners] (-2.5,-1) rectangle (1,1); 
\draw [fill=black,thick] (-2,-0.5) circle (0.06);
\draw [fill=black,thick] (0,-0.5) circle (0.06);
\draw [fill=black,thick] (-1,0.5) circle (0.06);
\draw (-2,-0.8) node {{\small {$A$}}};
\draw (-1,0.8) node {{\small {$A^\prime$}}};
\draw (0,-0.8) node {{\small {$A^{\prime\prime}$}}};
\draw [thick,->] (-2,-0.4) to[out=90, in=180,edge node={node [sloped,above] {\small {$g$}}}] (-1.1,0.5);
\draw [thick,->] (-1.9,-0.5) to[out=0, in=270 ,edge node={node [sloped,below] {\small {$g^\prime$}}}] (-1,0.4);
\draw [thick,->] (0,-0.4) to[loop,distance=40,edge node={node [sloped,above] {\small {$g^{\prime\prime}$}}}] (0,-0.4);
\end{tikzpicture}~~
\end{tabular}}
\end{flalign}
This groupoid structure drastically changes the way one 
should think of two gauge fields as being the same. In contrast to sets, being 
the same in a groupoid is not anymore a property, but rather a structure
in the sense that one needs a gauge transformation $A\stackrel{g}{\longrightarrow} A^\prime$
in order to witness that $A$ and $A^\prime$ are the same.
As visualized in \eqref{eqn:gaugetheorypic}, there might exist different 
witnesses for two gauge fields being the same, and in particular there are
generically non-trivial `loops' in the groupoid of gauge fields.
These loops should be understood as higher order structures in the groupoid
of gauge fields that cannot be seen at the level of the naive `gauge orbit space'. 
Recall that the naive `gauge orbit space' is obtained by forming gauge equivalence classes
of gauge fields, i.e.\ it is the zeroth homotopy group
$\pi_0\mathfrak{G}$ of the groupoid of gauge fields $\mathfrak{G}$. This construction
however neglects information on the loops in $\mathfrak{G}$, which is
contained in the first homotopy groups $\pi_1(\mathfrak{G},A)$, for $A\in\mathfrak{G}$. 
Hence, the groupoid
of gauge fields $\mathfrak{G}$ includes more refined information on the gauge theory
than the naive `gauge orbit space'. We shall explain later why this additional
information is crucial.
\begin{rem}\label{rem:setgrpdinftygrpd}
The same way of reasoning of course also applies to gauge transformations themselves.
In particular, if there are gauge transformations of gauge transformations,
then the collection of gauge fields is described by a $2$-groupoid.
If there are gauge transformations of gauge transformations of gauge transformations,
then one gets a $3$-groupoid, and so on. Hence, the natural framework
in which to study gauge theories and higher gauge theories 
is that of $\infty$-groupoids. Because of the chain of inclusions
\begin{flalign}
\Set \hookrightarrow \mathbf{Grpd} \hookrightarrow 2\mathbf{Grpd}  \hookrightarrow \cdots \hookrightarrow\infty\mathbf{Grpd}\quad,
\end{flalign}
all ordinary field theories, gauge theories, $2$-gauge theories, \dots, can be regarded as 
particular examples of theories in the sense of $\infty$-groupoids. In what follows we 
will mostly focus on the case of $1$-groupoids in order to simplify our presentation. 
However, everything said below generalizes to $\infty$-groupoids and in particular
our model categorical framework for AQFT in Section \ref{sec:homotopy} applies
to higher gauge theories as well.
\end{rem}

\begin{ex}\label{ex:BGcon}
As a very concrete example, let us consider principal $G$-bundles
with connections on a Cartesian space $U\cong \bbR^m$. Because all principal
$G$-bundles on $U$ are trivializable, the groupoid of gauge fields on $U$ is
\begin{flalign}
BG^\mathrm{con}(U) \,=\,\begin{cases}
\text{Obj:} &  A\in \Omega^1(U,\mathfrak{g})\\
\text{Mor:} & A\stackrel{g}{\longrightarrow} A \triangleleft g := g^{-1}A g + g^{-1}\dd g \\
& \text{with }  g\in C^\infty(U,G)
\end{cases}\quad\quad,
\end{flalign}
where $\mathfrak{g}$ is the Lie algebra of the structure Lie group $G$.
For the Abelian cases $G = U(1)$ or $G=\bbR$, one easily computes the
homotopy groups and obtains $\pi_0 BG^\mathrm{con}(U) \cong \Omega^1(U)\big/ \dd \Omega^0(U)$
and  $\pi_1 (BG^\mathrm{con}(U), A) \cong G$. Hence, the naive 
`gauge orbit space' does not distinguish between the two different structure groups,
but the higher order information contained in $\pi_1$ does. In other words, 
the groupoid perspective on gauge theory is truly more refined
than the naive `gauge orbit space' perspective.
\end{ex}

Working with groupoids requires some additional
care because the correct notion of two groupoids being the same
is via categorical equivalence rather than isomorphism.
This is because the category of groupoids $\Grpd$ is actually
a $2$-category, whose objects are all\linebreak groupoids $\mathfrak{G}$,
$1$-morphisms are functors $F : \mathfrak{G}\to \mathfrak{G}^\prime$ and
$2$-morphisms are natural isomorphisms $\zeta : F \to F^\prime$ between
functors $F,F^\prime : \mathfrak{G}\to \mathfrak{G}^\prime$.
The equivalences in this $2$-category are the usual categorical equivalences, i.e.\
functors $F : \mathfrak{G}\to \mathfrak{G}^\prime$ that can be `inverted up to $2$-morphisms'
in the sense that there exists a functor $F^\prime : \mathfrak{G}^\prime\to \mathfrak{G}$,
going in the opposite direction, together with natural isomorphisms 
$F^\prime \,F \cong \id_{\mathfrak{G}}$ and $F\,F^\prime \cong \id_{\mathfrak{G}^\prime}$.
Recall that these equivalences can be characterized as fully faithful and essentially surjective 
functors $F : \mathfrak{G}\to\mathfrak{G}^\prime$. As a side-remark, let us briefly mention
that the category of $\infty$-groupoids is not only a $2$-category, but actually 
an $\infty$-category \cite{Lurie:0608040}. Hence, the higher the gauge theory one considers, 
the higher one has to climb up on the categorical ladder.

For our purposes, it will be convenient to adopt a slightly different, but related,  
point of view and regard $\Grpd$ (and also $\infty\Grpd$) as a {\em model category}.
A model category is a bicomplete category
$\CC$ that is endowed with three distinguished classes of morphisms -- called
weak equivalences, fibrations and cofibrations -- that have to satisfy a list of conditions.
See e.g.\ \cite{dwyer1995homotopy,hovey2007model} for details. These axioms are designed in
such a way that the weak equivalences define a consistent notion of two 
objects being the same. In particular, this notion is preserved
under certain (derived) functorial constructions. Let us expand on the latter point 
because it is crucial. Given any functor $F : \CC\to\DD$ between two model 
categories $\CC$ and $\DD$, it is in general
not true that $F$ maps weak equivalences in $\CC$ to weak equivalences in $\DD$. 
This will of course introduce inconsistencies, because
weakly equivalent objects are regarded as being the same 
according to the philosophy of model category theory. 
Considering only those functors $F : \CC\to \DD$ that do preserve
weak equivalences would be too restrictive, because several 
natural constructions, e.g.\ limit and colimit functors, are not of this type.
The way out of this dilemma is to `deform' (in a controlled way) the functor
$F : \CC\to \DD$ to obtain a functor that does preserve weak equivalences. That
is precisely what derived functors do for us! The usual context in which the
theory of derived functors applies is when one has a {\em Quillen adjunction}
between model categories, i.e.\ an adjunction $F : \CC \rightleftarrows \DD : G$ in
which the right adjoint functor $G$ preserves fibrations and acyclic fibrations (i.e.\
morphisms that are both a fibration and a weak equivalence). Choosing
a natural cofibrant replacement $(Q : \CC\to\CC, q : Q\stackrel{\sim}{\to}\id_\CC)$ 
for $\CC$ and a natural fibrant replacement $(R : \DD\to\DD, r : \id_\DD \stackrel{\sim}{\to} R)$ for $\DD$,
one can define the {\em left derived functor}
\begin{subequations}
\begin{flalign}
\bbL F := F\,Q \,:\, \CC\longrightarrow\DD
\end{flalign}
and the {\em right derived functor}
\begin{flalign}
\bbR G := G\,R\,:\, \DD\longrightarrow \CC
\end{flalign}
\end{subequations}
corresponding to the Quillen adjunction $F \dashv G$. It can be shown that 
both derived functors preserve weak equivalences and that different
choices of (co)fibrant replacements define naturally weakly equivalent 
derived functors \cite{dwyer1995homotopy,hovey2007model}.

Let us now look at model categories and derived functors in action in order
to better understand what they do for us and why they are crucial.
We first recall that the category $\Grpd$ of groupoids is a model category
with respect to the following choices (see e.g.\ \cite{Hollander:0110247}): 
A morphism $F : \mathfrak{G} \to\mathfrak{H}$
(i.e.\ functor) between two groupoids is
\begin{enumerate}[i)]
\item a weak equivalence if it is fully faithful and essentially surjective;
\item a fibration if it is an isofibration, i.e.\ for each object $x\in\mathfrak{G}$ 
and $\mathfrak{H}$-morphism $g : F(x)\to y$ there exists a $\mathfrak{G}$-morphism
$f : x\to x^\prime$ such that $F(f) = g$;
\item a cofibration if it is injective on objects.
\end{enumerate}
Note that the weak equivalences in this model structure are precisely
the equivalences one obtains when thinking of $\Grpd$ as a $2$-category.
Given any small category $\DD$, we consider the functor category
$\Grpd^\DD$ of all functors from $\DD$ to $\Grpd$, which we interpret as diagrams
of shape $\DD$ in $\Grpd$. The constant diagram functor $\mathrm{const} : 
\Grpd \to \Grpd^\DD$ admits both a left and a right adjoint functor, 
respectively given by the colimit functor $\colim : \Grpd^\DD\to \Grpd$ and the limit functor
$\lim : \Grpd^\DD\to \Grpd$. Let us focus on the adjunction
$\mathrm{const} \dashv \lim$, whose right adjoint is the limit functor,
and note that this is a Quillen adjunction when one endows $\Grpd^\DD$ with
the injective model category structure, i.e.\ a  $\Grpd^\DD$-morphism
$\zeta : X\to Y$  is a cofibration (respectively, a weak equivalence) if
all components $\zeta_d : X(d)\to Y(d)$, for $d\in\DD$, are cofibrations
(respectively,  weak equivalences) in $\Grpd$. The corresponding 
right derived functor
\begin{flalign}
\holim := \bbR \lim \,:\, \Grpd^\DD\longrightarrow \Grpd 
\end{flalign}
is called the {\em homotopy limit functor}. In contrast to the ordinary 
limit functor $\lim$, the homotopy limit functor $\holim$ has the important
property that it preserves weak equivalences. 

For our gauge-theoretic example below, we shall need a concrete model for homotopy limits
of cosimplicial groupoids, see e.g.\ \cite{Hollander:0110247} for details. 
Let $\DD = \Delta$ be the simplex category and consider
the corresponding functor category $\Grpd^\Delta$. An object
$\mathfrak{G}^\bullet \in \Grpd^\Delta$ is a cosimplicial groupoid,
which one may visualize as follows
\begin{flalign}
\mathfrak{G}^\bullet = \Bigg(
\xymatrix@C=1.5em{
\mathfrak{G}^0 \ar@<0.5ex>[r]^-{d^0}\ar@<-0.5ex>[r]_-{d^1}&
\mathfrak{G}^1 \ar[r]\ar@<1ex>[r]\ar@<-1ex>[r] & \mathfrak{G}^2 
\ar@<1.5ex>[r]\ar@<0.5ex>[r] \ar@<-0.5ex>[r]\ar@<-1.5ex>[r]& \cdots
}
\Bigg)\quad,
\end{flalign}
where as usual we suppressed the codegeneracy maps $s^i$.
\begin{lem}\label{lem:holimcosimplicialgroupoid}
Let $\mathfrak{G}^\bullet \in \Grpd^\Delta$ be any cosimplicial groupoid.
The following groupoid defines a model for the homotopy 
limit $\holim\,\mathfrak{G}^\bullet \in\Grpd$:
\begin{enumerate}[i)]
\item objects are pairs $(x,h)$ consisting of an object $x\in\mathfrak{G}^0$
and a $\mathfrak{G}^1$-morphism $h : d^1(x) \to d^0(x)$, such that
$s^0(h) = \id_x$ and $d^0(h)\circ d^2(h) = d^1(h)$ in $\mathfrak{G}^2$;
\item morphisms $g : (x,h)\to (x^\prime,h^\prime)$ are $\mathfrak{G}^0$-morphisms
$g:x\to x^\prime$, such that the diagram
\begin{flalign}
\xymatrix{
\ar[d]_-{h} d^1(x) \ar[r]^-{d^1(g)} & d^1(x^\prime)\ar[d]^-{h^\prime}\\
d^0(x) \ar[r]_-{d^0(g)} & d^0(x^\prime)
}
\end{flalign}
in $\mathfrak{G}^1$ commutes.
\end{enumerate}
\end{lem}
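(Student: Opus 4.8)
The plan is to identify the homotopy limit with an explicit totalization and then to exploit that groupoids are $1$-truncated to cut this totalization down to a finite, hands-on construction. Recall that for a cosimplicial object in a model category the homotopy limit over $\Delta$ is computed by the Bousfield--Kan cobar/totalization formula, i.e.\ as the weighted limit $\int_{[n]\in\Delta}(\mathfrak{G}^n)^{W^n}$ with cotensor weight given by the cosimplicial fundamental groupoid $W^n := \Pi_1(\Delta^n)$, the codiscrete groupoid on the vertex set $\{0,\dots,n\}$. Here I would use that $\Grpd$ with the model structure recalled above is tensored and cotensored, that every groupoid is fibrant (so no fibrant replacement of the values $\mathfrak{G}^n$ is required), and that $W^\bullet$ is a cofibrant resolution of the constant point-weight. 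Equivalently, and perhaps more transparently, one may work inside the $2$-category $\Grpd$: the groupoid to be described is Street's \emph{descent object} of the $2$-truncation of the cosimplicial diagram $\mathfrak{G}^\bullet$, and the homotopy limit of a cosimplicial groupoid agrees with this $2$-categorical descent object.

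The key simplification is $1$-truncation. Since groupoids model homotopy $1$-types, the cotensor $(\mathfrak{G}^n)^{W^n}$ depends only on the $1$-skeleton of $\Delta^n$ subject to the relations coming from its $2$-simplices; consequently the weighted limit is $2$-coskeletal and reduces to a \emph{finite} limit involving only $\mathfrak{G}^0,\mathfrak{G}^1,\mathfrak{G}^2$ together with the coface maps $d^0,d^1\colon\mathfrak{G}^0\to\mathfrak{G}^1$, the coface maps $d^0,d^1,d^2\colon\mathfrak{G}^1\to\mathfrak{G}^2$, and the codegeneracy $s^0\colon\mathfrak{G}^1\to\mathfrak{G}^0$. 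I would compute the relevant cotensors directly: $W^0$ is a point, so $(\mathfrak{G}^0)^{W^0}=\mathfrak{G}^0$; $W^1$ is the walking isomorphism, so $(\mathfrak{G}^1)^{W^1}$ is the groupoid of isomorphisms of $\mathfrak{G}^1$; and $W^2$ is the codiscrete groupoid on three objects.

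Unwinding the weighted-limit (equivalently descent) conditions then yields exactly the stated groupoid. An object provides a vertex $x\in\mathfrak{G}^0$ and an edge datum, i.e.\ an isomorphism in $\mathfrak{G}^1$; matching this edge against the two coface maps pins its source and target to $d^1(x)$ and $d^0(x)$, giving $h\colon d^1(x)\to d^0(x)$. The triangle relation in the codiscrete groupoid $W^2$, transported along $d^0,d^1,d^2$, becomes the cocycle identity $d^0(h)\circ d^2(h)=d^1(h)$ in $\mathfrak{G}^2$, while the codegeneracy $s^0$ forces the normalization $s^0(h)=\id_x$. A morphism of the weighted limit reduces, again by $1$-truncation, to a morphism $g\colon x\to x'$ in $\mathfrak{G}^0$ subject to the level-one naturality square, which is precisely the commuting diagram in the statement. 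This produces a natural isomorphism between the explicit groupoid and the totalization, hence a model for $\holim\,\mathfrak{G}^\bullet$.

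The step I expect to be the main obstacle is the justification that this finite normalized construction genuinely computes the \emph{homotopy} limit, rather than merely the strict limit or some uncontrolled approximation. Concretely one must verify that the weight $W^\bullet$ is cofibrant and that the cosimplicial groupoid is sufficiently fibrant (Reedy fibrant, i.e.\ the matching maps are isofibrations) so that the weighted limit represents $\bbR\lim$ on the nose; in $\Grpd$ all objects are fibrant, but the matching-object condition still needs to be addressed, and it is precisely the presence of the normalization $s^0(h)=\id_x$ that compensates for the degeneracies and makes the truncated (normalized) totalization equivalent to the derived limit. Checking this comparison carefully---or, in the $2$-categorical formulation, invoking the known agreement between homotopy limits of cosimplicial groupoids and Street's descent object---is the technical heart of the argument; the remaining bookkeeping of coface and codegeneracy identities is then routine.
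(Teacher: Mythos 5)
Your overall strategy---realize $\holim\,\mathfrak{G}^\bullet$ as a weighted limit, use $1$-truncation to cut the weight down to the levelwise codiscrete groupoids $\Pi_1(\Delta^\bullet)$, and unwind---is the standard route to this statement, and your unwinding is correct: naturality against the two cofaces pins $h$ to $d^1(x)\to d^0(x)$, the codegeneracy gives $s^0(h)=\id_x$, the triangle in the codiscrete groupoid $\Pi_1(\Delta^2)$ gives the cocycle identity, and morphisms reduce to the level-one naturality square. Note, however, that the paper itself offers no proof: it states the lemma with a reference to \cite{Hollander:0110247}, so your fallback of ``invoking the known agreement with Street's descent object'' is precisely the paper's own move---but used \emph{inside} a proof it is circular, since that agreement is exactly what is to be shown.

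The genuine gap is the step you flag but do not close. The weight $\Pi_1(\Delta^\bullet)$ is the totalization-type weight, not the Bousfield--Kan weight $\Pi_1\bigl(N(\Delta\downarrow[\bullet])\bigr)$; it is Reedy cofibrant, and the Reedy pairing then demands that $\mathfrak{G}^\bullet$ be Reedy fibrant (e.g.\ the matching map $s^0:\mathfrak{G}^1\to\mathfrak{G}^0$ must be an isofibration), which fails for general cosimplicial groupoids---yet the lemma is asserted for \emph{any} $\mathfrak{G}^\bullet$. Your proposed remedy, that the normalization $s^0(h)=\id_x$ ``compensates for the degeneracies,'' is an assertion, not an argument, and it is not the actual mechanism. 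Two honest ways to close the gap: (a)~prove that $\Pi_1(\Delta^\bullet)$ is in fact \emph{projectively} cofibrant in $\Grpd^\Delta$---this is special to groupoids and follows from a short lifting argument: given a levelwise acyclic fibration $p:\mathfrak{E}^\bullet\to\mathfrak{B}^\bullet$ (each $p^n$ surjective on objects and fully faithful) and a descent datum $(x,h)$ in $\mathfrak{B}^\bullet$, lift $x$ along $p^0$, lift $h$ uniquely to $\tilde h: d^1(\tilde x)\to d^0(\tilde x)$ by full faithfulness of $p^1$, and deduce $s^0(\tilde h)=\id_{\tilde x}$ and the cocycle identity from faithfulness of $p^0$ and $p^2$; since every groupoid is fibrant, the strict weighted limit with a projectively cofibrant weight weakly equivalent to the constant weight then computes $\holim$ for \emph{every} cosimplicial groupoid, with no Reedy hypothesis. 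Or (b)~observe that the stated groupoid is an inserter (producing $(x,h)$) followed by two equifiers (imposing the two equations), hence a PIE limit; such limits are invariant under levelwise equivalences of diagrams, so you may pass to a Reedy fibrant replacement, where your totalization argument does apply, and transport the answer back. Either of these supplies the missing technical heart; without one of them (or an external citation, as in the paper), the proposal does not yet prove the lemma.
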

\begin{rem}
We note that in general the homotopy limit $\holim \,\mathfrak{G}^\bullet$ is 
not weakly equivalent to the ordinary limit $\lim\, \mathfrak{G}^\bullet$. The latter is given by the groupoid
whose objects are all $x\in\mathfrak{G}^0$ satisfying the {\em equality} $d^1(x) = d^0(x)$
and whose morphisms are all $\mathfrak{G}^0$-morphisms $g: x\to x^\prime$ between such objects
that additionally satisfy $d^1(g) = d^0(g)$.
The homotopy limit is weaker than the ordinary limit in the sense that the equality $d^1(x) = d^0(x)$
is promoted to the additional datum of a $\mathfrak{G}^1$-morphism 
$h : d^1(x)\to d^0(x)$ witnessing that $d^1(x)$ and $d^0(x)$
are isomorphic objects in $\mathfrak{G}^1$. This will be crucial in the example below.
\end{rem}

\begin{ex}\label{ex:BGcondescent}
Recall from Example \ref{ex:BGcon} the groupoid $BG^\mathrm{con}(U)\in\Grpd$
of gauge fields with structure group $G$ on a Cartesian space $U$. The assignment
$U\mapsto BG^\mathrm{con}(U)$ is contravariantly functorial on the category $\Cart$
of Cartesian spaces, i.e.\ we have a functor $BG^\mathrm{con} : \Cart^\op\to\Grpd$.
We now shall show that homotopy limits allow us to compute from this information
the groupoid of gauge fields on a general manifold $M$. Let us choose any good open cover $\{ U_i \subseteq M\}$
and form its \v {C}ech nerve
\begin{flalign}\label{eqn:Ubullet}
U_\bullet := \Bigg( \xymatrix@C=1.5em{
\coprod\limits_i U_i  &\ar@<0.5ex>[l]\ar@<-0.5ex>[l] \coprod\limits_{ij} U_{ij}  &\ar@<0ex>[l]\ar@<1ex>[l] \ar@<-1ex>[l] \cdots 
}\Bigg)\quad,
\end{flalign}
where as usual we denote intersections by $U_{i_1\dots i_n} := U_{i_1}\cap \cdots\cap U_{i_n}$.
Using that by hypothesis all non-empty intersections are Cartesian spaces, we
can apply the functor $BG^\mathrm{con} : \Cart^\op\to\Grpd$ and obtain a cosimplicial groupoid
\begin{flalign}\label{eqn:BGconM}
\resizebox{.9\hsize}{!}{$BG^{\mathrm{con}}(U_\bullet)\, :=\,
\Bigg(
\!\!\!\xymatrix@C=1em{
\prod\limits_i BG^{\mathrm{con}}(U_i) \ar@<0.5ex>[r]\ar@<-0.5ex>[r]&
\prod\limits_{ij} BG^{\mathrm{con}}(U_{ij}) \ar[r]\ar@<1ex>[r]\ar@<-1ex>[r] & \cdots
}\!\!\!
\Bigg)$}
\end{flalign}
associated to the cover $\{U_i\subseteq M\}$. Computing the corresponding homotopy limit
$\holim\,BG^{\mathrm{con}}(U_\bullet) \in\Grpd$ according to Lemma \ref{lem:holimcosimplicialgroupoid},
we obtain the groupoid whose
\begin{enumerate}[i)]
\item objects are pairs of families $(\{A_i \in \Omega^1(U_i,\mathfrak{g})\},\{g_{ij} \in C^\infty(U_{ij},G)\} )$,
satisfying 
\begin{enumerate}[i)]
\item $A_j\vert_{U_{ij}} = A_i\vert_{U_{ij}} \triangleleft g_{ij}$, for all $i,j$,
\item $g_{ii} = e$ is the identity of $G$, for all $i$, and 
\item the cocycle condition
$g_{ij}\vert_{U_{ijk}}\,g_{jk}\vert_{U_{ijk}} = g_{ik}\vert_{U_{ijk}}$, for all $i,j,k$;
\end{enumerate}
\item morphisms $(\{A_i \},\{g_{ij} \} ) \to(\{A^\prime_i \},\{g^\prime_{ij} \} ) $ 
are families $\{h_i \in C^\infty(U_i,G)\}$, satisfying 
\begin{enumerate}[i)]
\item $A_i^\prime = A_i \triangleleft h_i$, for all $i$, and 
\item $g^{\prime}_{ij} = h_i^{-1}\vert_{U_{ij}}\,g_{ij}\, h_j\vert_{U_{ij}}$, for all $i,j$.
\end{enumerate}
\end{enumerate}
Observe that this groupoid is precisely the groupoid of gauge fields on $M$,
expressed in terms of \v{C}ech data with respect to the good open cover $\{U_i\subseteq M\}$.
In contrast to this, the ordinary limit $\lim\,BG^{\mathrm{con}}(U_\bullet)\in\Grpd$ is given by the groupoid
whose objects are $1$-forms $A \in \Omega^1(M,\mathfrak{g})$ on $M$ and whose morphisms
are $A\to A\triangleleft h$, with $h\in C^\infty(M,G)$. Note that the latter groupoid describes only 
gauge fields on the trivial principal $G$-bundle $\mathrm{pr}_1: M \times G \to M$, while the correct construction 
by the homotopy limit $\holim\,BG^{\mathrm{con}}(U_\bullet) \in\Grpd$ 
is much richer as it captures all possible principal $G$-bundles on $M$. We also refer to
\cite{Nguyen:2017xie,Dougherty:2017:1189-1201} for a more philosophical perspective 
on gauge fields, groupoids and aspects of richness.
\end{ex}


\subsection{\label{subsec:stacks}The role of stacks}
The groupoid perspective on gauge theories that we have 
introduced in the previous section is incomplete because it neglects
the smooth structure on spaces of gauge fields. We shall now
explain how the concept of stacks resolves this issue.
We refer to \cite{Hollander:0110247} for technical details on the model categorical
approach to $1$-stacks that we review below, and to \cite{Dugger:0205027} 
for analogous developments for  $\infty$-stacks. We also refer to \cite{Lurie:0608040} for an 
$\infty$-categorical approach to stacks
and to \cite{Schreiber:2013pra} for a broad overview, including applications to physics.

The way how stacks formalize smooth structures is more abstract than
the standard approach adopted in differential geometry, which
amounts to endowing a space with an atlas of charts. In order to illustrate the
basic ideas, let us first explain how one can describe manifolds from such a more abstract perspective.
Let $\Man$ be the category of (finite-dimensional) manifolds and smooth maps. 
Instead of describing a manifold $M\in\Man$ by looking for suitable charts,
we shall study the sets $C^\infty(T,M)\in \Set$ of smooth maps from
{\em all} test manifolds $T\in\Man$ into $M$. Note that these sets capture a lot of (in fact, all) 
information about the manifold $M$, for example:
\begin{enumerate}[i)]
\item $C^\infty(\bbR^0,M)$ describes the points in $M$,
\item $C^\infty(\bbR^1, M)$ describes the smooth curves in $M$, and
\item $C^\infty(\bbR^2,M)$ describes the smooth surfaces in $M$, etc.
\end{enumerate}
In particular, the sets $C^\infty(T,M)$ see the smooth structure on $M$. 
Observe that the assignment $T\mapsto C^\infty(T,M)$ is contravariantly
functorial, i.e.\ it defines a functor $C^\infty(-,M) : \Man^\op\to\Set$
to the category of sets. This is called the {\em functor of points} of the manifold $M$.
Because smooth functions between manifolds can be glued, we further observe
that $C^\infty(-,M)\in\mathbf{Sh}(\Man)$ is a sheaf on the site of manifolds $\Man$
with the usual open cover Grothendieck topology.
As a consequence of the Yoneda Lemma, the assignment
\begin{flalign}
\Man \longrightarrow \mathbf{Sh}(\Man)~,~~M\longmapsto C^\infty(-,M)
\end{flalign}
is a fully faithful functor, i.e.\ smooth maps $M\to N$ between
two manifolds can be identified with natural transformations 
$C^\infty(-,M) \to C^\infty(-,N)$ between their functors of points. Finally,
because each manifold admits a good open cover, the category
$\mathbf{Sh}(\Man)$ of sheaves on $\Man$ is equivalent to the category
$\mathbf{Sh}(\Cart)$ of sheaves on the site of Cartesian spaces
$\Cart$ with the good open cover Grothendieck topology.

Summing up, we observed that the category $\Man$
of manifolds can be identified with a full subcategory
\begin{flalign}
\Man \,\subseteq\, \HH_0 := \mathbf{Sh}(\Cart)
\end{flalign}
of the category of $\Set$-valued sheaves on the site of Cartesian spaces $\Cart$.
This means that one can equivalently study manifolds and smooth maps between manifolds
from the perspective of their functors of points. In particular, the smooth structure
of a manifold $M$ is encoded entirely in the smooth mappings $\bbR^n \to M$ from
general test spaces $\bbR^n$, for $n\geq 0$, into $M$. 

Note that the category $\HH_0$ is `vastly bigger' than the category of manifolds $\Man$.
Objects $X\in \HH_0$ that are not (isomorphic to) manifolds should be thought
of as {\em generalized smooth spaces}, where the smooth structure is encoded,
in the spirit of functors of points, by the sets of $\HH_0$-morphisms $\bbR^n\to X$,
for all $n\geq 0$. There are plenty of interesting generalized smooth spaces
that feature in field theory. The following is a small list of concrete examples.
\begin{ex}\label{ex:mappingspaces}
Let $M$ and $N$ be two manifolds, which we regard as objects in $\HH_0$.
Because $\HH_0$ is a Cartesian closed category (even better, it is a {\em topos}),
one can form the internal hom object $[M,N]\in\HH_0$. This is a generalized
smooth space that describes the space of smooth mappings from $M$ to $N$.
Why is that so? To answer this question, let us first look at the points 
$\bbR^0 \to [M,N]$ of this generalized smooth space.
Using that $[M,-]$ is the right adjoint functor of $(-)\times M$,
we can compute the set of points via
\begin{multline}
\Hom_{\HH_0}(\bbR^0,[M,N])\,\cong\, \Hom_{\HH_0}(\bbR^0\times M,N) \\
\,\cong\, \Hom_{\HH_0}(M,N)\,\cong\,C^\infty(M,N)\quad,
\end{multline}
where in the last step we used that the inclusion $\Man\to \HH_0$ is fully faithful.
Thus, the underlying set of points of $[M,N]$ is precisely the set of all smooth
maps from $M$ to $N$. In order to get some feeling for the smooth structure
of $[M,N]$, we note that a similar computation shows that
\begin{flalign}
\Hom_{\HH_0}(\bbR^n,[M,N])\,\cong\, C^\infty(\bbR^n\times M,N)\quad.
\end{flalign}
In particular, the smooth curves $\bbR^1\to [M,N]$ are precisely the smooth
functions $\bbR^1\times M\to N$, which matches the naive expectation
for a smooth structure on a mapping space. We refer to \cite{Benini:2016nqj}
for an application of topos theoretic techniques to non-linear field theories.
\end{ex}

\begin{ex}\label{ex:classifyingforms}
For $p\geq 0$, consider the functor $\Omega^p :$ $\Cart^\op \to \Set\,,~\bbR^n\mapsto  \Omega^p(\bbR^n) $
that assigns $p$-forms to Cartesian spaces. This functor defines a sheaf on $\Cart$
and hence a generalized smooth space $\Omega^p\in\HH_0$. This space is called
the {\em classifying space of $p$-forms} because $\HH_0$-morphisms $M\to \Omega^p$
from a manifold into this space correspond precisely to $p$-forms on $M$. 
Let us provide the relevant argument:
For $M=\bbR^n$, this is a direct consequence of the Yoneda Lemma. For a general
manifold $M$, one chooses any good open cover $\{U_i\subseteq M\}$ 
and uses that $\colim(\coprod_{ij}U_{ij}\rightrightarrows \coprod_i U_i) \to M$
is an isomorphism in $\HH_0$. It follows that a morphism $M\to \Omega^p$ is precisely
a family of $p$-forms $\omega_i\in \Omega^p(U_i)$ satisfying 
$\omega_i\vert_{U_{ij}} = \omega_{j}\vert_{U_{ij}}$, for all $i,j$. The sheaf property
of $p$-forms then implies that this data can be glued to a single $p$-form $\omega\in \Omega^p(M)$
on the manifold $M$. As a side-remark, we would like to mention that
the classifying space $\Omega^p\in\HH_0$ can be used to define a concept 
of $p$-forms on {\em any} generalized smooth space $X\in\HH_0$
in terms of $\HH_0$-morphisms $X\to \Omega^p$. For example, a $p$-form
on the mapping space $[M,N]\in\HH_0$ from Example \ref{ex:mappingspaces}
is simply an $\HH_0$-morphism $[M,N]\to \Omega^p$.
\end{ex}

Let us now turn our attention to stacks. Loosely speaking, 
a stack resembles a generalized smooth space
in the sense above, however with the crucial difference
that its functor of points is valued in $\Grpd$ instead of $\Set$.
(Recall from the previous section that groupoids play a fundamental
role in gauge theory.) In contrast to the strict sheaf condition for generalized
smooth spaces in $\HH_0$, stacks satisfy a weaker {\em homotopy sheaf condition}.
More precisely, we have the following definition \cite{Hollander:0110247}.
\begin{defi}\label{def:stack}
A {\em stack} is a presheaf of groupoids $X : \Cart^\op \to \Grpd$ that satisfies the
homotopy sheaf condition: For each $U\in\Cart$ and good open cover 
$\{U_i\subseteq U\}$, the canonical map
\begin{flalign}
X(U) \stackrel{\sim}{\longrightarrow} \holim\Bigg(
\xymatrix@C=1em{
\prod\limits_i X(U_i) \ar@<0.5ex>[r] \ar@<-0.5ex>[r]&
\prod\limits_{ij} X(U_{ij})  \ar@<1ex>[r] \ar@<-1ex>[r]\ar[r]&
\cdots
}\Bigg)
\end{flalign}
is a weak equivalence in the model category $\Grpd$,
where $\holim$ is the homotopy limit of a cosimplicial groupoid (cf.\ Lemma \ref{lem:holimcosimplicialgroupoid}).
\end{defi}

It was shown in  \cite{Hollander:0110247} that stacks are (the fibrant) objects in a suitable 
model category $\HH_1$. Let us briefly explain this crucial point without going too much into
the details. As a first step, let us consider the category $\mathbf{PSh}(\Cart,\Grpd)$
of groupoid-valued presheaves on $\Cart$. The model structure on $\Grpd$ 
induces the projective model structure on this functor category, i.e.\ 
a morphism $\zeta : X\to Y$ is a fibration (respectively, a weak equivalence)
if all components $\zeta_U : X(U)\to Y(U)$, for $U\in\Cart$, 
are fibrations (respectively, weak equivalences) in $\Grpd$. This model structure
is however not yet quite right, because it does not take into account the Grothendieck
topology on $\Cart$. Given any good open cover $\{U_i\subseteq U\}$
of some $U\in\Cart$, we form its \v{C}ech nerve as in \eqref{eqn:Ubullet},
which defines a simplicial object  $U_\bullet \in \mathbf{PSh}(\Cart,\Grpd)^{\Delta^\op}$
by regarding $U_{i_1\cdots i_n}\in\mathbf{PSh}(\Cart,\Grpd)$ via the Yoneda embedding.
One then defines the model category
\begin{flalign}
\HH_1\,:=\, \mathbf{PSh}(\Cart,\Grpd)_{\mathrm{loc}}
\end{flalign}
by left Bousfield localization of the projective model structure 
at the set of morphisms
\begin{flalign}
\Big\{U \longleftarrow \hocolim U_\bullet~:~ \{U_i\subseteq U\} \text{ good open cover }\Big\}~~,
\end{flalign}
where $\hocolim: \mathbf{PSh}(\Cart,\Grpd)^{\Delta^\op}\to \mathbf{PSh}(\Cart,\Grpd)$ 
is the homotopy colimit with respect to the projective model structure on $\mathbf{PSh}(\Cart,\Grpd)$.
The relationship between stacks and $\HH_1$ is explained in the following proposition, 
which was proven in \cite{Hollander:0110247}.
\begin{propo}\label{prop:stacksfibrant}
Stacks according to Definition \ref{def:stack} are precisely the fibrant
objects in $\HH_1$.
\end{propo}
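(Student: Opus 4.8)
The plan is to prove Proposition \ref{prop:stacksfibrant} by unwinding the definition of the left Bousfield localization $\HH_1 = \mathbf{PSh}(\Cart,\Grpd)_{\mathrm{loc}}$ and matching its fibrant objects to the homotopy sheaf condition of Definition \ref{def:stack}. Recall that in a left Bousfield localization of a model category at a set $S$ of morphisms, the fibrant objects are precisely the \emph{$S$-local} objects that were already fibrant in the original (here, projective) model structure. So the proof splits naturally into two halves: first, identify the projective-fibrant presheaves of groupoids, and second, characterize the $S$-local condition and show it coincides with the homotopy sheaf condition.

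First I would record that every object of $\HH_1$ is projective-fibrant in the underlying model structure, because the projective model structure on $\mathbf{PSh}(\Cart,\Grpd)$ is built from the model structure on $\Grpd$ in which \emph{every} groupoid is fibrant (all objects of $\Grpd$ are fibrant, since fibrations are isofibrations and the terminal groupoid is clearly a target of an isofibration from any groupoid). Thus a morphism $\zeta: X\to Y$ is a projective fibration iff each $\zeta_U$ is an isofibration, and in particular $X\to \ast$ is always a projective fibration. This removes the fibrancy half of the $S$-local fibrant characterization, so that fibrant in $\HH_1$ reduces cleanly to the $S$-local condition alone.

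The main step is then to show that a presheaf $X$ is $S$-local, i.e.\ that the derived mapping space $\mathrm{map}(\hocolim U_\bullet, X)\to \mathrm{map}(U,X)$ is a weak equivalence of simplicial sets for every good open cover, if and only if $X$ satisfies the homotopy sheaf condition of Definition \ref{def:stack}. Here I would use the adjunction between $\hocolim$ and $\holim$: mapping out of a homotopy colimit into $X$ computes, up to weak equivalence, the homotopy limit of the cosimplicial diagram $[n]\mapsto \mathrm{map}(U_{\bullet,n}, X)$. Applying the Yoneda lemma degreewise (in its derived form, valid because representables are cofibrant in the projective structure) identifies $\mathrm{map}(U_{i_1\cdots i_n}, X)$ with (the nerve of) $X(U_{i_1\cdots i_n})$, and $\mathrm{map}(U,X)$ with $X(U)$. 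The $S$-local condition thereby becomes exactly the statement that $X(U)\to \holim\big(\prod_i X(U_i)\rightrightarrows \prod_{ij}X(U_{ij})\cdots\big)$ is a weak equivalence, which is Definition \ref{def:stack}.

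The hard part will be making the derived Yoneda / $(\hocolim \dashv \holim)$ identification rigorous and compatible: one must check that the cosimplicial groupoid computing the derived mapping space agrees, as an object of the homotopy category, with the \v{C}ech cosimplicial groupoid $X(U_\bullet)$ of Definition \ref{def:stack}, and that the concrete model of $\holim$ for cosimplicial groupoids from Lemma \ref{lem:holimcosimplicialgroupoid} is the one being computed. Since this is precisely the content established in the cited reference \cite{Hollander:0110247}, for a survey of this type it suffices to assemble the three ingredients above and appeal to \cite{Hollander:0110247} for the verification of the mapping-space computation.
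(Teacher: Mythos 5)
Your proposal is correct and takes essentially the same route as the paper, which in fact offers no argument of its own for this proposition beyond the citation to \cite{Hollander:0110247}: your outline (fibrant in a left Bousfield localization $=$ projective-fibrant $+$ local, with projective fibrancy automatic because every groupoid is fibrant, and locality unwinding to the homotopy sheaf condition via the derived Yoneda lemma and the $\mathrm{map}(\hocolim U_\bullet, X)\simeq \holim \mathrm{map}(U_\bullet,X)$ identification) is precisely the argument carried out in that reference, to which you also defer for the technical verification. Nothing to correct.
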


\begin{ex}
Each manifold $M\in\Man$ defines a stack
by composing its functor of points $C^\infty(-,M): \Cart^\op\to \Set$
with the inclusion $\Set \to \Grpd$. We shall denote the stack corresponding to a manifold
simply by $M\in\HH_1$. More generally, we have an inclusion
$\HH_0\to \HH_1$ of the category of generalized smooth spaces
into $\HH_1$ that takes values in stacks. 
\end{ex}

\begin{ex}\label{ex:homgroupoids}
Recall from Example \ref{ex:BGcon} the presheaf
of groupoids $BG^{\mathrm{con}} : \Cart^\op\to\Grpd$.
From the calculation in Example \ref{ex:BGcondescent}, 
it follows that this defines a stack $BG^{\mathrm{con}}\in\HH_1$,
which is called the {\em classifying stack of principal $G$-bundles 
with connections}. This requires some further explanations.
Let $M$ be a manifold, regarded as an object $M\in \HH_1$.
Computing the naive groupoid $\mathrm{hom}_{\HH_1}(M,BG^{\mathrm{con}})\in\Grpd$ 
of $\HH_1$-morphisms $M\to BG^{\mathrm{con}}$
(see e.g.\ \cite{Benini:2017zjv}), one obtains the groupoid whose objects are
$A\in\Omega^1(M,\mathfrak{g})$ and morphisms are
gauge transformations $A \to A\triangleleft h$, for $h\in C^\infty(M,G)$. 
At first sight that seems very strange, because the latter groupoid 
does not describe non-trivial principal $G$-bundles on $M$ and hence
the name classifying space for $BG^{\mathrm{con}}$ seems unjustified.
So what went wrong? It turns out that computing the groupoids
of $\HH_1$-morphisms $\mathrm{hom}_{\HH_1} : \HH_1^\op\times \HH_1\to \Grpd$
is one of the (many) instances where derived functors are crucial. 
So what went wrong is that we forgot to derive this functor!
Because $BG^{\mathrm{con}}\in\HH_1$ is a fibrant object by 
Proposition \ref{prop:stacksfibrant}, a model for the derived
groupoid of $\HH_1$-morphisms is given by
$\bbR\mathrm{hom}_{\HH_1}(M,BG^{\mathrm{con}})=$\linebreak $ \mathrm{hom}_{\HH_1}(Q M,BG^{\mathrm{con}})$,
where $Q M \to M$ is a cofibrant replacement of the manifold $M$ in $\HH_1$. 
Using as in \cite[Appendix B]{Benini:2017zjv} 
a good open cover $\{U_i\subseteq M\}$ to define a cofibrant replacement
of $M$, one immediately realizes that $\bbR\mathrm{hom}_{\HH_1}(M,BG^{\mathrm{con}})\in\Grpd$
can be computed precisely as the homotopy limit of the cosimplicial groupoid 
displayed in \eqref{eqn:BGconM}. Therefore, recalling Example \ref{ex:BGcondescent},\linebreak
$\bbR\mathrm{hom}_{\HH_1}(M,BG^{\mathrm{con}})\in\Grpd$
is the correct groupoid of all principal $G$-bundles with connections on $M$
(together with their gauge transformations), 
eventually justifying the interpretation of $BG^{\mathrm{con}}$ 
as classifying stack of principal $G$-bundles with connections.
\end{ex}

\begin{ex}\label{ex:GConstack}
Our original aim of this section was to describe a smooth structure on the groupoids 
of gauge fields. This can now be achieved by working within 
the framework of stacks that we discussed above. Let us briefly review how a moduli stack
of gauge fields can be constructed by performing (derived functorial)
constructions in the model category $\HH_1$. As input data,
we choose any manifold $M$ (on which the gauge fields should live)
and any Lie group $G$ (the structure group of the gauge theory).
Recalling the previous two examples, we obtain the two stacks
$M\in\HH_1$ and $BG^{\mathrm{con}}\in\HH_1$. Using that
$\HH_1$ is a Cartesian closed model category (even better, it is a {\em higher topos}),
one can form the derived internal hom object $\bbR[M,BG^{\mathrm{con}}]\in\HH_1$,
which one should interpret similarly to Example \ref{ex:mappingspaces} as a
{\em stack of mappings} from $M$ to $BG^{\mathrm{con}}$. From Example 
\ref{ex:homgroupoids}, we know that the groupoid of
points $\bbR^0 \to \bbR[M,BG^{\mathrm{con}}]$ is the groupoid
of all principal $G$-bundles with connections  on $M$, i.e.\ the 
groupoid of all gauge fields. Unfortunately, the smooth structure
on the mapping stack $\bbR[M,BG^{\mathrm{con}}]$ is not the desired
one, because, as one can show by a direct computation (cf.\ \cite{Benini:2017zjv}),  
a smooth curve $\bbR^1\to \bbR[M,BG^{\mathrm{con}}]$
is given by a principal $G$-bundle with connection on the product manifold 
$\bbR^1\times M$ and not by an $\bbR^1$-parametrized family of principal 
$G$-bundles with connections on $M$. The solution to this issue that
was proposed in \cite{Schreiber:2013pra} and refined in \cite{Benini:2017zjv}
is to perform a {\em differential concretification} of the mapping stack
$\bbR[M,BG^{\mathrm{con}}]$. Loosely speaking, this is a model categorical construction
that `kills off' the bundles and connections on the test spaces $\bbR^n\in\Cart$.
(This can be interpreted in terms of vertical geometry with respect to the 
projection $\mathrm{pr}_1: \bbR^n \times M \to \bbR^n$ onto test spaces.)
As this construction is quite technical, we refer to the original papers for the details.
This defines a new stack $\mathfrak{Con}_G(M)\in\HH_1$, called
the differential concretification of $\bbR[M,BG^{\mathrm{con}}]\in\HH_1$,
which describes our desired moduli stack of principal
$G$-bundles with connections on a manifold $M$. This construction is functorial in
the sense that $\mathfrak{Con}_G : \Man^\op\to \HH_1$ defines a functor
taking values in stacks. (Strictly speaking, this requires the choice of a 
{\em functorial} cofibrant replacement for manifolds, e.g.\ the one 
in \cite[Appendix B]{Benini:2017zjv}.) As a last remark, we would like to add
that moduli stacks of solutions to, e.g., the non-Abelian Yang-Mills equation
or the Chern-Simons equation can also be constructed from such a perspective.
See  \cite{Benini:2017zjv} for the details on Yang-Mills theory
and \cite{Fiorenza:2013jz} for Chern-Simons theory.
\end{ex}

\begin{rem}\label{rem:inftystacks}
We conclude this section by briefly commenting on how to describe
$\infty$-stacks from a model categorical perspective. See e.g.\ 
\cite{Dugger:0205027,Dugger:0007070} for the details. Let us recall
that an explicit model for the $\infty$-category $\infty\Grpd$
is given by endowing the category $\sSet = \Set^{\Delta^\op}$ of simplicial sets
with the usual Kan-Quillen model structure. The fibrant objects in
this model category are the Kan complexes, which are a model for $\infty$-groupoids.
Instead of $\Grpd$-valued presheaves on $\Cart$, the description
of $\infty$-stacks starts from the category $\mathbf{PSh}(\Cart,\sSet)$ of
presheaves with values in $\sSet$. The left Bousfield localization of the projective
model structure at all hypercovers defines the model category
\begin{flalign}
\HH_\infty\,:=\,\mathbf{PSh}(\Cart,\sSet)_\mathrm{loc}\quad.
\end{flalign}
$\infty$-stacks are then by definition the fibrant objects in $\HH_\infty$.
Similarly to Definition \ref{def:stack} and Proposition \ref{prop:stacksfibrant}, 
$\infty$-stacks can be characterized  by a suitable homotopy sheaf condition 
with respect to hypercovers. Finally, the inclusions from Remark 
\ref{rem:setgrpdinftygrpd} generalize to
\begin{flalign}
\HH_0 \hookrightarrow \HH_1\hookrightarrow\HH_2\hookrightarrow\cdots\hookrightarrow\HH_\infty\quad,
\end{flalign}
which means that all generalized smooth spaces, stacks, $2$-stacks,\dots,
can be regarded as particular examples of $\infty$-stacks.
\end{rem}


\subsection{\label{subsec:cochainalgebras}Smooth cochain algebras on stacks}
In the previous sections we have seen that higher structures
are crucial for the description of `spaces' of gauge fields, which
are in fact higher categorical spaces called stacks. Thinking ahead towards
QFT, which requires a concept of observable algebras,
we would like to explain what it means to form `function algebras' 
on stacks. Our statements below are formulated for the more general
case of $\infty$-stacks in $\HH_\infty$, because this does not lead
to any further complications compared to the case of $1$-stacks in $\HH_1$.

Before explaining our concept of smooth cochain algebras on 
$\infty$-stacks, we would like to start with the related, but simpler,
case of cochain algebras on simplicial sets. See e.g.\ \cite{Benini:2018oeh} 
for a more extensive review. In the following $k$ will be a field of characteristic $0$, 
e.g.\ $k=\bbR$ or $k=\bbC$, and $\Ch(k)$ the symmetric monoidal model category 
of (possibly unbounded) chain complexes of $k$-vector spaces, see e.g.\ \cite{hovey2007model}.
Let us recall that in this model structure a morphism $f : V\to W$ between two chain complexes is
\begin{enumerate}[i)]
\item a weak equivalence if it is a quasi-isomorphism, i.e.\ it induces an isomorphism
$H_\bullet(f) : H_\bullet(V)\to H_\bullet (W)$ in homology;
\item a fibration if it is degree-wise surjective;
\item a cofibration if it has the left lifting property with respect
to all acyclic fibrations.
\end{enumerate}
Recall that to every simplicial set $S\in\sSet$, one can associate
the chain complex $N_\ast(S,k)\in\Ch(k)$ of {\em normalized $k$-valued 
chains} on $S$. The functor $N_\ast(-,k) : \sSet \to \Ch(k)$
is the left adjoint of a Quillen adjunction between the model category 
$\sSet$, with the Kan-Quillen model structure, and the model category $\Ch(k)$.
Composing $N_\ast(-,k)$ with the internal hom functor $[-,k] : \Ch(k)\to\Ch(k)^\op$
for chain complexes, which is also a left Quillen functor, defines a left
Quillen functor $N^\ast(-,k) : \sSet\to\Ch(k)^\op$ that assigns to a
simplicial set its {\em normalized $k$-valued cochains}. By \cite{Berger:0109158},
the latter are canonically $E_\infty$-algebras, i.e.\ homotopy-coherently commutative differential graded 
algebras.  (We refer to Section \ref{sec:homotopy} for more details on homotopy algebras over operads.)
Summing up, we obtained a left Quillen functor
\begin{flalign}
N^\ast(-,k) \,:\,\sSet\longrightarrow \Alg_{\E_\infty}(\Ch(k))^\op
\end{flalign}
that assigns to each simplicial set $S\in\sSet$ its normalized cochain algebra 
$N^\ast(S,k)\in\Alg_{\E_\infty}(\Ch(k))$. Because all simplicial sets
are cofibrant in the Kan-Quillen model structure, this functor preserves
weak equivalences and does not have to be derived.
\begin{ex}
Every set $S\in\Set$ can be regarded as a constant simplicial
set that we also denote by $S\in\sSet$. The normalized cochain algebra
in this case is a chain complex concentrated in degree $0$ with trivial differential,
i.e.\ it is just a vector space.
A concrete calculation shows that $N^\ast(S,k) = \mathrm{Map}(S,k)$ 
is the usual commutative algebra of $k$-valued functions on the set $S$.
\end{ex}
\begin{ex}
More interestingly, let $\mathfrak{G}\in\Grpd$ be a groupoid
and consider its nerve $B\mathfrak{G}\in\sSet$. Then the normalized
cochain algebra $N^\ast(B\mathfrak{G},k)$ is precisely the usual 
$k$-valued groupoid cohomology dg-algebra, see e.g.\ \cite{Crainic:2003:681-721}.
Note that this is in general not a strictly commutative dg-algebra, but an $E_\infty$-algebra.
Thinking of $\mathfrak{G}\in\Grpd$ as a groupoid of gauge fields, 
the dg-algebra $N^\ast(B\mathfrak{G},k)$ describes both functions
of gauge fields (in degree $0$) and functions of ghost fields
(in homological degree $<0$, or in cohomological degree $>0$).
In fact, $N^\ast(B\mathfrak{G},k)$ is a groupoid version of the usual Chevalley-Eilenberg
dg-algebra from Lie algebroid cohomology.
\end{ex}

The construction of cochain algebras above can be generalized to the case
of $\infty$-stacks. We shall provide a brief sketch and refer to \cite{Benini:2018oeh} for the technical 
details. Let us consider for the moment the case where all presheaf categories
(with values in model categories) are endowed with the projective model structures.
Applying the normalized chain functor $N_\ast(-,k) : \sSet\to \Ch(k)$ object-wise
on presheaves defines a left Quillen functor that we denote with abuse of notation
by the same symbol
\begin{flalign}\label{eqn:Nchaininfty}
N_\ast(-,k) : \mathbf{PSh}(\Cart,\sSet)\longrightarrow \mathbf{PSh}(\Cart,\Ch(k))\quad.
\end{flalign}
Concretely, given a presheaf of simplicial sets $X : \Cart^\op \to \sSet$,
then $N_\ast(X,k) : \Cart^\op\to\Ch(k)$ is the presheaf of chain
complexes defined by $N_\ast(X,k)(U) := N_\ast(X(U),k)\in\Ch(k)$, for all $U\in\Cart$.
Assuming in the following that $k=\bbR$ or $k=\bbC$, we can promote
$k$ to an object $\und{k} \in \mathbf{PSh}(\Cart,\Ch(k))$
by setting $\und{k}(U) := C^\infty(U,k)$ to be the vector space of $k$-valued
smooth functions, for all $U\in\Cart$. Because $\mathbf{PSh}(\Cart,\Ch(k))$ is enriched
over $\Ch(k)$ and $\und{k}$ is a fibrant object, there is a left Quillen functor 
\begin{flalign}\label{eqn:hominfty}
[-,\und{k}]^\infty \,:\, \mathbf{PSh}(\Cart,\Ch(k))\longrightarrow\Ch(k)^\op
\end{flalign}
that  assigns chain complexes of morphisms.
This functor should be understood as taking smooth $k$-valued functions. Concretely,
given any object $V\in \mathbf{PSh}(\Cart,\Ch(k))$, one has an explicit description 
by an end
\begin{flalign}
[V,\und{k}]^\infty  = \int_{U\in\Cart^\op} \big[V(U),\und{k}(U)\big]\quad,
\end{flalign}
where $[-,-]$ denotes the internal hom functor in $\Ch(k)$.
Composing \eqref{eqn:Nchaininfty} and \eqref{eqn:hominfty}
defines a left Quillen functor
\begin{flalign}\label{eqn:Ncochaininfty}
N^{\infty \ast}(-,k)\,:\, \mathbf{PSh}(\Cart,\sSet)\longrightarrow \Alg_{\E_{\infty}}(\Ch(k))^\op
\end{flalign}
that assigns $E_\infty$-algebras because of \cite{Berger:0109158}.
We shall call this the {\em smooth normalized cochain algebra functor}.
\begin{propo}
The left derived functor $\bbL N^{\infty \ast}(-,k)$ of \eqref{eqn:Ncochaininfty}
restricts to a homotopical functor on the full 
subcategory $\mathbf{St}_\infty \subseteq \HH_\infty$ of $\infty$-stacks 
(the fibrant objects in $\HH_\infty$), i.e.\
\begin{flalign}\label{eqn:NcochaininftySTACKS}
\bbL N^{\infty \ast}(-,k)\,:\, \mathbf{St}_\infty \longrightarrow \Alg_{\E_{\infty}}(\Ch(k))^\op
\end{flalign}
is a functor that preserves weak equivalences between $\infty$-stacks.
\end{propo}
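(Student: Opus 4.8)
The plan is to exploit the fact that $N^{\infty\ast}(-,k)$ is already a \emph{left Quillen} functor with respect to the \emph{projective} model structure on $\mathbf{PSh}(\Cart,\sSet)$ (it was constructed in \eqref{eqn:Ncochaininfty} as a composite of the left Quillen functors \eqref{eqn:Nchaininfty} and \eqref{eqn:hominfty}), together with the fact that the local model structure $\HH_\infty$ arises from the projective one by left Bousfield localization. The essential subtlety is that the weak equivalences of $\HH_\infty$ strictly contain the projective (objectwise) weak equivalences, so $\bbL N^{\infty\ast}(-,k)$ cannot be homotopical on all of $\HH_\infty$; the key point to establish is that it becomes homotopical after restriction to the fibrant objects, i.e.\ to $\infty$-stacks.

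First I would pin down the left derived functor. Since left Bousfield localization leaves the class of cofibrations unchanged, the cofibrant objects of $\HH_\infty$ coincide with the projectively cofibrant presheaves, and a projective cofibrant replacement functor $Q$ simultaneously serves as a cofibrant replacement in $\HH_\infty$. Hence $\bbL N^{\infty\ast}(-,k) = N^{\infty\ast}(Q(-),k)$ is a legitimate model for the left derived functor. Because $N^{\infty\ast}(-,k)$ is left Quillen for the projective structure, Ken Brown's lemma guarantees that $\bbL N^{\infty\ast}(-,k)$ sends \emph{projective} weak equivalences to weak equivalences in $\Alg_{\E_\infty}(\Ch(k))^\op$.

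Second, I would invoke the defining property of left Bousfield localizations: a morphism between $\HH_\infty$-fibrant objects is a weak equivalence in $\HH_\infty$ if and only if it is a weak equivalence in the underlying projective model structure (see e.g.\ \cite{Dugger:0205027} and the general theory in \cite{hovey2007model}). By construction (cf.\ the characterization of $\infty$-stacks as homotopy sheaves, analogous to Definition \ref{def:stack} and Proposition \ref{prop:stacksfibrant}), the objects of $\mathbf{St}_\infty$ are exactly the fibrant objects of $\HH_\infty$, so any $\HH_\infty$-weak equivalence $f:X\to Y$ between $\infty$-stacks is automatically a projective weak equivalence. Combining this with the previous step shows that $\bbL N^{\infty\ast}(f,k)$ is a weak equivalence, which is precisely the claim that \eqref{eqn:NcochaininftySTACKS} is a homotopical functor.

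The step I expect to carry the real weight is the second one, namely the passage from local to objectwise weak equivalences for maps of fibrant objects; everything else is routine bookkeeping about derived functors of left Quillen functors. In particular, I would take care that the cited characterization of the fibrant objects of $\HH_\infty$ is genuinely the one defining $\mathbf{St}_\infty$, so that the general localization fact applies verbatim. No explicit manipulation of the end defining $[V,\und{k}]^\infty$ is required, since its homotopical behaviour is already encapsulated in the left Quillen property used above.
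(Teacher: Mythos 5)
Your proof is correct and is essentially the intended argument; the paper states this proposition without an inline proof (deferring to the references), and your route is the standard one it relies on. The two ingredients you isolate --- that left Bousfield localization preserves cofibrations, so a projective cofibrant replacement together with Ken Brown's lemma for the projective left Quillen functor $N^{\infty\ast}(-,k)$ computes $\bbL N^{\infty\ast}(-,k)$ and makes it homotopical for objectwise weak equivalences, and that a morphism between fibrant (i.e.\ local) objects of a left Bousfield localization is a local weak equivalence if and only if it is a weak equivalence in the underlying projective structure --- are exactly what is needed, and you correctly identify the second as the step carrying the weight, since $\mathbf{St}_\infty$ is by definition the class of fibrant objects of $\HH_\infty$.
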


\begin{rem}
In contrast to normalized cochain algebras on simplicial sets, the functor \eqref{eqn:Ncochaininfty}
must be derived because not every presheaf of simplicial sets is a cofibrant object.
A relatively concrete model for cofibrant replacement in this case
is given by Dugger in \cite{Dugger:0007070}.
\end{rem}

\begin{ex}\label{ex:evidencehocostructures}
Let $\CC$ be a category (of space-times) 
and $\mathfrak{F} : \CC^\op \to \mathbf{St}_\infty \subseteq \HH_\infty$
a functor that assigns  to each space-time $c\in\CC$ an $\infty$-stack
$\mathfrak{F}(c)$ of gauge fields on $c$. For example,
this could be the functor $\mathfrak{Con}_G : \Man^\op \to \mathbf{St}_\infty$ 
assigning the moduli stacks of principal $G$-bundles with connections
(cf.\  Example \ref{ex:GConstack}) or the functor
$\mathfrak{YM}_G : \Loc^\op\to \mathbf{St}_\infty$ assigning
the moduli stacks of solutions of the Yang-Mills equation 
(cf.\ \cite{Benini:2017zjv}). Applying \eqref{eqn:NcochaininftySTACKS}
defines for each space-time $c\in\CC$ an $E_\infty$-algebra
\begin{flalign}
\AAA(c) \,:=\, \bbL N^{\infty \ast}\big(\mathfrak{F}(c),k\big)\in \Alg_{\E_\infty}(\Ch(k))
\end{flalign}
that one should interpret as a classical observable algebra for
the gauge fields on $c$. Functoriality of this construction implies that
$\AAA : \CC\to\Alg_{\E_\infty}(\Ch(k))$ is a covariant functor
on $\CC$, which is very similar to the structures considered in AQFT, cf.\ Section \ref{sec:operads}.
There are however two main differences:
1.)~The observable algebras that are assigned here are dg-algebras, i.e.\ associative and unital
algebras in chain complexes, not ordinary algebras in vector spaces.
The higher structures in the $\infty$-stacks of gauge fields are 
represented by higher homology groups of these dg-algebras. 
2.)~Even though no quantization happened so far, the observable algebras are not strictly commutative,
but commutative up to coherent homotopies. Therefore, in order to 
understand gauge theories in AQFT, one is naturally lead to consider homotopy-coherent algebraic structures.
This will be formalized in Section \ref{sec:homotopy}.
\end{ex}

\begin{problem}
For applications to gauge theory, the $\infty$-stacks $X$ typically
carry a Poisson structure (or symplectic structure) which is determined 
by the action functional. It is currently unclear to us how one can 
construct (in a homotopically meaningful way) a Poisson bracket on
the $E_\infty$-algebra $\bbL N^{\infty \ast}(X,k)$. It is even 
more unclear to us how one can quantize (in a homotopically meaningful way)
such homotopy-coherent versions of Poisson dg-algebras, which is required
for constructing examples of quantum gauge theories.
Quite recently there have been impressive developments in derived
algebraic geometry \cite{Pantev:1111.3209,Calaque:1506.03699} that focus on related questions
and we hope to see some fruitful interplay with this discipline  in the future.
\end{problem}


\subsection{\label{subsec:derivedgeometry}Derived geometry of linear gauge fields}
There is a second kind of higher structures in (gauge) field theory
that has a different origin than the groupoid (or $\infty$-groupoid)
structures discussed in the previous sections. Summarizing the latter in a single 
sentence, groupoids and stacks become important whenever one divides 
out gauge symmetries that usually do not act freely on the gauge fields.
(Recall the role of stabilizers at the beginning of Section \ref{subsec:groupoids}.)
In a (gauge) field theory one typically starts from a space or stack $\mathfrak{F}$ of fields 
together with an action functional $S : \mathfrak{F}\to \bbR$. The aim is then
to describe the space of solutions of the corresponding Euler-Lagrange equations,
which can be obtained from the following construction: First, one considers the
variation of the action $S$, which defines a section $\dd S : \mathfrak{F} \to T^\ast \mathfrak{F}$ 
of the cotangent bundle over the space (or stack) of fields. Informally, the 
space of solutions is the `subspace' $\mathfrak{Sol} \subseteq \mathfrak{F}$ on which 
$\dd S: \mathfrak{F}\to T^\ast \mathfrak{F}$
coincides with zero section $0 : \mathfrak{F}\to T^\ast \mathfrak{F}$.
This is formalized by forming the (homotopy) pullback 
\begin{flalign}\label{eqn:criticallocus}
\xymatrix{
\ar@{-->}[d] \mathfrak{Sol} \ar@{-->}[r] & \ar@{}[dl]_-{h~~~~~} \mathfrak{F} \ar[d]\ar[d]^-{\dd S}\\
\mathfrak{F} \ar[r]_-{0} & T^\ast\mathfrak{F}
}
\end{flalign}
in the appropriate (model) category of spaces or stacks.
Geometrically, one can interpret this construction as computing the intersection
of $\dd S$ with the zero section $0$. 

For a generic action functional $S$, the intersection in \eqref{eqn:criticallocus} will be far 
away from being transversal and hence the space (or stack) $\mathfrak{Sol}$
can be badly behaved. Furthermore, it ignores more refined information about the intersection
problem, such as the multiplicities of multiple intersections.
A solution to these problems is proposed by derived algebraic geometry \cite{Pantev:1111.3209,Calaque:1506.03699},
where a more refined concept of spaces, called {\em derived $\infty$-stacks}, is developed. 

Let us explain very briefly the basic idea behind derived $\infty$-stacks, without going into
any technical details. Recall from Section \ref{subsec:stacks} that an $\infty$-stack
is described by its functor of points $X : \Cart^\op\to \sSet$ that assigns
to each Cartesian space $U\in\Cart$ the $\infty$-groupoid $X(U)$ of points of
shape $U$ in $X$. Recall that the latter encode both the gauge fields and the (higher) gauge symmetries. A derived
$\infty$-stack is a more refined concept that is described by a
functor of points of the form $X : \mathbf{cCart}^\op \to \sSet$, where
$\mathbf{cCart} = \Cart^{\Delta}$ are {\em cosimplicial test spaces}.
(In algebraic geometry, these are described by the opposite category
of simplicial commutative $k$-algebras $\mathbf{sCAlg}_k$, cf.\  \cite{Pantev:1111.3209,Calaque:1506.03699}.)
Note that there are two opposite degrees appearing in a derived $\infty$-stack:
The `stacky' simplicial degree in the target category $\sSet$ and the `derived'
cosimplicial degree in the source category $\mathbf{cCart}$. Morally speaking, 
the former encodes refined aspects of gauge symmetries and the latter 
encodes refined aspects of intersections. We will later see that these two different
degrees are related to ghost fields and anti fields in the BRST/BV formalism.

Working with derived $\infty$-stacks is very hard. In particular, we are not yet
able to describe physically interesting examples of solutions spaces of gauge 
theories within this approach. (Toy-models of such are discussed in \cite{Pantev:1111.3209,Calaque:1506.03699}.)
In what follows we shall focus on a certain approximation of the structures
appearing in derived algebraic geometry, which however encodes some of the
crucial features of this approach. Let us motivate this approximation.
As mentioned above, derived $\infty$-stacks come with two degrees, 
`stacky' degrees in $\sSet$ and `derived' degrees in $\mathbf{cCart}$.
If one restricts to linear spaces and linear maps between them, the Dold-Kan 
correspondence allows us to describe the `stacky' degrees by
non-negatively graded chain complexes $\Ch_{\geq 0}(k)$ and the
`derived' degrees by non-positively graded chain complexes $\Ch_{\leq 0}(k)$.
Our working assumption below is thus that unbounded chain complexes
$\Ch(k)$ capture linear features of derived $\infty$-stacks.
A similar perspective is taken in
the work of Costello and Gwilliam \cite{costello2016factorization}.

Let us now focus on a very simple example to illustrate the main features
of derived $\infty$-stacks in field theory. In what follows we fix 
$k=\bbR$ and take any oriented and time-oriented globally hyperbolic 
Lorentzian manifold $M\in\Loc$, interpreted as space-time. 
We consider Abelian gauge fields with structure group 
$G=\bbR$. Because there are no non-trivial principal $\bbR$-bundles,
the groupoid of gauge fields (cf.\ Example \ref{ex:BGcon}) on $M$ is given by
\begin{flalign}
BG^\mathrm{con}(M) \,=\,\begin{cases}
\text{Obj:} &  A\in \Omega^1(M)\\
\text{Mor:} & A\stackrel{\epsilon}{\longrightarrow} A + \dd \epsilon \\
& \text{with }  \epsilon \in C^\infty(M)
\end{cases}\quad\quad.
\end{flalign}
Following \cite{Benini:2015hta}, one easily computes the nerve of this groupoid 
and, after applying the Dold-Kan correspondence to the resulting 
simplicial vector space, obtains the chain complex
\begin{flalign}
\mathfrak{F} = \Big(
\xymatrix@C=1em{
\stackrel{(0)}{\Omega^1(M)} & \ar[l]_-{\dd} \stackrel{(1)}{\Omega^0(M)}
}
\Big)\quad,
\end{flalign}
where we indicated  in round brackets the homological degrees
and identified functions with $0$-forms $\Omega^0(M)=C^\infty(M)$.
As expected, this chain complex has only `stacky' positive degrees and 
no `derived' negative degrees. 

As action functional we take the ordinary Abelian Yang-Mills action
on the space-time $M$, i.e.\ $S = \int_M \frac{1}{2} \dd A \wedge \ast \dd A$,
where $\ast$ is the Hodge operator. A naive variation of this action
leads to the Abelian Yang-Mills equation $\delta \dd A = 0$,
where $\delta$ is the codifferential, i.e.\ the formal adjoint of $\dd$
with respect to the  inner product 
$\langle \omega,\lambda\rangle = \int_M \omega \wedge \ast \lambda$ on $p$-forms.
We would like to describe the space of solutions of the Abelian Yang-Mills
equation from the perspective of derived geometry by computing a 
homotopy pullback as in \eqref{eqn:criticallocus}, which is also called 
the {\em derived critical locus}.  For this we have to introduce a cotangent bundle $T^\ast \mathfrak{F}$
over the space of fields $\mathfrak{F}$. 
In line with our working assumption that $\mathfrak{F} \in \Ch(k)$
models a `linear space' (in the sense of being a chain complex of $k$-vector spaces),
it is reasonable to {\em define} the cotangent bundle
\begin{flalign}
T^\ast \mathfrak{F} \,:=\, \mathfrak{F}\times \mathfrak{F}^\ast\,\in\Ch(k)
\end{flalign}
as the product in $\Ch(k)$ of the chain complex of fields and an appropriate 
choice of linear dual of that. Choosing the `smooth dual'
\begin{flalign}\label{eqn:Fdual}
\mathfrak{F}^\ast \,:=\,  \Big(
\xymatrix@C=1em{
 \stackrel{(-1)}{\Omega^0(M)} & \ar[l]_-{-\delta} \stackrel{(0)}{\Omega^1(M)} 
}
\Big)\quad,
\end{flalign}
one obtains
\begin{flalign}
\resizebox{.87\hsize}{!}{$T^\ast \mathfrak{F} \,=\,
\Big(
\xymatrix@C=1em{
 \!\!\!\!\stackrel{(-1)}{\Omega^0(M)} & \ar[l]_-{-\delta\pi_2} \stackrel{(0)}{\Omega^1(M) \times \Omega^1(M)} & \ar[l]_-{\iota_1 \dd}  \stackrel{(1)}{\Omega^0(M)} \!\!\!\!
}
\Big)~~,$}
\end{flalign}
where $\iota_1 : \Omega^1(M)\to \Omega^1(M) \oplus \Omega^1(M) =  \Omega^1(M) \times \Omega^1(M)$
is the inclusion of the first factor and $\pi_2 : \Omega^1(M) \times \Omega^1(M)\to \Omega^1(M)$
the projection on the second factor.
\begin{problem}\label{problem:duals}
Note that the `smooth dual' $\mathfrak{F}^\ast$ we have chosen 
in \eqref{eqn:Fdual} is {\em not} obtained via the categorical concept 
of dual chain complexes. In fact, the dualizable objects in $\Ch(k)$ are perfect
complexes, i.e.\ chain complexes that are quasi-isomorphic
to a bounded chain complex of finite-dimensional $k$-vector spaces,
however $\mathfrak{F}\in\Ch(k)$ is clearly not perfect.
As a consequence, it is presently not clear to us if the construction
of the cotangent bundle $T^\ast \mathfrak{F}$ preserves weak equivalences
in $\Ch(k)$. According to our best knowledge, it is an open problem
how to formalize a model categorical (or higher categorical) concept
of `smooth duals' as in \eqref{eqn:Fdual}.
\end{problem}

The variation of the action $S$ defines a $\Ch(k)$-mor-\linebreak phism
$\dd S : \mathfrak{F}\to T^\ast \mathfrak{F}$
that is concretely given by
\begin{flalign}
\xymatrix@C=2em@R=1.5em{
\ar[d]_-{0} 0 & \ar[l]_-{0} \Omega^1(M) \ar[d]_-{(\id,\delta\dd)}& \ar[l]_-{\dd} \Omega^0(M)\ar[d]^-{\id}\\
\Omega^0(M) &\ar[l]^-{-\delta \pi_2} \Omega^1(M)\times\Omega^1(M) &\ar[l]^-{\iota_1\dd} \Omega^0(M)
}
\end{flalign}
Moreover, the zero section $0 : \mathfrak{F}\to T^\ast\mathfrak{F}$
is the $\Ch(k)$-morphism given by
\begin{flalign}\label{eqn:zerosection}
\xymatrix@C=2em@R=1.5em{
\ar[d]_-{0} 0 & \ar[l]_-{0} \Omega^1(M) \ar[d]_-{(\id,0)}& \ar[l]_-{\dd} \Omega^0(M)\ar[d]^-{\id}\\
\Omega^0(M) &\ar[l]^-{-\delta \pi_2} \Omega^1(M)\times\Omega^1(M) &\ar[l]^-{\iota_1\dd} \Omega^0(M)
}
\end{flalign}
\begin{propo}\label{propo:criticallocus}
Consider as above Abelian Yang-Mills theory with structure group $G=\bbR$
on a space-time $M\in\Loc$. A model for the corresponding homotopy pullback 
\eqref{eqn:criticallocus} in $\Ch(k)$ is given by
\begin{flalign}\label{eqn:YMcomplex}
\resizebox{.87\hsize}{!}{$\mathfrak{Sol} = 
\Big(
\xymatrix@C=1em{
 \!\!\!\!\stackrel{(-2)}{\Omega^0(M)}& \ar[l]_-{\delta } \stackrel{(-1)}{\Omega^1(M)} & \ar[l]_-{\delta\dd} \stackrel{(0)}{\Omega^1(M)} & \ar[l]_-{\dd}  \stackrel{(1)}{\Omega^0(M)} \!\!\!\!
}
\Big)~~.$}
\end{flalign}
\end{propo}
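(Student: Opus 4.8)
The plan is to compute the homotopy pullback \eqref{eqn:criticallocus} by the standard explicit model in $\Ch(k)$ and then discard a contractible summand. Since $k$ is a field, every object of $\Ch(k)$ is both fibrant and cofibrant, so the homotopy pullback of a cospan $A\xrightarrow{f}C\xleftarrow{g}B$ may be computed without any replacement as the $(-1)$-shifted mapping cone $P:=\mathrm{cone}\big(A\oplus B\xrightarrow{(f,-g)}C\big)[-1]$; this $P$ carries exactly the Mayer--Vietoris sequence $\cdots\to H_n(P)\to H_n(A)\oplus H_n(B)\xrightarrow{f_\ast-g_\ast}H_n(C)\to\cdots$ characterising a homotopy pullback. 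First I would specialise this to $A=B=\mathfrak{F}$ and $C=T^\ast\mathfrak{F}=\mathfrak{F}\times\mathfrak{F}^\ast$, with $f=\dd S$ and $g$ the zero section. The crucial structural observation is that both legs are sections of the projection $T^\ast\mathfrak{F}\to\mathfrak{F}$, so that $\dd S=(\id,\phi)$ and $0=(\id,0)$ differ only in their vertical component $\phi:\mathfrak{F}\to\mathfrak{F}^\ast$, the linearised Euler--Lagrange map, which here is $\delta\dd:\Omega^1(M)\to\Omega^1(M)$ in homological degree $0$ and vanishes in all other degrees. One should check en passant that $\phi$ is a chain map, which reduces to $\dd\dd=0$ and $\delta\delta=0$.

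Next I would simplify the difference map $\Xi:\mathfrak{F}\oplus\mathfrak{F}\to T^\ast\mathfrak{F}$, $(a,b)\mapsto(a-b,\phi(a))$. The invertible change of coordinates $a'=a$, $u=a-b$ on the source turns $\Xi$ into $(a',u)\mapsto(u,\phi(a'))$, which is the direct sum $\phi\oplus\id_{\mathfrak{F}}$ of two chain maps. Hence the cone splits, $\mathrm{cone}(\Xi)\cong\mathrm{cone}(\phi)\oplus\mathrm{cone}(\id_{\mathfrak{F}})$, and since $\mathrm{cone}(\id_{\mathfrak{F}})$ is contractible it drops out up to quasi-isomorphism. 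This identifies the homotopy pullback with $\mathrm{cone}(\phi)[-1]$, i.e.\ with the homotopy fibre of $\phi$ over the zero section. The final step is then the purely mechanical computation of $\mathrm{cone}(\phi)[-1]$ degree by degree: in homological degrees $1,0,-1,-2$ one reads off $\Omega^0(M)$, $\Omega^1(M)$, $\Omega^1(M)$, $\Omega^0(M)$, with differentials $\dd$ (inherited from the differential of $\mathfrak{F}$), $\delta\dd$ (the map $\phi$ itself) and $\delta$ (inherited from the differential $-\delta$ of $\mathfrak{F}^\ast$, the sign being absorbed by the shift). This is precisely the complex \eqref{eqn:YMcomplex}.

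I expect the main obstacle to be purely a matter of bookkeeping: keeping the three superimposed differentials (that of $\mathfrak{F}$, that of $\mathfrak{F}^\ast$, and the vertical map $\phi$) consistent with the degree shift and the cone sign conventions, so that the signs in \eqref{eqn:YMcomplex} come out correctly. The conceptually delicate point --- that the homotopy pullback of two sections collapses to the homotopy fibre of their vertical difference --- is handled cleanly by the coordinate change above, and no analytic input about $M$ is required. Finally, it is worth remarking that, although $\mathfrak{F}$ and $\mathfrak{F}^\ast$ are not dualisable objects of $\Ch(k)$ (cf.\ Open Problem \ref{problem:duals}), this does not affect the argument, since the homotopy pullback is computed entirely by the homological algebra of the given chain maps.
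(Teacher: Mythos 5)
Your proof is correct, but it takes a genuinely different route from the paper's. The paper argues purely model-categorically: since every object of $\Ch(k)$ is fibrant, $\Ch(k)$ is right proper, so the homotopy pullback may be computed as an \emph{ordinary} pullback after replacing one leg by a weakly equivalent fibration; it then builds this replacement explicitly, factoring $0\to \mathfrak{F}^\ast$ through $D\otimes\mathfrak{F}^\ast$ with $D = \big(k\xleftarrow{\;\id\;}k\big)$ concentrated in degrees $-1,0$, and reads off \eqref{eqn:YMcomplex} as the strict pullback of $\dd S$ against $\widetilde{0}=\id\times p$. You instead invoke the stable structure of $\Ch(k)$: the homotopy pullback is the $(-1)$-shifted cone of the difference map (Mayer--Vietoris model), and your coordinate change $(a,b)\mapsto(a,a-b)$ splits that cone as $\mathrm{cone}(\phi)\oplus\mathrm{cone}(\id_{\mathfrak{F}})$, after which the contractible summand is discarded. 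The two reductions are parallel --- the paper exploits that the zero section is $\id\times 0$ to reduce the fibrant replacement problem to $0\to\mathfrak{F}^\ast$, while your substitution exploits that \emph{both} legs are sections --- but the mechanics differ, and each has its virtues. The paper's argument does not use stability at all, so it is the kind of reasoning that survives in non-stable settings (e.g.\ stacks of nonlinear fields), and it produces \eqref{eqn:YMcomplex} on the nose, signs included, as an honest pullback. Your argument is shorter, isolates the conceptual content --- the derived critical locus is the homotopy fibre of the linearised Euler--Lagrange map $\phi=(0,\delta\dd,0):\mathfrak{F}\to\mathfrak{F}^\ast$ --- and makes the remark following Proposition \ref{propo:criticallocus} immediate, since $\mathrm{cone}(\phi)[-1]=\mathfrak{F}\oplus\mathfrak{F}^\ast[-1]$ as a graded vector space is visibly the shifted cotangent bundle. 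Two small points of care: the shifted-cone formula is not quite ``without any replacement'' (it \emph{is} the two-sided mapping path construction in disguise, legitimate here because all objects of $\Ch(k)$ are fibrant), and your sign bookkeeping does work out as claimed --- with the convention $\partial(a,b)=(\partial a,\phi(a)-\partial b)$ on the homotopy fibre, one obtains exactly the differentials $\dd$, $\delta\dd$, $\delta$ of \eqref{eqn:YMcomplex}, the differential $-\delta$ of $\mathfrak{F}^\ast$ being negated by the cone.
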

\begin{proof}
By \cite[Corollary 13.1.3]{Hirschhorn:2009:47-69}, $\Ch(k)$ is a {\em right proper}
model category because each object is fibrant. As a consequence
of \cite[Corollary 13.3.8]{Hirschhorn:2009:47-69}, one can compute the 
homotopy pullback \eqref{eqn:criticallocus} in terms of
an {\em ordinary} pullback if we replace 
the zero section $0: \mathfrak{F}\to T^\ast\mathfrak{F}$
by a weakly equivalent fibration. Because the zero section is
the Cartesian product of the $\Ch(k)$-morphisms 
$\id : \mathfrak{F}\to \mathfrak{F}$ and $0 : 0\to \mathfrak{F}^\ast$,
the problem reduces to finding a fibration
that is weakly equivalent to the zero map $0 : 0\to \mathfrak{F}^\ast$.

For this let us introduce the chain complex
\begin{flalign}
D \,:=\, \Big(
\xymatrix@C=1.5em{
\stackrel{(-1)}{k} & \ar[l]_-{\id} \stackrel{(0)}{k}
}
\Big)\,\in\Ch(k)\quad,
\end{flalign}
which allows us to factorize the unique $\Ch(k)$-morphism $0\to k$
into an acyclic cofibration $0\to D$ followed by a fibration
$D\to k$. (Concretely, the latter map is given by $\id:k\to k$ in degree $0$
and $0:k\to 0$ in degree $-1$.) Taking the tensor product
$(-)\otimes \mathfrak{F}^\ast$ of the factorization $0\to D\to k$
yields a factorization 
\begin{flalign}
\xymatrix@C=2em{
0 \ar[r] & D\otimes \mathfrak{F}^\ast \ar[r]^-{p} & \mathfrak{F}^\ast
}
\end{flalign}
of the zero map $0 : 0\to \mathfrak{F}^\ast$ into a weak equivalence
$0\to D\otimes\mathfrak{F}^\ast$ followed by a fibration
$p: D\otimes \mathfrak{F}^\ast\to\mathfrak{F}^\ast$.
Hence, we have constructed a replacement
\begin{flalign}
\widetilde{0} :=\id\times p  \,: \, \widetilde{T}^\ast \mathfrak{F} := 
\mathfrak{F}\times (D\otimes \mathfrak{F}^\ast) \,\longrightarrow\,
\mathfrak{F}\times\mathfrak{F}^\ast =T^\ast \mathfrak{F}
\end{flalign}
of the zero section $0: \mathfrak{F}\to T^\ast\mathfrak{F}$
by a weakly equivalent fibration. 
 
Let us now compute explicitly the {\em ordinary} pullback 
\begin{flalign}\label{eqn:ordinarypullback}
\xymatrix{
\ar@{-->}[d] \mathfrak{Sol} \ar@{-->}[r] &  \mathfrak{F} \ar[d]\ar[d]^-{\dd S}\\
\widetilde{T}^\ast \mathfrak{F} \ar[r]_-{\widetilde{0}} & T^\ast\mathfrak{F}
}
\end{flalign}
in $\Ch(k)$, which provides a model for the desired 
homotopy pullback \eqref{eqn:criticallocus}. For this we use that 
the chain complex $D\otimes \mathfrak{F}^\ast$ is concretely given by
\begin{flalign}
\resizebox{.87\hsize}{!}{$D\otimes \mathfrak{F}^\ast
= \Big( \!\!\!\!\xymatrix@C=1em{
\stackrel{(-2)}{\Omega^0(M)} & \ar[l]_-{\delta\pi_1+\pi_2} \stackrel{(-1)}{\Omega^1(M)\times \Omega^0(M)}
& \ar[l]_-{(\id,-\delta)} \stackrel{(0)}{\Omega^1(M)}
} \!\!\!\!
\Big)$}
\end{flalign}
and that the $\Ch(k)$-morphism $p : D\otimes \mathfrak{F}^\ast\to\mathfrak{F}^\ast$
reads as
\begin{flalign}
\xymatrix@C=2em@R=1.5em{
\ar[d]_-{0}\Omega^0(M) & \ar[l]_-{\delta\pi_1+\pi_2}\ar[d]_-{\pi_2} \Omega^1(M)\times \Omega^0(M)
& \ar[l]_-{(\id,-\delta)} \Omega^1(M)\ar[d]^-{\id}\\
0&\ar[l]^-{0}\Omega^0(M)&\ar[l]^-{-\delta}\Omega^1(M)
}
\end{flalign}
One easily computes the lower horizontal arrow in \eqref{eqn:ordinarypullback} 
and confirms that the pullback is given by \eqref{eqn:YMcomplex}.
\end{proof}

\begin{rem}
The following remarks are in order:
\begin{enumerate}[i)]
\item As a graded vector space, the model $\mathfrak{Sol}$ 
for the derived critical locus established 
in Proposition \ref{propo:criticallocus}
agrees with the shifted cotangent bundle over $\mathfrak{F}$. 
Hence, the above proof of Proposition \ref{propo:criticallocus} 
provides a homological explanation for the appearance 
of shifted cotangent bundles in the calculation of derived critical loci. 

\item The chain complex $\mathfrak{Sol}$ in 
\eqref{eqn:YMcomplex} has both `stacky' positive degrees
and `derived' negative degrees. The different components have a 
physical interpretation in terms of the BRST/BV formalism:
Fields in degree $0$ are called {\em gauge fields} $A\in\Omega^1(M)$ 
and fields in degree $1$ {\em ghost fields} $c\in\Omega^0(M)$. The fields
in negative degrees are called {\em anti fields} $A^\ddagger\in\Omega^1(M)$ 
and $c^\ddagger \in \Omega^0(M)$.
\end{enumerate}
\end{rem}

Let us now compute the homologies of the chain complex 
\eqref{eqn:YMcomplex}. In degree $1$ we obtain the zeroth 
de Rham cohomology of the space-time $M$ 
\begin{flalign}
H_1(\mathfrak{Sol}) \,\cong\, H^0_{\mathrm{dR}}(M) \,\cong\, \bbR^{\pi_0(M)}\quad,
\end{flalign}
where $\pi_0(M)$ denotes the set of connected components of $M$.
Note that this is always non-zero and it agrees with our computation
of $\pi_1(BG^{\mathrm{con}}(U),A)$ for the groupoid of gauge fields in Example \ref{ex:BGcon}.
In degree $0$ we obtain the space of gauge equivalences
classes of solutions of the Abelian Yang-Mills equation
\begin{flalign}
H_0(\mathfrak{Sol}) \,\cong\,\frac{\big\{A\in \Omega^1(M) \,:\, \delta\dd A =0\big\}}{\dd\Omega^0(M)}\quad.
\end{flalign}
Working out the homology in degree $-1$ is slightly more complicated.
Because $M$ is by hypothesis a {\em globally hyperbolic} space-time,
one can show that the inhomogeneous Abelian Yang-Mills equation
$\delta\dd A = j$, for $j\in \Omega^1(M)$, has a solution $A\in\Omega^1(M)$
if and only if $j$ is $\delta$-exact, i.e.\ $j = \delta \eta$ for some $\eta\in\Omega^2(M)$.
(Hint: Apply a gauge transformation to $A$ to fulfill the Lorenz gauge condition $\delta A =0$
and then use standard techniques from the theory of wave equations \cite{Baer:0806.1036}.)
Hence,
\begin{flalign}
\mathrm{Im}\big(\delta\dd : \Omega^1(M)\to \Omega^1(M)\big) \,=\, \delta \Omega^2(M)
\end{flalign}
is the space of $\delta$-exact $1$-forms and the degree $-1$ homology
of \eqref{eqn:YMcomplex} is
\begin{flalign}
H_{-1}(\mathfrak{Sol})\,\cong\, H^1_\delta(M)\, \cong\, H^{m-1}_{\mathrm{dR}}(M)\quad,
\end{flalign}
where $H^\bullet_\delta$ is the cohomology of the codifferential. (Recall that $m$ is the dimension
of $M$.) Finally, the degree $-2$ homology of \eqref{eqn:YMcomplex} is trivial
\begin{flalign}
H_{-2}(\mathfrak{Sol})\,\cong\, H^0_\delta(M) \,\cong\, H^m_{\mathrm{dR}}(M)\,\cong\, 0\quad,
\end{flalign}
because the underlying manifold of a globally hyperbolic space-time is diffeomorphic to
$M\cong \bbR\times \Sigma$, with $\Sigma$ an $m-1$-dimensional manifold. 
Note that this calculation shows that the $\Ch(k)$-morphism
\begin{flalign}
\xymatrix@C=1.5em@R=1.5em{
\Omega^0(M)& \ar[l]_-{\delta } \Omega^1(M) & \ar[l]_-{\delta\dd} \Omega^1(M) & \ar[l]_-{\dd}  \Omega^0(M)\\
\ar[u]^-{0} 0 &\ar[u]^-{\subseteq} \ar[l]^-{0} \Omega^1_\delta(M) & \ar[u]^-{\id} \ar[l]^-{\delta\dd } \Omega^1(M) &\ar[u]_-{\id}\ar[l]^-{\dd} \Omega^0(M)
}
\end{flalign}
is a quasi-isomorphism, where by $\Omega^1_\delta(M)$ we denoted the $\delta$-closed $1$-forms.
\begin{cor}
For every globally hyperbolic Lorentzian space-time $M\in\Loc$,
the solution complex $\mathfrak{Sol}$ in \eqref{eqn:YMcomplex} is weakly equivalent
to the smaller chain complex
\begin{flalign}\label{eqn:smallcomplex}
\widetilde{\mathfrak{Sol}} \,:=\,
\Big(
\xymatrix@C=1em{
\stackrel{(-1)}{\Omega^1_\delta(M)} & \ar[l]_-{\delta\dd} \stackrel{(0)}{\Omega^1(M)} & \ar[l]_-{\dd}  \stackrel{(1)}{\Omega^0(M)} 
}
\Big)\quad.
\end{flalign}
\end{cor}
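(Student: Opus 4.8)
The plan is to produce an explicit quasi-isomorphism $\iota : \widetilde{\mathfrak{Sol}} \to \mathfrak{Sol}$ and to verify that it is a weak equivalence by a cokernel argument, rather than by matching homologies degree-by-degree. First I would take $\iota$ to be the chain map displayed just before the statement: the identity $\id$ in homological degrees $1$ and $0$, the inclusion $\Omega^1_\delta(M)\hookrightarrow \Omega^1(M)$ of the $\delta$-closed $1$-forms in degree $-1$, and the unique map $0\to\Omega^0(M)$ in degree $-2$. Checking that $\iota$ is a morphism in $\Ch(k)$ is immediate: the only squares requiring attention are the one between degrees $0$ and $-1$, which commutes because $\delta\dd A$ is automatically $\delta$-closed (as $\delta^2=0$), so that the bottom differential $\delta\dd:\Omega^1(M)\to\Omega^1_\delta(M)$ lands where claimed, and the one between degrees $-1$ and $-2$, which commutes because $\delta$ vanishes on $\Omega^1_\delta(M)$ by definition.

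Next I would exploit that $\iota$ is a degree-wise monomorphism, so there is a short exact sequence of chain complexes
\begin{flalign*}
0 \longrightarrow \widetilde{\mathfrak{Sol}} \stackrel{\iota}{\longrightarrow} \mathfrak{Sol} \longrightarrow \mathrm{coker}(\iota) \longrightarrow 0\quad.
\end{flalign*}
The associated long exact sequence in homology shows that $H_n(\iota)$ is an isomorphism in every degree as soon as $\mathrm{coker}(\iota)$ is acyclic, and by the definition of weak equivalence in $\Ch(k)$ (quasi-isomorphism) this is exactly the assertion of the corollary. The cokernel is concentrated in degrees $-1$ and $-2$, where it reads $\Omega^1(M)/\Omega^1_\delta(M)\to\Omega^0(M)$ with differential $\bar\delta$ induced by $\delta:\Omega^1(M)\to\Omega^0(M)$; in all other degrees the cokernel vanishes because $\iota$ is the identity there.

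It then remains to show that this two-term complex is acyclic. In degree $-1$ its homology is $\ker(\bar\delta)$, which vanishes because the kernel of $\delta$ on $1$-forms is precisely $\Omega^1_\delta(M)$, so $\bar\delta$ is injective. In degree $-2$ its homology is $\Omega^0(M)/\delta\Omega^1(M)\cong H^0_\delta(M)$, the zeroth codifferential cohomology. The crucial and non-formal step will be the vanishing of this last group, equivalently the surjectivity of $\delta:\Omega^1(M)\to\Omega^0(M)$; this is not a purely homological fact but uses that $M$ is \emph{globally hyperbolic}, hence diffeomorphic to $\bbR\times\Sigma$, giving $H^0_\delta(M)\cong H^m_{\mathrm{dR}}(M)\cong 0$ by Hodge duality. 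This is the same geometric input already used above to establish $H_{-2}(\mathfrak{Sol})\cong 0$. With both homology groups of $\mathrm{coker}(\iota)$ trivial, the cokernel is acyclic, $\iota$ is a quasi-isomorphism, and the corollary follows.
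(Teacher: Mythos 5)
Your proof is correct, and it rests on the same explicit chain map the paper uses, namely the inclusion $\iota : \widetilde{\mathfrak{Sol}} \to \mathfrak{Sol}$ (identity in degrees $1,0$, inclusion $\Omega^1_\delta(M)\subseteq\Omega^1(M)$ in degree $-1$); what differs is how the quasi-isomorphism property is verified. The paper deduces it from the full degree-by-degree computation of $H_\bullet(\mathfrak{Sol})$ carried out just before the corollary, which invokes \emph{two} non-formal facts about globally hyperbolic space-times: solvability of the inhomogeneous equation $\delta\dd A=j$ for $\delta$-exact $j$ (wave-equation theory, needed for $H_{-1}(\mathfrak{Sol})\cong H^{m-1}_{\mathrm{dR}}(M)$), and the splitting $M\cong\bbR\times\Sigma$ (needed for $H_{-2}(\mathfrak{Sol})\cong H^m_{\mathrm{dR}}(M)\cong 0$). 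Your cokernel argument is more economical: since $\mathrm{coker}(\iota)$ is concentrated in degrees $-1,-2$ with injective differential $\bar\delta$, its acyclicity — and hence, via the long exact sequence, the corollary — reduces to the single statement $H^0_\delta(M)\cong H^m_{\mathrm{dR}}(M)\cong 0$. This makes transparent that the corollary needs only the second geometric input and not the wave-equation fact, since the map $\delta\dd$ never enters the cokernel. What the paper's longer route buys in exchange is exactly the extra information your argument bypasses: the explicit identification of the homology groups ($H_1\cong H^0_{\mathrm{dR}}(M)$, $H_0$ the gauge classes of solutions, $H_{-1}\cong H^{m-1}_{\mathrm{dR}}(M)$), which carry independent physical meaning and are used in the surrounding discussion, whereas your proof establishes the weak equivalence without computing them.
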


We would like to conclude this section with a comment on the time-slice axiom
in the present setting. Let us take any Cauchy morphism
$f:M\to N$ in $\Loc$, i.e.\ $f(M)\subseteq N$ contains a Cauchy surface of $N$. 
Pullback of differential forms defines a
$\Ch(k)$-morphism 
\begin{flalign}
f^\ast \,:\, \mathfrak{Sol}(N)\longrightarrow \mathfrak{Sol}(M)
\end{flalign}
from the chain complex of solutions on $N$ to the one on $M$.
One easily observes that $f^\ast$ is not an isomorphism in $\Ch(k)$,
but rather a quasi-isomorphism. (The same statements
hold true for the smaller complex in \eqref{eqn:smallcomplex}.)
Thinking ahead towards homotopical AQFT, this means that one should not
expect a strict time-slice axiom to hold true in such gauge theoretic 
examples, but rather a homotopy-coherent generalization of it.


\section{\label{sec:homotopy}Homotopy theory of AQFTs}
We develop a general framework for AQFTs with values in the symmetric monoidal
model category $\Ch(k)$ of chain complexes of $k$-modules.
This is motivated by the higher structures arising in gauge theory 
that we explained in Section \ref{sec:gaugetheory} above. Concrete examples
that fit into our framework are models which are 
constructed via the perturbative BRST/BV formalism for AQFT, see 
e.g.\ \cite{Hollands:2007zg,Fredenhagen:2011an,Fredenhagen:2011mq,Tehrani:2018kbs}.
We will assume throughout the whole section that $k\supseteq \mathbb{Q}$ 
is a commutative unital ring that includes the ring of rational numbers 
as a subring. (The physically relevant examples are complex numbers
$k=\bbC$ and formal power series $k=\bbC[[\hbar]]$.)
This will considerably simplify our model categorical
considerations and arguments. For details on the material
presented below we refer to  \cite{Benini:2018oeh}.


\subsection{Homotopy theory of algebras over dg-operads}
Colored dg-operads are similar to the $\Set$-valued colored operads from
Section \ref{subsec:AQFToperad}, however with the difference that
they have chain complexes of $n$-ary operations. In more detail,
a {\em colored dg-operad} $\O\in \Op(\Ch(k))$ is described by the following data:
\begin{enumerate}[i)]
\item an underlying set of colors;
\item for each tuple $(\und{c},t) = ((c_1,\dots,c_n),t)$ of colors,
a chain complex $\O\big(\substack{t \\ \und{c}}\big)\in\Ch(k)$
of operations from $\und{c}$ to $t$;
\item composition $\Ch(k)$-morphisms\newline
$\gamma : \O\big(\substack{t \\ \und{c}}\big) \otimes\bigotimes_{i=1}^n\O\big(\substack{c_i \\ \und{b}_i}\big)
\to \O\big(\substack{t \\ (\und{b}_1,\dots,\und{b}_n)}\big)$;
\item unit $\Ch(k)$-morphisms $\oone : k\to \O\big(\substack{t \\ t}\big)$;
\item permutation action $\Ch(k)$-morphisms $\O(\sigma) : \O\big(\substack{t \\ \und{c}}\big)\to
\O\big(\substack{t \\ \und{c}\sigma}\big)$.
\end{enumerate}
This data has to satisfy the usual associativity, unitality and equivariance
conditions, see e.g.\ \cite{yau2016colored}. Given any $\Set$-valued colored operad
$\P\in\Op(\Set)$, one can define a colored dg-operad
$\P\otimes k\in\Op(\Ch(k))$ by tensoring 
\begin{flalign}
\P\big(\substack{t \\ \und{c}}\big)\otimes k\, :=\, \coprod\limits_{p\in \P\big(\substack{t \\ \und{c}}\big)} k
\end{flalign}
each set of operations with the monoidal unit $k\in\Ch(k)$.

To every colored dg-operad $\O\in \Op(\Ch(k))$ one can assign its category of algebras
$\Alg_\O(\Ch(k))$ with values in the symmetric monoidal model category $\Ch(k)$
of chain complexes. Concretely, an {\em $\O$-algebra} $A\in\Alg_\O(\Ch(k))$ 
is a collection of chain complexes $A_c\in\Ch(k)$, for all colors $c$, together with $\Ch(k)$-morphisms
\begin{flalign}
A \,:\, \O\big(\substack{t \\ \und{c}} \big) \otimes \bigotimes\limits_{i=1}^n A_{c_i} \longrightarrow A_t
\end{flalign}
that encode the actions of the chain complexes of operations. Of course, various compatibility
conditions with the operad structure on $\O$ must be fulfilled, see e.g.\ \cite{yau2016colored}.

Recalling that $\Ch(k)$ is a (symmetric monoidal) model category, with
weak equivalences the quasi-isomor-\linebreak phisms, it is natural to ask whether
$\Alg_\O(\Ch(k))$ is a model category too. In general, this turns out to be
a complicated question and there is a large amount of literature
on model structures for operad algebras in model categories, see e.g.\
\cite{Hinich:9702015,Berger:0206094,berger2007resolution,Spitzweck:2001aa,Cisinski:1109.1004,Hinich:1311.4130,Pavlov:1410.5675}.
The case of relevance to us has been understood by Hinich \cite{Hinich:9702015,Hinich:1311.4130}, who has proven
the following result.
\begin{theo}\label{theo:Oalgebramodelstructure}
Let $\O\in\Op(\Ch(k))$ be any colored dg-operad. 
Define a morphism $\kappa : A\to B$ in $\Alg_{\O}(\Ch(k))$ to be
\begin{enumerate}[i)]
\item a weak equivalence if each component $\kappa_c : A_c\to B_c$ is a weak
equivalence in $\Ch(k)$, i.e.\ a quasi-isomorphism;
\item a fibration if each component $\kappa_c : A_c\to B_c$ is a fibration
in $\Ch(k)$, i.e.\ degree-wise surjective;
\item a cofibration if it has the left lifting property with respect
to all acyclic fibrations.
\end{enumerate}
If $k\supseteq \mathbb{Q}$, then these choices endow $\Alg_{\O}(\Ch(k))$ 
with the structure of a model category. In this model structure
every object $A\in \Alg_{\O}(\Ch(k))$ is fibrant.
\end{theo}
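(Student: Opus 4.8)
The plan is to obtain the model structure by \emph{transfer} along the free--forgetful adjunction $F : \Ch(k)^{\mathfrak{C}} \rightleftarrows \Alg_{\O}(\Ch(k)) : U$, where $\mathfrak{C}$ denotes the set of colors of $\O$, the product category $\Ch(k)^{\mathfrak{C}}$ carries the (cofibrantly generated) product model structure, $U$ sends an $\O$-algebra to its underlying colored family of chain complexes, and $F$ is the free $\O$-algebra functor. Since $\Ch(k)$ is bicomplete, so is $\Ch(k)^{\mathfrak{C}}$, and $\Alg_{\O}(\Ch(k))$ is bicomplete as well: limits are created colorwise by $U$, while colimits exist because the adjunction is monadic and the monad $U F$ preserves filtered colimits. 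By construction the transferred weak equivalences and fibrations are precisely the $\kappa$ for which $U(\kappa)$ is a colorwise quasi-isomorphism, resp.\ colorwise degreewise surjection, which is exactly what conditions (i) and (ii) demand.

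First I would invoke the Kan transfer (recognition) theorem. Writing $I = \{S^{n-1}\to D^n\}$ and $J = \{0\to D^n\}$ for the generating cofibrations and generating acyclic cofibrations of $\Ch(k)$, placed in each color, the candidate generating (acyclic) cofibrations of $\Alg_{\O}(\Ch(k))$ are $F(I)$ and $F(J)$. Two hypotheses must be verified: a \emph{smallness} condition on the domains of $F(I)$ and $F(J)$ relative to the relevant cell complexes, and the \emph{acyclicity} condition that every relative $F(J)$-cell complex is a weak equivalence. Smallness is automatic, since $\Ch(k)^{\mathfrak{C}}$ is locally presentable, $U$ preserves filtered colimits, and $F$ as a left adjoint sends small objects to small objects; the small object argument then produces the two weak factorization systems.

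The acyclicity condition is the heart of the matter and the only place where the hypothesis $k\supseteq\mathbb{Q}$ enters. The clean way to verify it is the \emph{path-object argument}, for which I would exhibit a functorial path object in $\Alg_{\O}(\Ch(k))$. Let $k[t,\dd t]$ be the commutative dg-algebra of polynomial de Rham forms on the interval, with $t$ in degree $0$ and $\dd t$ in degree $-1$. Over a $\mathbb{Q}$-algebra $k$ the inclusion of constants $k\to k[t,\dd t]$ is a chain homotopy equivalence, the contracting homotopy being integration $t^{n}\,\dd t\mapsto t^{n+1}/(n+1)$ (this is exactly where $\mathbb{Q}\subseteq k$ is needed), while the evaluation $(\mathrm{ev}_0,\mathrm{ev}_1):k[t,\dd t]\to k\times k$ is degreewise surjective. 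Since an $\O$-algebra may be tensored with any commutative dg-algebra and remain an $\O$-algebra, the assignment $A\mapsto A\otimes k[t,\dd t]$ is an endofunctor fitting into a factorization of the diagonal
\begin{flalign}
A \stackrel{\sim}{\longrightarrow} A\otimes k[t,\dd t] \stackrel{(\mathrm{ev}_0,\mathrm{ev}_1)}{\longrightarrow} A\times A
\end{flalign}
into a weak equivalence followed by a fibration: tensoring the homotopy equivalence $k\to k[t,\dd t]$ with $A$ stays a chain homotopy equivalence, hence a quasi-isomorphism, and the second map is degreewise surjective. Given such functorial path objects and the fact (established below) that all objects are fibrant, the standard path-object argument shows that any transferred acyclic cofibration is a weak equivalence, which yields the acyclicity condition. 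I expect the only delicate bookkeeping to be the analysis of pushouts of $F(J)$-maps through the usual filtration of free-algebra pushouts, whose associated graded pieces are symmetric tensor powers $(-)^{\otimes m}_{\Sigma_m}$; these are homotopically well behaved precisely because $|\Sigma_m|$ is invertible in $k$, reconfirming the necessity of $\mathbb{Q}\subseteq k$. This step is the main obstacle, everything else being formal.

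Finally, the fibrancy claim is immediate: the terminal object of $\Alg_{\O}(\Ch(k))$ is the colorwise-zero algebra, and for every $A$ the unique morphism $A\to 0$ is colorwise degreewise surjective, hence a fibration. Therefore every object of $\Alg_{\O}(\Ch(k))$ is fibrant, as asserted.
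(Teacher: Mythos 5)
Your proposal is correct, and it is essentially the paper's own route: the paper does not prove this theorem itself but cites Hinich, and your argument --- transfer along the free--forgetful adjunction $\Ch(k)^{\mathfrak{C}} \rightleftarrows \Alg_{\O}(\Ch(k))$ combined with the functorial path object $A\otimes k[t,\dd t]$ (where $k\supseteq\mathbb{Q}$ enters through the integration homotopy) and the observation that all objects are fibrant --- is precisely Hinich's proof. One minor remark: the ``delicate bookkeeping'' on pushout filtrations and symmetric powers that you flag at the end is superfluous, since the path-object argument together with fibrancy of all objects already yields the acyclicity condition, with no filtration analysis required.
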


Let us now consider any $\Op(\Ch(k))$-morphism
$\phi : \O\to \P$ between two colored dg-operads.
There is an associated pullback functor
$\phi^\ast : \Alg_{\P}(\Ch(k))\to \Alg_\O(\Ch(k))$
between the categories of algebras, which admits a left
adjoint given by operadic left Kan extension, i.e.\ we obtain an adjunction
\begin{flalign}\label{eqn:phiOalgebraadjunction}
\xymatrix{
\phi_! \,:\, \Alg_{\O}(\Ch(k)) ~\ar@<0.5ex>[r] & \ar@<0.5ex>[l]~ \Alg_\P(\Ch(k)) \,:\, \phi^\ast
}\quad.
\end{flalign}
It is easy to see that this adjunction is compatible with the model structures
in Theorem \ref{theo:Oalgebramodelstructure}.
\begin{propo}\label{prop:OalgebraQuillenadjunction}
For every $\Op(\Ch(k))$-morphism $\phi : \O\to \P$,
the adjunction \eqref{eqn:phiOalgebraadjunction} is a Quillen adjunction.
Moreover, the right adjoint $\phi^\ast$ preserves weak equivalences.
\end{propo}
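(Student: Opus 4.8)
The plan is to verify the standard criterion for a Quillen adjunction: it suffices to show that the right adjoint $\phi^\ast$ preserves fibrations and acyclic fibrations. First I would isolate the crucial structural feature of the pullback functor. Writing $\phi_0 : \mathrm{Col}(\O)\to\mathrm{Col}(\P)$ for the underlying map on colors of the operad morphism $\phi$, the $\O$-algebra $\phi^\ast B$ associated to a $\P$-algebra $B$ has underlying chain complexes given by a mere relabeling, namely $(\phi^\ast B)_c = B_{\phi_0(c)}$ for each color $c$ of $\O$, with the operad action obtained by precomposing the action of $B$ with $\phi$. The key point is that $\phi^\ast$ does not modify the underlying chain complexes at all; it only reindexes them along $\phi_0$.

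Next I would combine this observation with the explicit description of the model structures from Theorem \ref{theo:Oalgebramodelstructure}, in which both weak equivalences and fibrations are detected componentwise in $\Ch(k)$. Given any morphism $\kappa : B\to B'$ in $\Alg_\P(\Ch(k))$, its pullback has components $(\phi^\ast\kappa)_c = \kappa_{\phi_0(c)}$. Hence if every $\kappa_d$ is a quasi-isomorphism (respectively a degreewise surjection), then in particular every $\kappa_{\phi_0(c)}$ is a quasi-isomorphism (respectively a degreewise surjection), and therefore $\phi^\ast\kappa$ is again a weak equivalence (respectively a fibration). This simultaneously proves that $\phi^\ast$ preserves weak equivalences, which is the \emph{moreover} part of the statement, and that it preserves fibrations.

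Finally, since a morphism in $\Alg_\P(\Ch(k))$ that is both a fibration and a weak equivalence is mapped by $\phi^\ast$ to a morphism that is again both a fibration and a weak equivalence, the functor $\phi^\ast$ preserves acyclic fibrations as well. Having shown that $\phi^\ast$ preserves fibrations and acyclic fibrations, I conclude that $\phi_!\dashv\phi^\ast$ is a Quillen adjunction. I do not anticipate a genuine obstacle here: the whole argument rests on the fact that $\phi^\ast$ leaves the underlying chain complexes untouched, so that the componentwise-defined classes of weak equivalences and fibrations are preserved for formal reasons. The only point requiring a little care is to make the identification $(\phi^\ast B)_c = B_{\phi_0(c)}$ precise when $\O$ and $\P$ have different color sets, but this is already part of the definition of the operadic pullback functor.
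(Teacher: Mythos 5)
Your proof is correct and is precisely the argument the paper leaves implicit (the text surrounding Proposition \ref{prop:OalgebraQuillenadjunction} merely asserts that compatibility with the model structures of Theorem \ref{theo:Oalgebramodelstructure} is ``easy to see''): since $\phi^\ast$ only reindexes the underlying chain complexes along the map on colors, and since weak equivalences and fibrations in both algebra categories are defined componentwise, $\phi^\ast$ preserves fibrations, acyclic fibrations and weak equivalences, which yields the Quillen adjunction by the standard criterion. Nothing is missing.
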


\begin{rem}\label{rem:derivedOpKanextension}
In general, the left adjoint functor $\phi_!$ in  \eqref{eqn:phiOalgebraadjunction} 
does not preserve weak equivalences
and has to be derived. Choosing any natural cofibrant replacement
$\big(Q : \Alg_{\O}(\Ch(k)) \to \Alg_{\O}(\Ch(k)), q :  Q\stackrel{\sim}{\to} \id\big)$,
we define as usual the left derived functor by
\begin{flalign}
\bbL \phi_! \,:=\, \phi_!\, Q\,:\, \Alg_{\O}(\Ch(k))\longrightarrow \Alg_{\P}(\Ch(k))\quad.
\end{flalign}
By construction, $\bbL \phi_! $ does preserve weak equivalences.
Because by Theorem \ref{theo:Oalgebramodelstructure} 
each object in $\Alg_\P(\Ch(k))$ is fibrant, we will always choose in what
follows the trivial fibrant replacement
$\big(R=\id : \Alg_{\P}(\Ch(k)) \to \Alg_{\P}(\Ch(k)), r=\id  : \id \stackrel{\sim}{\to} R \big)$,
which means that the right derived functor
\begin{flalign}
\bbR\phi^\ast \,:= \,\phi^\ast\, R \,=\, \phi^\ast \,:\,\Alg_{\P}(\Ch(k))\longrightarrow  \Alg_{\O}(\Ch(k))
\end{flalign}
agrees with the underived functor.
\end{rem}

Next, we shall briefly discuss the concept of $\Sigma$-cofi\-brant resolutions
of colored dg-operads and the associated concept of homotopy algebras.
In what follows we fix a non-empty set $\CCC\in\Set$ of colors
and consider the subcategory $\Op_{\CCC}(\Ch(k))\subseteq \Op(\Ch(k))$ of 
colored dg-operads with fixed set of colors $\CCC$ and morphisms acting as the identity on colors.
Under our hypothesis $k\supseteq \mathbb{Q}$,
Hinich's results \cite{Hinich:9702015,Hinich:1311.4130} imply that $\Op_{\CCC}(\Ch(k))$
is a model category with weak equivalences and fibrations defined
component-wise and cofibrations defined by the left lifting property.
Let us introduce the following standard terminology.
\begin{defi}\label{def:sigmacofibrant}
A $\CCC$-colored dg-operad $\O\in\Op_{\CCC}(\Ch(k))$
is called {\em $\Sigma$-cofibrant} if each component
$\O\big(\substack{t \\ \und{c}}\big)$ is a cofibrant object in
the projective model structure on the functor category $\Ch(k)^{\Sigma_{\und{c}}}$,
where $\Sigma_{\und{c}}\subseteq \Sigma_n$ is the stabilizer subgroup of
the tuple $\und{c} = (c_1,\dots,c_n)$ of colors.
\end{defi}

Cofibrant dg-operads, i.e.\ cofibrant objects in the model category 
$\Op_{\CCC}(\Ch(k))$, are in particular $\Sigma$-cofibrant, 
cf.\ \cite[Proposition 4.3]{Berger:0206094}.
However, the converse is not true since e.g.\ the commutative 
dg-operad $\mathsf{Com}$ is $\Sigma$-cofibrant but not cofibrant. 
The relevance of $\Sigma$-cofibrant dg-operads is that their categories
of algebras behave well with respect to weak equivalences.
More precisely, the relevant result \cite{Hinich:9702015,Hinich:1311.4130} is as follows.
\begin{theo}\label{theo:Sigmacofibrantwe}
Let $\phi : \O\to \P$ be a weak equivalence between $\Sigma$-cofibrant 
colored dg-operads $\O,\P\in\Op_\CCC(\Ch(k))$. Then the corresponding
Quillen adjunction \eqref{eqn:phiOalgebraadjunction}  is a Quillen equivalence.
\end{theo}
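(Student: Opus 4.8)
The plan is to verify the two conditions that characterize a Quillen equivalence (see e.g.\ \cite{hovey2007model}): (a) the right adjoint $\phi^\ast$ reflects weak equivalences between fibrant objects, and (b) for every cofibrant $A\in\Alg_{\O}(\Ch(k))$ the derived unit of the adjunction \eqref{eqn:phiOalgebraadjunction} is a weak equivalence. Condition (a) is immediate and needs no hypothesis on $\O$, $\P$: since $\phi$ acts as the identity on the colour set $\CCC$, the functor $\phi^\ast$ leaves the underlying colour-wise chain complexes unchanged, and both weak equivalences and fibrations in $\Alg_{\O}(\Ch(k))$ and $\Alg_{\P}(\Ch(k))$ are detected component-wise in $\Ch(k)$ (Theorem \ref{theo:Oalgebramodelstructure}). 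Hence $\phi^\ast$ preserves and reflects weak equivalences between arbitrary objects, in particular between fibrant ones. Moreover, because every object of $\Alg_{\P}(\Ch(k))$ is fibrant, the derived unit at a cofibrant $A$ coincides with the ordinary unit $\eta_A : A \to \phi^\ast\phi_! A$ (cf.\ Remark \ref{rem:derivedOpKanextension}), so (b) reduces to showing that $\eta_A$ is a quasi-isomorphism for every cofibrant $A$.

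To prove this I would argue cellwise. The model structure of Theorem \ref{theo:Oalgebramodelstructure} is cofibrantly generated, with generating (acyclic) cofibrations of the form $F_{\O}(i)$, where $F_{\O}$ is the free-algebra functor (left adjoint to the forgetful functor $\Alg_{\O}(\Ch(k))\to\Ch(k)^{\CCC}$) and $i$ runs through the generating (acyclic) cofibrations of the projective model structure on the diagram category $\Ch(k)^{\CCC}$. Every cofibrant algebra is a retract of an $F_{\O}(i)$-cell complex, and the class of $A$ for which $\eta_A$ is a weak equivalence is closed under retracts; it therefore suffices to treat cell complexes, i.e.\ transfinite composites of pushouts of generating cofibrations.

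The base case is free algebras. Since left adjoints compose, $\phi_!\,F_{\O}=F_{\P}$, so on a free algebra $F_{\O}(K)$ the unit $\eta_{F_{\O}(K)}$ is, in each colour $t$, the coproduct over tuples $\und c=(c_1,\dots,c_n)$ (up to permutation) of the maps
\[
\O\big(\substack{t\\ \und c}\big)\otimes_{\Sigma_{\und c}}\big(K_{c_1}\otimes\cdots\otimes K_{c_n}\big)\longrightarrow \P\big(\substack{t\\ \und c}\big)\otimes_{\Sigma_{\und c}}\big(K_{c_1}\otimes\cdots\otimes K_{c_n}\big)
\]
induced by $\phi$. Here the two standing hypotheses enter: $\Sigma$-cofibrancy of $\O$ and $\P$ (Definition \ref{def:sigmacofibrant}) makes each of $\O\big(\substack{t\\ \und c}\big)$, $\P\big(\substack{t\\ \und c}\big)$ cofibrant as a $\Sigma_{\und c}$-object, while $k\supseteq\mathbb{Q}$ guarantees that the strict coinvariants $(-)\otimes_{\Sigma_{\und c}}(-)$ compute the corresponding homotopy orbits (the homology of the finite group $\Sigma_{\und c}$ with coefficients in a $k$-module vanishes in positive degrees). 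Consequently $(-)\otimes_{\Sigma_{\und c}}(K_{c_1}\otimes\cdots\otimes K_{c_n})$ sends the component-wise weak equivalence $\phi$ between $\Sigma$-cofibrant objects to a quasi-isomorphism, and $\eta_{F_{\O}(K)}$ is a weak equivalence.

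The inductive step is the main obstacle. For a pushout of $F_{\O}(i)$ along a map $F_{\O}(K)\to A$ one uses the standard polynomial filtration of the underlying complex of an operadic pushout (as in \cite{Berger:0206094,Pavlov:1410.5675}), whose associated layers are built from the enveloping operad of $A$ and the symmetric tensor powers of the cofibration $i$. This filtration is natural in the operad, so $\phi$ induces a morphism of filtered objects; assuming inductively that $\eta_A$ is a weak equivalence, the same $\Sigma$-cofibrancy and $k\supseteq\mathbb{Q}$ arguments as in the base case show that $\phi$ induces a weak equivalence on each layer, hence on the pushout. Finally, weak equivalences of chain complexes are stable under the transfinite composites of cofibrations occurring in a cell complex, so the property propagates up the cellular tower. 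Together with the base case and closure under retracts this establishes (b), and with (a) completes the proof that \eqref{eqn:phiOalgebraadjunction} is a Quillen equivalence. The technically demanding part is thus the explicit layer analysis of the operadic pushouts; everything else is formal once the hypotheses $k\supseteq\mathbb{Q}$ and $\Sigma$-cofibrancy are in place.
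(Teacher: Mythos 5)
First, a remark on the comparison: the paper offers no proof of this theorem at all --- it is quoted from Hinich \cite{Hinich:9702015,Hinich:1311.4130} --- so your attempt can only be measured against the standard arguments in the literature; the route you take is the cellular induction of Berger--Moerdijk and Pavlov--Scholbach \cite{berger2007resolution,Pavlov:1410.5675} rather than Hinich's. The formal part of your proposal is correct: $\phi^\ast$ preserves and reflects weak equivalences because these are detected on underlying complexes, which $\phi^\ast$ leaves untouched; since every object of $\Alg_{\P}(\Ch(k))$ is fibrant, the derived unit at a cofibrant $A$ is the ordinary unit $\eta_A$; cofibrant algebras are retracts of cell complexes, the relevant class is closed under retracts, and filtered colimits in $\Ch(k)$ are exact, so the transfinite part is harmless. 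The base case is also essentially sound: $\Sigma$-cofibrancy forces the components of $\O$ and $\P$ to be cofibrant chain complexes, so $\phi$ is component-wise a chain homotopy equivalence, tensoring with $K_{c_1}\otimes\cdots\otimes K_{c_n}$ preserves this, and $k\supseteq\mathbb{Q}$ makes $\Sigma_{\und{c}}$-coinvariants exact.

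The genuine gap is the inductive step, and it is not just a matter of omitted routine detail. The layers of the filtration of an operadic pushout $A\sqcup_{F_{\O}(K)}^{}F_{\O}(L)$ are of the form $U_{\O}(A)\big(\substack{t\\ \und{c}}\big)\otimes_{\Sigma_{\und{c}}}Q_{\und{c}}$, where $U_{\O}(A)$ denotes the enveloping operad of $A$ and $Q_{\und{c}}$ the pushout-corner objects built from the attaching cofibration; the corresponding layers on the $\P$-side involve $U_{\P}(\phi_!A)$. To compare them you need the map $U_{\O}(A)\to\phi^\ast U_{\P}(\phi_!A)$ to be a quasi-isomorphism in \emph{every} arity, together with $\Sigma$-cofibrancy (or at least cofibrancy of the underlying complexes) of these enveloping operads, so that the base-case argument can be re-run layer by layer. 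Your induction hypothesis --- that $\eta_A$ is a weak equivalence --- is precisely the arity-zero piece of this statement (one has $U_{\O}(A)\big(\substack{t\\ \emptyset}\big)=A_t$), and the higher-arity comparison does not follow from it by any formal argument: it is a statement of the same nature as the theorem itself, one level up, since enveloping operads are themselves built by operadic colimits from $\O$ and $A$. The correct proof therefore runs a strengthened simultaneous induction, showing that along the cellular tower the enveloping operads remain $\Sigma$-cofibrant and are compared by arity-wise quasi-isomorphisms; this is exactly the content of the key lemmas of \cite{berger2007resolution} (see also the symmetric-flatness machinery of \cite{Pavlov:1410.5675}), and it is the actual mathematical core of the theorem. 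As written, the sentence ``assuming inductively that $\eta_A$ is a weak equivalence, the same arguments show that $\phi$ induces a weak equivalence on each layer'' does not close the induction.
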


With this preparation we can now finally define the concept of 
homotopy algebras over colored dg-operads.
\begin{defi}\label{def:sigmsresolution} 
Let $\O\in\Op_\CCC(\Ch(k))$  be a colored dg-operad.
\begin{enumerate}[i)]
\item A {\em $\Sigma$-cofibrant resolution} of $\O$ is a $\Sigma$-cofibrant
colored dg-operad $\O_\infty\in\Op_{\CCC}(\Ch(k))$ together with
an acyclic fibration $w : \O_\infty \to\O$ in $\Op_\CCC(\Ch(k))$.

\item The model category of {\em homotopy $\O$-algebras}
is the model category $\Alg_{\O_\infty}(\Ch(k))$ of algebras
over a $\Sigma$-cofibrant resolution $w:\O_\infty\to \O$.
\end{enumerate}
\end{defi}

\begin{rem}
It is natural to ask whether the concept of homotopy $\O$-algebras depends
on the chosen resolution. Given two $\Sigma$-cofibrant resolutions 
$w:\O_\infty\to \O$ and $w^\prime : \O^\prime_\infty\to \O$,
and taking also a cofibrant replacement $q : Q\O\to \O$,
we obtain a commutative diagram
\begin{flalign}
\xymatrix@R=1em@C=2em{
\O_\infty \ar[r]^-{w} & \O & \ar[l]_-{w^\prime} \O_\infty^\prime\\
& \ar@{-->}[lu]^-{l}Q\O\ar[u]^-{q} \ar@{-->}[ru]_-{l^\prime}&
}
\end{flalign}
in $\Op_\CCC(\Ch(k))$. The dashed arrows exist due to the left lifting property, 
because $Q\O$ is by construction a cofibrant dg-operad and $w,w^\prime$ are acyclic fibrations.
By Proposition \ref{prop:OalgebraQuillenadjunction} we obtain a zig-zag
\begin{flalign}
\resizebox{.9\hsize}{!}{$\xymatrix@C=1em{
\Alg_{\O_\infty}(\Ch(k)) \ar@<-0.5ex>[r]_-{l^\ast} & \ar@<-0.5ex>[l]_-{l_!}~ \Alg_{Q\O}(\Ch(k))
\ar@<0.5ex>[r]^-{l^\prime_!} & \ar@<0.5ex>[l]^-{l^{\prime \ast}} \Alg_{\O^\prime_\infty}(\Ch(k))
}$}
\end{flalign}
of Quillen adjunctions. Because $\O_\infty$, $\O_{\infty}^\prime$ and also the cofibrant
dg-operad $Q\O$ are $\Sigma$-cofibrant, Theorem \ref{theo:Sigmacofibrantwe}
implies that this is a zig-zag of Quillen equivalences and hence
that the model categories $\Alg_{\O_\infty}(\Ch(k))$ and $\Alg_{\O_\infty^\prime}(\Ch(k))$
of homotopy $\O$-algebras are equivalent in this sense.
\end{rem}

\begin{ex}
The following are standard examples of homotopy algebras:
\begin{enumerate}[i)]
\item $A_\infty$-algebras are homotopy algebras over the associative operad
$\mathsf{As}$;
\item $E_\infty$-algebras are homotopy algebras over the commutative operad
$\mathsf{Com}$;
\item $L_\infty$-algebras are homotopy algebras over the Lie operad $\mathsf{Lie}$;
\item homotopy-coherent diagrams are homotopy algebras over the diagram
operad $\mathsf{Diag}_\CC$.
\end{enumerate}
\end{ex}


\subsection{\label{subsec:AQFTmodelcat}AQFT model categories and Quillen adjunctions}
Let $\ovr{\CC}= (\CC,\perp)$ be any orthogonal category
and $\O_{\ovr{\CC}}\in \Op(\Set)$ the corresponding
AQFT operad from Theorem \ref{theo:AQFToperad}.
The following result is fundamental for the developments 
throughout the whole section.
\begin{theo}\label{theo:AQFTmodelcal}
For every orthogonal category $\ovr{\CC}$, the category
$\QFT(\ovr{\CC})$ of $\Ch(k)$-valued AQFTs on $\ovr{\CC}$
(cf.\ Definition \ref{def:QFTcats}) is a model category
with respect to the following choices: A morphism
$\zeta : \AAA\to \BBB$ in $\QFT(\ovr{\CC})$ (i.e.\ a natural
transformation between the underlying functors $\CC\to\Alg_{\As}(\Ch(k))$) is
\begin{enumerate}[i)]
\item a weak equivalence if each component $\zeta_c : \AAA(c)\to\BBB(c)$ is a quasi-isomorphism;

\item a fibration if each component $\zeta_c : \AAA(c)\to \BBB(c)$ is degree-wise surjective;

\item a cofibration if it has the left lifting property with respect to all acyclic fibrations.
\end{enumerate}
\end{theo}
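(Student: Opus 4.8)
The plan is to obtain this model structure by transporting Hinich's model structure on dg-operad algebras (Theorem \ref{theo:Oalgebramodelstructure}) along the operadic description of AQFTs (Theorem \ref{theo:AQFTalgebraoperad}), so that essentially no model-categorical work is done by hand. First I would promote the $\Set$-valued AQFT operad $\O_{\ovr{\CC}}\in\Op(\Set)$ to a colored dg-operad $\O_{\ovr{\CC}}\otimes k\in\Op(\Ch(k))$ by tensoring each set of operations with the monoidal unit $k\in\Ch(k)$, exactly the construction recalled at the start of Section \ref{subsec:AQFTmodelcat}. Since $\Ch(k)$-valued algebras over the $\Set$-valued operad $\O_{\ovr{\CC}}$ are by definition algebras over $\O_{\ovr{\CC}}\otimes k$, applying Theorem \ref{theo:AQFTalgebraoperad} with target category $\MM=\Ch(k)$ produces an isomorphism of categories
\begin{flalign}
\QFT(\ovr{\CC})\,\cong\,\Alg_{\O_{\ovr{\CC}}}(\Ch(k))\,=\,\Alg_{\O_{\ovr{\CC}}\otimes k}(\Ch(k))\quad.
\end{flalign}

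Second, I would invoke Theorem \ref{theo:Oalgebramodelstructure} for the colored dg-operad $\O:=\O_{\ovr{\CC}}\otimes k$. The standing hypothesis $k\supseteq\mathbb{Q}$ of this section is precisely what that theorem requires, so $\Alg_{\O_{\ovr{\CC}}\otimes k}(\Ch(k))$ carries a model structure in which a morphism $\kappa:A\to B$ is a weak equivalence (respectively a fibration) if and only if every component $\kappa_c:A_c\to B_c$ is a quasi-isomorphism (respectively degree-wise surjective) in $\Ch(k)$, and a cofibration if and only if it has the left lifting property against the acyclic fibrations. Transporting this structure along the isomorphism above immediately endows $\QFT(\ovr{\CC})$ with a model structure, the transport being unproblematic since an isomorphism of categories preserves all three classes of morphisms together with all the lifting and factorization data.

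The one thing left to verify, and the step I would treat most carefully, is that the transported structure coincides \emph{verbatim} with the one in items i)--iii) of the statement, i.e.\ that the operadic component-wise conditions ``at a color'' agree with the stated component-wise conditions ``at an object $c\in\CC$''. This is where the explicit presentation of the AQFT operad in Theorem \ref{theo:AQFToperad} is used: its set of colors is the set of objects of $\CC$, and under the identification $\AAA\mapsto\{\AAA(c)\}_{c}$ the operadic component $\kappa_c$ of an algebra morphism at the color $c$ is exactly the evaluation $\zeta_c:\AAA(c)\to\BBB(c)$ of the corresponding AQFT morphism. Hence quasi-isomorphism and degree-wise surjectivity of $\kappa_c$ translate directly into the conditions imposed on $\zeta_c$, and the cofibrations agree because both are defined by the same lifting property. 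I do not expect a genuine obstacle: the entire substantive content is carried by Theorem \ref{theo:Oalgebramodelstructure}, and what remains is the routine bookkeeping identification of colors with objects, which is immediate from Theorem \ref{theo:AQFToperad}.
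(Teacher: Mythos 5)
Your proposal is correct and follows essentially the same route as the paper's proof: both derive the result from Hinich's Theorem \ref{theo:Oalgebramodelstructure} by transporting the model structure along the chain of isomorphisms $\QFT(\ovr{\CC})\cong \Alg_{\O_{\ovr{\CC}}}(\Ch(k))\cong \Alg_{\O_{\ovr{\CC}}\otimes k}(\Ch(k))$ coming from Theorem \ref{theo:AQFTalgebraoperad} and the tensoring construction $\O_{\ovr{\CC}}\otimes k$. Your final bookkeeping step identifying colors with objects of $\CC$ is exactly the check the paper leaves implicit in ``one easily checks''.
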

\begin{proof}
This is a direct consequence of Theorem \ref{theo:Oalgebramodelstructure}
and the chain of isomorphism $\QFT(\ovr{\CC})\cong \Alg_{\O_{\ovr{\CC}}}(\Ch(k))\cong$\linebreak $\Alg_{\O_{\ovr{\CC}}\otimes k}(\Ch(k))$.
The first step is Theorem \ref{theo:AQFTalgebraoperad}
and for the second step one easily checks that the category 
$\Alg_{\O_{\ovr{\CC}}}(\Ch(k))$ of $\Ch(k)$-valued algebras
over the $\Set$-valued AQFT operad $\O_{\ovr{\CC}}\in \Op(\Set)$
is (isomorphic to) the category $\Alg_{\O_{\ovr{\CC}}\otimes k}(\Ch(k))$
of algebras over the corresponding AQFT dg-operad 
$\O_{\ovr{\CC}}\otimes k \in\Op(\Ch(k))$. 
\end{proof}

\begin{rem}
Our rigorous concept of weak equivalences for $\Ch(k)$-valued AQFTs
established in Theorem \ref{theo:AQFTmodelcal} agrees with the less
formal notions used in concrete applications
of the BRST/BV formalism, cf.\ \cite{Hollands:2007zg,Fredenhagen:2011an,Fredenhagen:2011mq}.
In particular, the usual technique of `adding auxiliary fields without changing
homologies' can be understood rigorously from our model categorical
perspective.
\end{rem}

In the context of $\Ch(k)$-valued AQFTs, the universal constructions
from Section \ref{subsec:universal} have to be derived in order to
be consistent with the concept of weak equivalences introduced in
Theorem \ref{theo:AQFTmodelcal}. Recall  from
Theorem \ref{theo:algadjunction} that every
orthogonal functor $F: \ovr{\CC}\to\ovr{\DD}$ defines
an adjunction
\begin{flalign}\label{eqn:adjunctionCH}
\xymatrix{
F_! \,:\, \QFT(\ovr{\CC}) ~\ar@<0.5ex>[r] & \ar@<0.5ex>[l]~ \QFT(\ovr{\DD}) \,:\, F^\ast
}
\end{flalign}
between the corresponding categories of $\Ch(k)$-valued AQFTs.
As a direct consequence of Proposition \ref{prop:OalgebraQuillenadjunction},
we obtain the following result.
\begin{propo}\label{propo:AQFTQuillenAdjunction}
For every orthogonal functor $F : \ovr{\CC}\to\ovr{\DD}$,
the adjunction \eqref{eqn:adjunctionCH} is a Quillen adjunction.
Moreover, the right adjoint $F^\ast$ preserves weak equivalences.
\end{propo}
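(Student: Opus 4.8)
The plan is to reduce the statement entirely to the operadic result of Proposition \ref{prop:OalgebraQuillenadjunction} by transporting the adjunction \eqref{eqn:adjunctionCH} through the identification of AQFT categories with categories of algebras over the AQFT operads. First I would invoke the functoriality of the assignment $\O_{(-)} : \OCat \to \Op(\Set)$ recalled before Theorem \ref{theo:algadjunction}: the orthogonal functor $F : \ovr{\CC}\to\ovr{\DD}$ yields a morphism of colored operads $\O_F : \O_{\ovr{\CC}}\to \O_{\ovr{\DD}}$ in $\Op(\Set)$. Tensoring with the monoidal unit $k\in\Ch(k)$ then produces a morphism of colored dg-operads $\phi := \O_F\otimes k : \O_{\ovr{\CC}}\otimes k \to \O_{\ovr{\DD}}\otimes k$ in $\Op(\Ch(k))$, which is the object to which I intend to apply the earlier proposition.

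Next I would use the chain of isomorphisms of categories $\QFT(\ovr{\CC})\cong \Alg_{\O_{\ovr{\CC}}}(\Ch(k))\cong \Alg_{\O_{\ovr{\CC}}\otimes k}(\Ch(k))$ established in the proof of Theorem \ref{theo:AQFTmodelcal}, together with its counterpart for $\ovr{\DD}$. Under these isomorphisms the two model structures coincide by construction, since both the AQFT model structure of Theorem \ref{theo:AQFTmodelcal} and the algebra model structure of Theorem \ref{theo:Oalgebramodelstructure} are defined component-wise by quasi-isomorphisms and degree-wise surjections. The one point that genuinely requires checking is that the categorical adjunction $F_!\dashv F^\ast$ of \eqref{eqn:adjunctionCH} is carried to the operadic adjunction $\phi_!\dashv \phi^\ast$ of \eqref{eqn:phiOalgebraadjunction}. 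For the right adjoints this is immediate from the discussion preceding Theorem \ref{theo:algadjunction}, where $F^\ast$ is identified with the operadic pullback $\O_F^\ast$ (equivalently with $\phi^\ast$ after tensoring with $k$); the identification of the left adjoints then follows by uniqueness of adjoints.

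With this identification in hand, Proposition \ref{prop:OalgebraQuillenadjunction} applied to $\phi$ delivers both conclusions at once: $\phi_!\dashv\phi^\ast$ is a Quillen adjunction and $\phi^\ast$ preserves weak equivalences. Transporting these properties back along the isomorphisms of model categories yields exactly the assertions for \eqref{eqn:adjunctionCH}. The only genuinely non-formal ingredient is the compatibility of the adjunction from Theorem \ref{theo:algadjunction} with the operadic one, but this is precisely what the identification $\QFT(-)\cong \Alg_{\O_{(-)}}(\Ch(k))$ and the functoriality of $\O_{(-)}$ were arranged to provide, so no real obstacle remains and the argument is essentially a matter of organizing these identifications correctly.
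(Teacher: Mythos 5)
Your proposal is correct and follows essentially the same route as the paper: the paper presents this proposition as "a direct consequence of Proposition \ref{prop:OalgebraQuillenadjunction}", obtained by identifying \eqref{eqn:adjunctionCH} with the operadic adjunction \eqref{eqn:phiOalgebraadjunction} for the dg-operad morphism $\O_F\otimes k$ under the isomorphisms $\QFT(-)\cong\Alg_{\O_{(-)}\otimes k}(\Ch(k))$ used in Theorem \ref{theo:AQFTmodelcal}. You merely spell out the routine identifications (matching of model structures, identification of right adjoints and uniqueness of left adjoints) that the paper leaves implicit.
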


As explained in Remark \ref{rem:derivedOpKanextension},
the left adjoint functor $F_!$ has to be derived, e.g.\ by choosing
a natural cofibrant replacement. This has consequences
for the examples of universal constructions
discussed in Section \ref{subsec:universal}.
\begin{ex}\label{ex:homotopyjlocal}
Let $j:\ovr{\CC}\to\ovr{\DD}$ be a full orthogonal subcategory embedding
and consider as in \eqref{eqn:jadjunction} the extension/restriction
adjunction $j_! \dashv j^\ast$ for $\Ch(k)$-valued AQFTs.
By Proposition \ref{propo:AQFTQuillenAdjunction}, this is a Quillen
adjunction and hence we can construct the left derived extension functor
$\bbL j_! := j_!\,Q : \QFT(\ovr{\CC})\to\QFT(\ovr{\DD})$. 
We would like to emphasize again that it is the derived functor
$\bbL j_!$ that defines a meaningful local-to-global 
extension for $\Ch(k)$-valued AQFTs
and {\em not} the underived functor $j_!$, because the latter
in general does not preserve weak equivalences. (See 
\cite[Appendix A]{Benini:2018oeh} for concrete examples.)
This in particular means that our definition of descent
via $j$-locality (cf.\ Definition \ref{def:jlocal}) has to be adapted 
in order to be homotopically meaningful. Following \cite{Bruinsma:2018knq},
we say that $\AAA \in \QFT(\ovr{\DD})$ is {\em homotopy $j$-local}
if the corresponding component
\begin{flalign}
\xymatrix{
\bbL j_! j^\ast \AAA = j_! Q j^\ast\AAA \ar[r]^-{j_! q_{j^\ast\AAA}} & j_! j^\ast \AAA \ar[r]^-{\epsilon_{\AAA}} & \AAA
}
\end{flalign}
of the derived counit is a weak equivalence. It is easy to prove
that the derived extension $\bbL j_! \BBB $ of every $\BBB\in\QFT(\ovr{\CC})$
is homotopy $j$-local. Toy-models of homotopy $j$-local
AQFTs that are inspired by gauge theory are presented in 
Section \ref{subsec:derivedlocaltoglobal} below.
\end{ex}

\begin{ex}
Let $L : \ovr{\CC}\to\ovr{\CC[W^{-1}]}$ be an orthogonal localization
and consider as in \eqref{eqn:Ladjunction} the time-slicification
adjunction $L_!\dashv L^\ast$ for $\Ch(k)$-valued AQFTs.
By Proposition \ref{propo:AQFTQuillenAdjunction}, this is a Quillen
adjunction and hence we can construct the left derived time-slicification functor
$\bbL L_! := L_!\,Q : \QFT(\ovr{\CC})\to\QFT(\ovr{\CC[W^{-1}]})$. 
Our concept of $W$-constancy from Corollary \ref{cor:Wconstancy}
has to be adapted in order to be homotopically meaningful.
Following \cite{Bruinsma:2018knq}, we say that
$\AAA\in \QFT(\ovr{\CC})$ is {\em homotopy $W$-constant}
if the corresponding component
\begin{flalign}
\xymatrix{
Q\AAA \ar[r]^-{\eta_{Q\AAA}} & L^\ast L_! Q\AAA = L^\ast \bbL L_!\AAA
}
\end{flalign}
of the derived unit is a weak equivalence. Note that 
homotopy $W$-constancy can be interpreted as
a homotopy theoretic generalization of the time-slice axiom.
We expect that this will be useful for formalizing
the weaker concept of time-slice axiom appearing
in derived geometry, see the end of Section \ref{subsec:derivedgeometry}.
\end{ex}

\begin{problem}
The formal properties of the derived unit and counit of the time-slicification
adjunction $L_! \dashv L^\ast$ are harder to understand
than the ones for the extension/restriction
adjunction $j_! \dashv j^\ast$. In particular,
even though the {\em underived}
counit $\epsilon$ is a natural isomorphism by Proposition \ref{propo:Lreflective},
it is unclear if the {\em derived} counit
\begin{flalign}
\xymatrix{
\bbL L_! L^\ast = L_! Q L^\ast \ar[r]^-{L_! q L^\ast} & L_! L^\ast \ar[r]^-{\epsilon} & \id
}
\end{flalign}
is a natural weak equivalence. As a consequence, it is currently unclear 
to us if theories of the form $L^\ast \BBB$, for $\BBB\in\QFT(\ovr{\CC[W^{-1}]})$,
are homotopy $W$-constant.
\end{problem}


\subsection{Homotopy-coherent AQFTs}
The aim of this section is to study homotopy algebras
over the AQFT dg-operad $\O_{\ovr{\CC}}\otimes k\in\Op(\Ch(k))$,
which we shall also call {\em homotopy AQFTs}.
The following fundamental theorem is proven in \cite{Benini:2018oeh}.
Recall from Definitions \ref{def:sigmacofibrant} and \ref{def:sigmsresolution} 
the concepts of $\Sigma$-cofibrant dg-operads and $\Sigma$-cofibrant resolutions.
\begin{theo}\label{theo:strictification}
Let us assume as before that $k\supseteq \mathbb{Q}$ includes the rationals.
For every orthogonal category $\ovr{\CC}$, the AQFT dg-operad
$\O_{\ovr{\CC}}\otimes k\in\Op(\Ch(k))$ is $\Sigma$-cofibrant.
As a consequence of Theorem \ref{theo:Sigmacofibrantwe},
every $\Sigma$-cofibrant resolution $w : \O_\infty \to \O_{\ovr{\CC}}\otimes k$
induces a Quillen equivalence
\begin{flalign}
\xymatrix{
w_! \,:\, \Alg_{\O_\infty}(\Ch(k)) ~\ar@<0.5ex>[r] & \ar@<0.5ex>[l]~ \QFT(\ovr{\CC}) \,:\, w^\ast
}\quad.
\end{flalign}
\end{theo}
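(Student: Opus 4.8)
The plan is to prove the two assertions separately, with essentially all of the work going into the first one — that $\O_{\ovr{\CC}}\otimes k$ is $\Sigma$-cofibrant — after which the Quillen equivalence follows at once. By Definition \ref{def:sigmacofibrant} I have to check that, for every tuple of colors $\und{c}=(c_1,\dots,c_n)$ and every target $t$, the component $(\O_{\ovr{\CC}}\otimes k)\big(\substack{t\\\und{c}}\big)$ is a cofibrant object in the projective model structure on the functor category $\Ch(k)^{\Sigma_{\und{c}}}$, where $\Sigma_{\und{c}}\subseteq\Sigma_n$ is the stabilizer of $\und{c}$. I shall identify this functor category with the category of chain complexes of $k[\Sigma_{\und{c}}]$-modules, whose projective model structure has the quasi-isomorphisms as weak equivalences and the degreewise surjections as fibrations, both detected on the underlying complexes.

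Next I would read off the component explicitly. By the tensoring construction, $(\O_{\ovr{\CC}}\otimes k)\big(\substack{t\\\und{c}}\big)$ is the free $k$-module on the set $\O_{\ovr{\CC}}\big(\substack{t\\\und{c}}\big)$ of Theorem \ref{theo:AQFToperad}, regarded as a chain complex concentrated in homological degree $0$ with trivial differential. The restriction to $\Sigma_{\und{c}}$ of the operadic permutation action $[\sigma,\und{f}]\mapsto[\sigma\sigma',\und{f}\sigma']$ makes $\O_{\ovr{\CC}}\big(\substack{t\\\und{c}}\big)$ into a $\Sigma_{\und{c}}$-set, so that the component is precisely the associated permutation $k[\Sigma_{\und{c}}]$-module placed in degree $0$.

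The key, and really the only substantial, point is then the following. Decomposing the $\Sigma_{\und{c}}$-set into orbits exhibits this permutation module as a direct sum of induced modules $k[\Sigma_{\und{c}}/H]\cong\mathrm{Ind}_{H}^{\Sigma_{\und{c}}}k$ for various stabilizer subgroups $H\subseteq\Sigma_{\und{c}}$. Since $k\supseteq\mathbb{Q}$, the order $\lvert H\rvert$ is invertible in $k$, so the averaging element $\tfrac{1}{\lvert H\rvert}\sum_{h\in H}h$ provides a $k[H]$-linear section of the augmentation $k[H]\to k$ and exhibits the trivial module $k$ as a direct summand of $k[H]$; inducing up, each $k[\Sigma_{\und{c}}/H]$ is a direct summand of the free module $k[\Sigma_{\und{c}}]$ and hence projective. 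Thus the component is a projective $k[\Sigma_{\und{c}}]$-module concentrated in degree $0$, and such an object is cofibrant in the projective model structure on $\Ch(k)^{\Sigma_{\und{c}}}$: for any acyclic fibration $p:X\to Y$ one checks that the induced map $Z_0(X)\to Z_0(Y)$ on degree-$0$ cycles is surjective — using that $\ker p$ is acyclic and degreewise surjective — and then a chain map out of a degree-$0$ projective lifts by projectivity. This establishes $\Sigma$-cofibrancy. The invertibility hypothesis $k\supseteq\mathbb{Q}$ is exactly what makes this step go through and is the only place where it is used; over a ring in which some $\lvert H\rvert$ failed to be invertible one would expect permutation modules that are not projective, so this is where the genuine content of the theorem sits.

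Finally, the Quillen equivalence is immediate. Given a $\Sigma$-cofibrant resolution $w:\O_\infty\to\O_{\ovr{\CC}}\otimes k$ in the sense of Definition \ref{def:sigmsresolution}, both source and target are $\Sigma$-cofibrant — the source by hypothesis, the target by the previous paragraphs — and $w$ is in particular a weak equivalence. Hence Theorem \ref{theo:Sigmacofibrantwe} applies verbatim and shows that the Quillen adjunction $w_!\dashv w^\ast$ supplied by Proposition \ref{prop:OalgebraQuillenadjunction} is a Quillen equivalence between $\Alg_{\O_\infty}(\Ch(k))$ and $\Alg_{\O_{\ovr{\CC}}\otimes k}(\Ch(k))$. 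Transporting along the isomorphism $\Alg_{\O_{\ovr{\CC}}\otimes k}(\Ch(k))\cong\QFT(\ovr{\CC})$ recorded in the proof of Theorem \ref{theo:AQFTmodelcal} yields the claimed Quillen equivalence with $\QFT(\ovr{\CC})$.
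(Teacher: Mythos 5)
Your proposal is correct and, in substance, coincides with the proof that the paper delegates to \cite{Benini:2018oeh}: the components of $\O_{\ovr{\CC}}\otimes k$ are permutation $k[\Sigma_{\und{c}}]$-modules concentrated in degree $0$, the orbit decomposition together with the averaging idempotent $\tfrac{1}{|H|}\sum_{h\in H}h$ (the only place where $k\supseteq\mathbb{Q}$ enters) exhibits each orbit summand $k[\Sigma_{\und{c}}/H]$ as a retract of the free module $k[\Sigma_{\und{c}}]$, hence projective, and a projective module in degree $0$ is cofibrant in the projective model structure on $\Ch(k)^{\Sigma_{\und{c}}}$. Your deduction of the Quillen equivalence from Theorem \ref{theo:Sigmacofibrantwe} via the identification $\Alg_{\O_{\ovr{\CC}}\otimes k}(\Ch(k))\cong\QFT(\ovr{\CC})$ is likewise exactly how the cited reference concludes, so there is nothing to add.
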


\begin{rem}
This result can be interpreted as a {\em strictification theorem} for homotopy AQFTs.
Indeed, given any homotopy AQFT $\AAA_\infty\in \Alg_{\O_\infty}(\Ch(k))$,
a cofibrant replacement and the derived unit of $w_! \dashv w^\ast$ defines
a zig-zag
\begin{flalign}
\xymatrix{
\AAA_\infty & \ar[l]_-{q_{\AAA_{\infty}}} Q\AAA_\infty \ar[r]^-{\eta_{Q\AAA_\infty}} & w^\ast \bbL w_! \AAA_\infty
}
\end{flalign}
of weak equivalences between $\AAA_\infty$ and the strict AQFT
$w^\ast \bbL w_! \AAA_\infty$.
\end{rem}

So does this mean that homotopy AQFTs are not interesting and important 
at all? The answer to this question is clearly no, because certain interesting
constructions naturally define non-strict homotopy AQFTs. For instance,
let us recall from Example \ref{ex:evidencehocostructures} that
taking derived smooth normalized cochain algebras on a functor
$\mathfrak{F} : \CC^\op\to\mathbf{St}_\infty\subseteq \HH_\infty$
that assigns $\infty$-stacks of gauge fields to space-times
defines a functor $\AAA := \bbL N^{\infty\ast}(\mathfrak{F}(-),k):
\CC\to\Alg_{\E_\infty}(\Ch(k))$ with values in $E_\infty$-algebras.
We shall show in Example \ref{ex:classicalAQFTs} 
below that this can be interpreted as a non-strict 
homotopy AQFT. A further class of examples
is given in Section \ref{subsec:fiberedcohomology}.
Note that even though each of these non-strict homotopy AQFTs
can be strictified by Theorem \ref{theo:strictification}, such strictifications
are hard to describe explicitly and thus it is often useful in practice 
to work directly with the weaker model.

There exist of course many different $\Sigma$-cofibrant resolutions
of the AQFT dg-operad $\O_{\ovr{\CC}}\otimes k \in\Op(\Ch(k))$,
which describe homotopy AQFTs whose algebraic structures
(functoriality, associativity, $\perp$-commutativity, etc.) 
are weakened in a homotopy-coherent sense by a certain extent.
The strictest possible resolution is given by the identity
$\id: \O_{\ovr{\CC}}\otimes k\to \O_{\ovr{\CC}}\otimes k$
and homotopy AQFTs with respect to this resolution are precisely strict AQFTs.
A very weak resolution, called the Boardman-Vogt resolution,
has been studied for our AQFT operads by Yau in \cite{Yau:2018dnm}.
The resulting homotopy AQFTs are, roughly speaking, 
homotopy-coherent diagrams of $A_\infty$-algebra
that satisfy a homotopy-coherent $\perp$-commutativity property.
Motivated by our examples from Example \ref{ex:evidencehocostructures} 
and Section \ref{subsec:fiberedcohomology}, we shall study below a 
particular $\Sigma$-cofibrant resolution $w : \O_{\ovr{\CC}}\otimes \E_\infty\to
\O_{\ovr{\CC}}\otimes k$ of the AQFT dg-operad that is obtained by a component-wise
tensoring with the Barratt-Eccles $E_\infty$-operad $\E_\infty\in \Op(\Ch(k))$.
This describes homotopy AQFTs that are strictly functorial, however
with a homotopy-coherent $\perp$-commutativity property.

Without going into any details, let us recall from \cite{Berger:0109158}
that the Barratt-Eccles dg-operad $\E_\infty\in\Op(\Ch(k))$
is a $\Sigma$-cofibrant resolution $w : \E_\infty\to \mathsf{Com}$
of the commutative dg-operad. The usual $\Op(\Ch(k))$-morphism
$\As\to\mathsf{Com}$ from the associative to the commutative dg-operad
factors through $\E_\infty$, i.e.\ we have a chain of operad maps
\begin{flalign}
\xymatrix{
\As \ar[r]^-{i} & \E_\infty  \ar[r]^-{w} & \mathsf{Com}
}\quad.
\end{flalign}
The induced Quillen adjunctions imply that each commutative
dg-algebra $C\in \Alg_{\mathsf{Com}}(\Ch(k))$ can be interpreted
as a strictly commutative $E_\infty$-algebra\linebreak $w^\ast C\in\Alg_{\E_\infty}(\Ch(k))$. 
Moreover, each $E_\infty$-algebra
$A\in\Alg_{\E_\infty}(\Ch(k))$ has an underlying 
dg-algebra $i^\ast A\in\Alg_{\As}(\Ch(k))$,
which is in general noncommutative unless the $E_\infty$-algebra
is strictly commutative.

For every orthogonal category $\ovr{\CC}$, we define a
colored dg-operad $\O_{\ovr{\CC}}\otimes \E_\infty \in \Op(\Ch(k))$
by the component-wise tensoring
\begin{flalign}
\big(\O_{\ovr{\CC}}\otimes \E_\infty\big)\big(\substack{t \\ \und{c}}\big)\,:=\,
\O_{\ovr{\CC}}\big(\substack{t \\ \und{c}}\big)\otimes \E_{\infty}(n)\in\Ch(k)
\end{flalign}
of the AQFT operad with the Barratt-Eccles $E_\infty$-operad, 
where $n$ is the length of the tuple of colors $\und{c}$.
The following result was proven in \cite{Benini:2018oeh}.
\begin{theo}\label{theo:AQFTresolution}
Let us assume as before that $k\supseteq \mathbb{Q}$ includes the rationals.
For every orthogonal category $\ovr{\CC}$, the $\Op(\Ch(k))$-morphism
\begin{flalign}
w_{\ovr{\CC}}\, := \, \id\otimes w \,:\,  \O_{\ovr{\CC}}\otimes \E_\infty 
\longrightarrow \O_{\ovr{\CC}}\otimes \mathsf{Com}
\cong \O_{\ovr{\CC}}\otimes k
\end{flalign}
defines a $\Sigma$-cofibrant resolution of the AQFT dg-operad.
These resolutions are natural in the orthogonal category $\ovr{\CC}\in\OCat$.
\end{theo}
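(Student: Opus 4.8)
**The plan is to verify two properties of the morphism $w_{\ovr{\CC}} = \id \otimes w$: that it is a $\Sigma$-cofibrant resolution, and that this assignment is natural in $\ovr{\CC}$.** For the resolution property, I must check three things following Definition \ref{def:sigmsresolution}: that $\O_{\ovr{\CC}} \otimes \E_\infty$ is $\Sigma$-cofibrant, that $w_{\ovr{\CC}}$ is an acyclic fibration, and that the target is indeed (isomorphic to) $\O_{\ovr{\CC}} \otimes k$. The last point is essentially bookkeeping: since $\mathsf{Com}(n) \cong k$ concentrated in degree $0$ for all $n$, component-wise tensoring gives $\O_{\ovr{\CC}} \otimes \mathsf{Com} \cong \O_{\ovr{\CC}} \otimes k$, matching the convention for promoting a $\Set$-valued operad to a dg-operad from Section \ref{subsec:AQFTmodelcat}.

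First I would establish the $\Sigma$-cofibrancy of $\O_{\ovr{\CC}} \otimes \E_\infty$. The key input is that $\O_{\ovr{\CC}} \otimes k$ is already $\Sigma$-cofibrant by Theorem \ref{theo:strictification}, and the Barratt-Eccles operad $\E_\infty$ is $\Sigma$-cofibrant by \cite{Berger:0109158}. The component $\big(\O_{\ovr{\CC}} \otimes \E_\infty\big)\big(\substack{t \\ \und{c}}\big) = \O_{\ovr{\CC}}\big(\substack{t \\ \und{c}}\big) \otimes \E_\infty(n)$ carries the diagonal $\Sigma_{\und{c}}$-action, where $\Sigma_{\und{c}} \subseteq \Sigma_n$ is the stabilizer of $\und{c}$. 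The strategy is to argue that the tensor product of a cofibrant object in $\Ch(k)^{\Sigma_{\und{c}}}$ (the $\O_{\ovr{\CC}} \otimes k$ component, with its free or at least cofibrant $\Sigma_{\und{c}}$-structure) with a cofibrant $\Sigma_n$-object restricted to $\Sigma_{\und{c}}$ remains projectively cofibrant. Since $k \supseteq \mathbb{Q}$, all the relevant symmetric group algebras are semisimple, which is exactly what makes projective cofibrancy over $\Ch(k)^{\Sigma_{\und{c}}}$ well-behaved under tensoring; I would cite \cite{Berger:0206094} or \cite{Hinich:1311.4130} for the relevant closure property of $\Sigma$-cofibrant objects under $\otimes$.

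Next I would verify that $w_{\ovr{\CC}} = \id \otimes w$ is an acyclic fibration in $\Op_{\CCC}(\Ch(k))$, where weak equivalences and fibrations are defined component-wise. Since $w : \E_\infty \to \mathsf{Com}$ is an acyclic fibration by \cite{Berger:0109158}, each component $w(n) : \E_\infty(n) \to \mathsf{Com}(n) = k$ is a surjective quasi-isomorphism. Tensoring with the fixed chain complex $\O_{\ovr{\CC}}\big(\substack{t \\ \und{c}}\big) \otimes k$ (a direct sum of copies of $k$, hence flat) preserves both surjectivity and quasi-isomorphisms over the field-like ring $k \supseteq \mathbb{Q}$, so $w_{\ovr{\CC}}$ is again an acyclic fibration component-wise. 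This step should be routine given the flatness of the $\Set$-generated pieces.

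**The main obstacle, and the most substantive part, will be naturality in $\ovr{\CC}$.** I would recall from Section \ref{subsec:universal} that $\ovr{\CC} \mapsto \O_{\ovr{\CC}}$ is a functor $\O_{(-)} : \OCat \to \Op(\Set)$, so an orthogonal functor $F : \ovr{\CC} \to \ovr{\DD}$ yields an operad morphism $\O_F : \O_{\ovr{\CC}} \to \O_{\ovr{\DD}}$. I must exhibit a commuting square in $\Op(\Ch(k))$ relating $w_{\ovr{\CC}}$ and $w_{\ovr{\DD}}$ along the maps induced by $\O_F \otimes \id$. Concretely, the claim is that $(\O_F \otimes \id_{\mathsf{Com}}) \circ w_{\ovr{\CC}} = w_{\ovr{\DD}} \circ (\O_F \otimes \id_{\E_\infty})$, which follows formally because $w_{\ovr{\CC}} = \id_{\O_{\ovr{\CC}}} \otimes w$ and $w$ is a fixed morphism independent of $\ovr{\CC}$; the subtlety is only that $F$ need not fix the color set, so naturality lives in $\Op(\Ch(k))$ rather than in a single $\Op_{\CCC}(\Ch(k))$. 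The essential content is that component-wise tensoring with the fixed resolution $w : \E_\infty \to \mathsf{Com}$ commutes with the colored-operad functoriality of $\O_{(-)}$, which amounts to checking that $\O_F$ acts only on the "categorical" tensor factor and leaves the $\E_\infty$ (resp. $\mathsf{Com}$) factor untouched, so the square commutes on the nose. I would verify this on the generating operations $[\sigma, \und{f}] \otimes e$ and confirm compatibility with the composition maps $\gamma$, which is where I expect to spend the most care.
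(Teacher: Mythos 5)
Your proposal is correct, and its skeleton --- (i) $\Sigma$-cofibrancy of $\O_{\ovr{\CC}}\otimes\E_\infty$, (ii) $\id\otimes w$ is a component-wise surjective quasi-isomorphism by flatness of the $\Set$-generated factor, (iii) naturality because $\O_F$ touches only the first tensor factor --- is exactly the structure of the proof the paper defers to \cite{Benini:2018oeh}. The genuine divergence is in step (i). The cited proof does not route through Theorem \ref{theo:strictification}: it uses that the Barratt--Eccles components $\E_\infty(n)$ are bounded-below complexes of \emph{free} $k[\Sigma_n]$-modules \cite{Berger:0109158}, that restriction along $\Sigma_{\und{c}}\subseteq\Sigma_n$ preserves freeness, and that for any set $S$ the diagonal $\Sigma_{\und{c}}$-module $k[S]\otimes(\text{free})$ is again free (the untwisting isomorphism); hence each component $\O_{\ovr{\CC}}\big(\substack{t \\ \und{c}}\big)\otimes\E_\infty(n)$ is a bounded-below complex of free modules, so cofibrant in $\Ch(k)^{\Sigma_{\und{c}}}$, over \emph{any} ring $k$. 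Your route --- cofibrancy of $\O_{\ovr{\CC}}\otimes k$ from Theorem \ref{theo:strictification} plus closure of projectively cofibrant objects under diagonal tensor --- also works, since that closure property is a characteristic-free pushout-product fact about free cells (and restriction along a subgroup inclusion preserves cofibrancy). But your justification of it via ``semisimplicity of the symmetric group algebras since $k\supseteq\mathbb{Q}$'' is misattributed: $k$ is merely a commutative ring, so $k[\Sigma_n]$ need not be semisimple, and in any case the closure property does not need characteristic zero; what genuinely consumes $k\supseteq\mathbb{Q}$ is Theorem \ref{theo:strictification} itself, because the $\Sigma_{\und{c}}$-actions on the components of $\O_{\ovr{\CC}}$ are not free and cofibrancy of $\O_{\ovr{\CC}}\otimes k$ requires an averaging argument. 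Note also that Definition \ref{def:sigmsresolution} does not require the \emph{target} of a resolution to be $\Sigma$-cofibrant, so your reliance on Theorem \ref{theo:strictification} imports the hypothesis $k\supseteq\mathbb{Q}$ into a step where the direct untwisting argument would avoid it; this is harmless here (the hypothesis is assumed throughout and is needed downstream, e.g.\ for Theorem \ref{theo:Sigmacofibrantwe}), but the direct argument is both more elementary and more general. Your steps (ii) and (iii), including the observation that naturality must be formulated in $\Op(\Ch(k))$ because orthogonal functors change the color set, match the reference.
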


\begin{defi}\label{def:EinftyhAQFT}
Let $\ovr{\CC}$ be an orthogonal category. We denote by
\begin{flalign}
\QFT_\infty(\ovr{\CC})\,:=\, \Alg_{\O_{\ovr{\CC}}\otimes\E_\infty}(\Ch(k))
\end{flalign}
the {\em model category of homotopy AQFTs} on $\ovr{\CC}$ that correspond
to the $\Sigma$-cofibrant resolution from Theorem \ref{theo:AQFTresolution}.
\end{defi}

\begin{rem}
Our universal constructions for strict $\Ch(k)$-valued AQFTs from 
Section \ref{subsec:AQFTmodelcat} immediately generalize to the case
of homotopy AQFTs. In particular, for every orthogonal
functor $F : \ovr{\CC}\to\ovr{\DD}$ one obtains a Quillen adjunction
\begin{flalign}
\xymatrix{
F_! \,:\, \QFT_\infty(\ovr{\CC}) ~\ar@<0.5ex>[r] & \ar@<0.5ex>[l]~ \QFT_\infty(\ovr{\DD}) \,:\, F^\ast
}
\end{flalign}
between the corresponding model categories of homotopy AQFTs. Interesting
examples are again extension/restriction adjunctions induced by
full orthogonal subcategory embeddings or 
time-slicification adjunctions induced by orthogonal localizations.
\end{rem}

\begin{ex}\label{ex:classicalAQFTs}
Given any small category $\CC$, we can choose the maximal
orthogonality relation 
\begin{flalign}
\perp^{\mathrm{max}}  \,:=\,\mathrm{Mor}\CC \,{}_{\mathrm{t}}{\times}_{\mathrm{t}}\, \mathrm{Mor}\CC
\end{flalign}
and define an orthogonal category
$\ovr{\CC}^{\mathrm{max}} := (\CC,\perp^{\mathrm{max}})$.
One easily checks that the category 
\begin{flalign}
\QFT(\ovr{\CC}^{\mathrm{max}}) \,\cong\, \Alg_{\mathsf{Com}}(\Ch(k))^\CC
\end{flalign}
of strict AQFTs on $\ovr{\CC}^{\mathrm{max}}$ is the category
of functors from $\CC$ to commutative dg-algebras
and that the category
\begin{flalign}
\QFT_\infty(\ovr{\CC}^{\mathrm{max}}) \,\cong\, \Alg_{\E_\infty}(\Ch(k))^\CC
\end{flalign}
of homotopy AQFTs (in the sense of Definition \ref{def:EinftyhAQFT})
on $\ovr{\CC}^{\mathrm{max}}$ is the category
of functors from $\CC$ to $E_\infty$-algebras.
In particular, the 
derived smooth normalized cochain algebras
on diagrams of $\infty$-stacks from Example \ref{ex:evidencehocostructures}
define examples of such homotopy AQFTs.
\end{ex}


\subsection{\label{subsec:derivedlocaltoglobal}Derived local-to-global constructions}
In this section we shall present concrete results on derived local-to-global
extensions in a simplified setting. In particular, we shall show that certain
simplified toy-models for topological AQFTs satisfy a homotopy $j$-locality
property in the sense of Example \ref{ex:homotopyjlocal}. Physically, these results
should be interpreted as a homotopical descent condition for such AQFTs.
For the technical details we refer to \cite{Benini:2018oeh}.

Let $\Man$ be the category of oriented $m$-dimensional manifolds
of finite type with morphisms given by orientation preserving open embeddings.
We endow $\Man$ with the maximal orthogonality relation from Example
\ref{ex:classicalAQFTs}, which defines an orthogonal category
$\ovr{\Man}^{\mathrm{max}} := (\Man,\perp^{\mathrm{max}})$.
Let further $\ovr{\Disk}^{\mathrm{max}}\subseteq \ovr{\Man}^{\mathrm{max}}$
be the full orthogonal subcategory of all manifolds diffeomorphic
to $\bbR^m$ and denote the corresponding 
full orthogonal subcategory embedding by $j: \ovr{\Disk}^{\mathrm{max}}
\to \ovr{\Man}^{\mathrm{max}}$. We are interested
in describing the left derived functor $\bbL j_!$ 
of the associated Quillen adjunction
\begin{flalign}
\resizebox{.86\hsize}{!}{$
\xymatrix@C=1.5em{
j_! \,:\, \QFT_\infty(\ovr{\Disk}^{\mathrm{max}}) \ar@<0.5ex>[r] & \ar@<0.5ex>[l]
\QFT_\infty(\ovr{\Man}^{\mathrm{max}}) \,:\, j^\ast
}~~.$}
\end{flalign}
The following technical theorem is the key ingredient for our computations.
Its proof uses Lurie's Seifert-van Kampen theorem \cite[Appendix A.3.1]{Lurie:book}
and is presented in \cite{Benini:2018oeh}.
\begin{theo}\label{theo:SvK}
Suppose that $\AAA \in \QFT_\infty(\ovr{\Disk}^{\mathrm{max}})$ is\linebreak weakly equivalent
to a constant functor $\Disk\to$\linebreak $\Alg_{\E_\infty}(\Ch(k))$ whose value we denote by
$A\in\Alg_{\E_\infty}(\Ch(k))$. Then the derived extension
$\bbL j_! \AAA \in \QFT_\infty(\ovr{\Man}^{\mathrm{max}})$ may be
computed object-wise for each $M\in \Man$ by
\begin{flalign}
\big(\bbL j_! \AAA \big)(M)\,=\, \mathrm{Sing}(M)\stackrel{\bbL}{\otimes} A \,\in\Alg_{\E_\infty}(\Ch(k))\quad,
\end{flalign}
where $\mathrm{Sing}(M)\in\sSet$ is the simplicial set of singular simplices
in $M$ and $\stackrel{\bbL}{\otimes}$ is the derived $\sSet$-tensoring
for $E_\infty$-algebras, cf.\ \cite{Fresse:1503.08489,Ginot:1011.6483}.
\end{theo}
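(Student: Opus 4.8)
The plan is to unravel the statement to the level of $\E_\infty$-algebra-valued functors and then reduce it to a purely homotopy-theoretic comparison of simplicial sets. By Example~\ref{ex:classicalAQFTs}, the maximal orthogonality relation yields isomorphisms $\QFT_\infty(\ovr{\Man}^{\mathrm{max}})\cong \Alg_{\E_\infty}(\Ch(k))^{\Man}$ and $\QFT_\infty(\ovr{\Disk}^{\mathrm{max}})\cong \Alg_{\E_\infty}(\Ch(k))^{\Disk}$, under which the Quillen adjunction $j_!\dashv j^\ast$ is identified with the ordinary left Kan extension / restriction adjunction along the underlying functor $j:\Disk\to\Man$. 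Since $\Alg_{\E_\infty}(\Ch(k))$ is a cofibrantly generated model category (Theorem~\ref{theo:Oalgebramodelstructure}), the functor categories carry projective model structures and the derived extension is computed object-wise by the pointwise homotopy colimit formula
\begin{flalign}
(\bbL j_!\AAA)(M)\,\simeq\,\hocolim_{(U\hookrightarrow M)\,\in\, j/M}\AAA(U)\quad,
\end{flalign}
where $j/M$ is the comma category whose objects are orientation-preserving open embeddings $U\hookrightarrow M$ with $U\cong\bbR^m$ and whose morphisms are such embeddings over $M$.

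Next I would exploit that $\bbL j_!$, being a derived functor, preserves weak equivalences, so the hypothesis $\AAA\simeq\mathrm{const}_A$ allows me to replace $\AAA$ by the constant functor with value $A$ inside the homotopy colimit. The homotopy colimit of a constant diagram is, by definition of the $\sSet$-enrichment, the derived tensoring of its value with the nerve of the indexing category, so that
\begin{flalign}
(\bbL j_!\AAA)(M)\,\simeq\,\hocolim_{j/M}\mathrm{const}_A\,\simeq\, N(j/M)\stackrel{\bbL}{\otimes}A\quad,
\end{flalign}
with $N(j/M)\in\sSet$ the nerve. By homotopy invariance of the derived tensoring in the simplicial variable (cf.\ \cite{Fresse:1503.08489,Ginot:1011.6483}), it then suffices to produce a weak homotopy equivalence of simplicial sets $N(j/M)\simeq\mathrm{Sing}(M)$.

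This last comparison is the heart of the argument and the main obstacle. The category $j/M$ is precisely the category of embedded disks in $M$, and I would establish the equivalence via Lurie's Seifert--van Kampen theorem \cite[Appendix A.3.1]{Lurie:book}. The tautological functor $(U\hookrightarrow M)\mapsto\mathrm{Sing}(U)$ is object-wise weakly equivalent to $\mathrm{const}_\ast$, since every $U\cong\bbR^m$ is contractible; hence its homotopy colimit is again $N(j/M)$. On the other hand, because embedded disks form a basis for the topology of $M$ and the slice categories of disks containing a fixed point are weakly contractible, the Seifert--van Kampen theorem identifies $\hocolim_{j/M}\mathrm{Sing}(U)$ with $\mathrm{Sing}(M)$. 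The delicate points are exactly the verification of these contractibility hypotheses for the category of embedded disks --- which is where the finite-type assumption enters, guaranteeing a finite good cover --- together with a possible cofinality reduction replacing the large category $j/M$ by the poset of disk-shaped opens of such a cover without altering the nerve up to weak equivalence. Chaining the three displayed equivalences then gives $(\bbL j_!\AAA)(M)\simeq\mathrm{Sing}(M)\stackrel{\bbL}{\otimes}A$, as claimed.
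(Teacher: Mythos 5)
Your proposal is correct and takes essentially the same route as the paper: the paper defers the detailed argument to \cite{Benini:2018oeh}, noting only that it hinges on Lurie's Seifert--van Kampen theorem \cite[Appendix A.3.1]{Lurie:book}, and your chain of steps --- the pointwise homotopy colimit formula for $\bbL j_!$ in the projective model structure, reduction to the constant diagram, identification of the homotopy colimit of a constant diagram with the derived $\sSet$-tensoring by the nerve $N(j/M)$, and the Seifert--van Kampen identification $N(j/M)\simeq \mathrm{Sing}(M)$ --- is exactly that argument. One small correction: the weak contractibility of the categories of embedded disks containing a fixed point does not rely on the finite-type hypothesis; since $j/M$ is equivalent to the poset of disk-shaped opens in $M$ and these form a neighborhood basis, the relevant subposets are cofiltered and hence weakly contractible, so finite type plays no role at this step.
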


\begin{rem}
In \cite{Ginot:1011.6483} the $E_\infty$-algebra
$\mathrm{Sing}(M)\stackrel{\bbL}{\otimes} A$ is also referred
to as the {\em derived higher Hochschild chains} on  $\mathrm{Sing}(M)$
with coefficients in $A$.
\end{rem}

\begin{ex}
Our first example is inspired by Dijkgraaf-Witten theory. Let us consider a gauge theory
whose fields on $M$ are described by the groupoid $\mathfrak{Bun}_G(M)\in\Grpd$
of principal $G$-bundles on $M$ for a finite nilpotent group $G$. By \cite[Lemma 2.8]{Schweigert:2018rmn},
the nerve $B \mathfrak{Bun}_G(M)\in\sSet$ is weakly equivalent
to the simplicial mapping space $BG^{\mathrm{Sing}(M)}\in\sSet$, 
where $BG\in\sSet$ is the nerve of the groupoid $\ast//G$. We define
a homotopy AQFT $\AAA\in \QFT_\infty(\ovr{\Man}^{\mathrm{max}})$
by forming on each $M\in\Man$ the normalized cochain algebra
\begin{flalign}
\AAA(M) \,:=\, N^\ast\big(BG^{\mathrm{Sing}(M)},k\big)\in\Alg_{\E_\infty}(\Ch(k))\quad.
\end{flalign}
The restriction $j^\ast \AAA\in\QFT_\infty(\ovr{\Disk}^{\mathrm{max}})$  to disks
is weakly equivalent to a constant functor with value $N^\ast(BG,k)$ because
$\mathrm{Sing}(U)\to \ast$ is a weak equivalence for every $U\in\Disk$. Hence,
we can apply Theorem \ref{theo:SvK} and compute
\begin{flalign}
\nonumber \big(\bbL j_!j^\ast \AAA\big)(M) &\simeq\mathrm{Sing}(M)\stackrel{\bbL}{\otimes} N^\ast(BG,k)
\\ &\simeq N^\ast\big(BG^{\mathrm{Sing}(M)},k\big) = \AAA(M)\quad,
\end{flalign}
for all $M\in\Man$. The second step follows from
\cite[Proposition 5.3]{Ayala:1206.5522} and it uses that $G$ is a finite nilpotent group.
Summing up, we have seen that the present toy-model of a homotopy 
AQFT $\AAA\in \QFT_\infty(\ovr{\Man}^{\mathrm{max}})$ on $\ovr{\Man}^{\mathrm{max}}$ 
is weakly equivalent to the derived extension $\bbL j_!$ of its restriction to disks.
As a consequence, it is also homotopy $j$-local in the sense of Example \ref{ex:homotopyjlocal},
which one should interpret as a homotopical descent condition.
\end{ex}

\begin{ex}
Our second example is inspired by linear 
Chern-Simons theory with structure group $\bbR$, 
i.e.\ flat principal $\bbR$-bundles with connections on $2$-dimensional surfaces.
In what follows $\Man$ will denote the category
of $2$-dimensional oriented manifolds and we take $k=\bbR$. From the perspective
of derived geometry of linear gauge fields (cf.\ Section \ref{subsec:derivedgeometry}),
the linear classical observables for this theory on $M\in\Man$
are described by the $(-1)$-shifted compactly supported de Rham complex
\begin{flalign}
\Omega^\bullet_c(M)[-1]\,:= \, 
\Big(\!\!\!\!
\xymatrix@C=1em{
\stackrel{(-1)}{\Omega^2_c(M)} & \ar[l]_-{\dd} \stackrel{(0)}{\Omega^1_c(M)} & \ar[l]_-{\dd} 
\stackrel{(1)}{\Omega^0_c(M)}
}\!\!\!\!
\Big)\quad.
\end{flalign}
We define a homotopy AQFT $\AAA\in$\linebreak $\QFT_\infty(\ovr{\Man}^{\mathrm{max}})$
by forming on each $M\in\Man$ the free $E_\infty$-algebra
\begin{flalign}
\AAA(M)\,:=\, \mathbb{E}_\infty\big(\Omega^\bullet_c(M)[-1]\big)\,\in\Alg_{\E_\infty}(\Ch(k))
\end{flalign}
over this complex. One should interpret this as classical polynomial observables for linear Chern-Simons theory.
The restriction $j^\ast \AAA \in  \QFT_\infty(\ovr{\Disk}^{\mathrm{max}})$ to disks
is weakly equivalent to a constant functor with value $\mathbb{E}_\infty(\bbR[1])$ because the integration
map $\int_U : \Omega^\bullet_c(U)[-1]\to \bbR[1]$ is a weak equivalence for every $U\in\Disk$.
Hence, we can apply Theorem \ref{theo:SvK} and compute
\begin{flalign}
\nonumber
\big(\bbL j_!j^\ast \AAA\big)(M)&\simeq \mathrm{Sing}(M)\stackrel{\bbL}{\otimes} \mathbb{E}_\infty(\bbR[1])\\ 
\nonumber &\simeq \mathrm{Sing}(M)\otimes \mathbb{E}_\infty(\bbR[1]) \\
&\simeq \mathbb{E}_\infty\big(N_\ast(\mathrm{Sing}(M),\bbR)\otimes \bbR[1]\big)\quad,
\end{flalign}
for all $M\in\Man$.  In the second step we used that for free $E_\infty$-algebras
the derived $\sSet$-tensoring is weakly equivalent to the underived one. The third step
is a direct computation using the explicit formula for the latter from \cite{Fresse:1503.08489}.
One concludes that $\AAA\simeq \bbL j_! j^\ast \AAA$ are weakly equivalent
because the $1$-shifted $\bbR$-valued normalized chains $N_\ast(\mathrm{Sing}(M),\bbR)\otimes \bbR[1]$
are weakly equivalent to $\Omega^\bullet_c(M)[-1]$ as a consequence of de Rham's theorem.
Hence, this toy-model also satisfies the homotopy $j$-locality condition from 
Example \ref{ex:homotopyjlocal}.
\end{ex}

\begin{problem}
The examples considered above are only toy-models for 
the kind of homotopy AQFTs that we are eventually interested in.
This is because they are 1.)~`too topological' in the sense of being 
weakly equivalent to a constant diagram and 2.)~`not quantum' in
the sense that they assign only homotopy-coherently commutative observable 
algebras. It is an open problem to evaluate 
the derived extension functor $\bbL j_!$ and test homotopy $j$-locality 
for more realistic full orthogonal subcategory embeddings, e.g.\ 
$j: \ovr{\Locc}\to\ovr{\Loc}$ from Example \ref{ex:LocLocc}, 
which is crucial for Lorentzian AQFTs.
\end{problem}


\subsection{\label{subsec:fiberedcohomology}Examples from homotopy invariants}
We present another class of examples of homotopy AQFTs in the sense of
Definition \ref{def:EinftyhAQFT}. Let $\ovr{\CC}$ be an orthogonal category
and $\pi : \DD\to\CC$ a category fibered in groupoids over its underlying category $\CC$.
Endowing $\DD$ with the pullback orthogonality relation, we obtain
an orthogonal functor $\pi : \ovr{\DD}\to\ovr{\CC}$ that we call an
{\em orthogonal category fibered in groupoids}. The basic idea
behind our construction below is as follows: Given
any strict $\Ch(k)$-valued AQFT $\AAA \in\QFT(\ovr{\DD})$ on the total 
category $\ovr{\DD}$, we would like to construct an AQFT $\AAA_\pi$ 
on the base category $\ovr{\CC}$ by forming {\em homotopy invariants} along
the groupoid fibers $\pi^{-1}(c)\in\Grpd$, for all $c\in\CC$. We shall formalize
this construction and show that it naturally leads to a homotopy AQFT 
$\AAA_\pi\in\QFT_\infty(\ovr{\CC})$ on $\ovr{\CC}$.
\begin{rem}
The physical motivation behind this construction is as follows:
As usual, $\ovr{\CC}$ is interpreted as a category of space-times.
The total category $\ovr{\DD}$ of the orthogonal category fibered in groupoids
$\pi:\ovr{\DD}\to\ovr{\CC}$ should be interpreted as a category of space-times
with additional geometric structures (which we call {\em background fields}), 
e.g.\ spin structures, bundles and connections.
The functor $\pi$ forgets this extra structure and hence its fibers
$\pi^{-1}(c)\in\Grpd$ are the groupoids of background fields
on the space-time $c\in\CC$. Note that the morphisms in these groupoids
are interpreted as gauge transformations of the background fields, cf.\ Section \ref{subsec:groupoids}.
Our construction assigns to an AQFT $\AAA \in\QFT(\ovr{\DD})$ on 
space-times with  background fields a homotopy AQFT $\AAA_\pi\in\QFT_\infty(\ovr{\CC})$
on plain space-times. This is achieved by assigning
to each $c\in\CC$ the  homotopy invariants of $\AAA$ along the 
action of the  groupoid $\pi^{-1}(c)$ of background fields over $c$.
We refer to \cite{Benini:2016jfs} for concrete examples and
more details on the physical interpretation.
\end{rem}

Without loss of generality, we can focus on the case where
our category fibered in groupoids is given by the Grothendieck
construction $\pi : \CC_F \to \CC$ of a presheaf
of groupoids $F : \CC^\op\to \Grpd$ on $\CC$.
This follows from the strictification theorems in \cite{Hollander:0110247}.
Forming homotopy invariants along the groupoid fibers can be described
by a homotopy right Kan extension $\mathrm{hoRan}_\pi : \Ch(k)^{\CC_F} \to\Ch(k)^\CC$
along $\pi : \CC_F \to\CC$ of the underlying $\Ch(k)$-valued functor 
of an AQFT $\AAA\in\QFT(\ovr{\CC_F})$. By \cite{Benini:2018oeh,Benini:2016jfs}, 
we have the following explicit model.
\begin{propo}\label{prop:hoRan}
Let $F : \CC^\op\to\Grpd$ be a presheaf of groupoids and $\AAA : \CC_F\to\Ch(k)$
a chain complex valued functor on the corresponding Grothendieck construction.
Then the homotopy right Kan extension $\AAA_\pi := \mathrm{hoRan}_\pi \AAA : \CC\to \Ch(k)$
along the projection functor $\pi:\CC_F\to\CC$ can be computed object-wise 
by the end
\begin{flalign}
\AAA_\pi(c)\,=\,\int_{x\in F(c)} \big[N_\ast \big(B(F(c)\downarrow x),k\big),\AAA(c,x)\big]\quad,
\end{flalign}
for all $c\in\CC$. Here $N_\ast(-,k)$ denotes the normalized chain functor, $[-,-]$ the internal hom
in $\Ch(k)$ and $B(F(c)\downarrow x)\in\sSet$ the nerve of the over-category $F(c)\downarrow x$.
\end{propo}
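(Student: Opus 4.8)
The plan is to reduce $\AAA_\pi(c)$ to a homotopy limit over the fiber groupoid $F(c)$ and then to evaluate that homotopy limit by the Bousfield--Kan end formula. First I would recall that $\AAA_\pi = \mathrm{hoRan}_\pi\AAA$ is the right derived functor of the ordinary right Kan extension $\mathrm{Ran}_\pi$, which is right adjoint to the restriction functor $\pi^\ast : \Ch(k)^\CC \to \Ch(k)^{\CC_F}$. Since $\Ch(k)$ (for $k\supseteq\mathbb{Q}$) is a combinatorial model category cotensored over $\sSet$, the derived functor admits the standard pointwise description as a homotopy limit over a comma category,
\begin{flalign*}
\AAA_\pi(c)\;\simeq\;\holim_{(c\downarrow\pi)}\,\AAA\circ\mathrm{pr}\quad,
\end{flalign*}
where $(c\downarrow\pi)$ has objects $(d,f\colon c\to\pi(d))$ with $d\in\CC_F$ and $\mathrm{pr}$ denotes the projection onto $\CC_F$.

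The key step is to show that the inclusion of the fiber $u\colon F(c)=\pi^{-1}(c)\hookrightarrow(c\downarrow\pi)$, $y\mapsto((c,y),\id_c)$, is homotopy initial, so that the homotopy limits over $(c\downarrow\pi)$ and over $F(c)$ agree. Here I would use that $\pi\colon\CC_F\to\CC$ is the Grothendieck construction of the presheaf of groupoids $F$: for any object $\xi=(d,f\colon c\to c')$ of $(c\downarrow\pi)$, with $d=(c',x')$, an object of the comma category $(u\downarrow\xi)$ is a pair consisting of $y\in F(c)$ and a $\CC_F$-morphism $(c,y)\to(c',x')$ lying over $f$. By the very definition of the Grothendieck construction such a morphism is a morphism $y\to F(f)(x')$ in the fiber $F(c)$, and one checks that this identifies $(u\downarrow\xi)\cong\big(F(c)\downarrow F(f)(x')\big)$ with the over-category of $F(c)$ over the object $F(f)(x')$. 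Since $F(c)$ is a groupoid, this over-category has a terminal object and hence a contractible nerve. By the cofinality theorem for homotopy limits this yields
\begin{flalign*}
\AAA_\pi(c)\;\simeq\;\holim_{F(c)}\,\AAA(c,-)\quad.
\end{flalign*}

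It then remains to evaluate this homotopy limit over the groupoid $F(c)$ by the Bousfield--Kan formula. I would use that the constant diagram at the monoidal unit $k$ admits the projectively cofibrant replacement $x\mapsto N_\ast\big(B(x\downarrow F(c)),k\big)$, so that the derived weighted limit is computed by the end
\begin{flalign*}
\holim_{F(c)}\,\AAA(c,-)\;=\;\int_{x\in F(c)}\big[N_\ast\big(B(x\downarrow F(c)),k\big),\,\AAA(c,x)\big]\quad,
\end{flalign*}
where $[-,-]$ is the internal hom of $\Ch(k)$, which realises the cotensoring of $\Ch(k)$ over $\sSet$. Finally, since $F(c)$ is a groupoid, inverting morphisms furnishes an isomorphism of categories $x\downarrow F(c)\cong F(c)\downarrow x$, hence of nerves $B(x\downarrow F(c))\cong B(F(c)\downarrow x)$; substituting this gives the asserted formula.

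The step I expect to be the main obstacle is the homotopy-initiality of the fiber inclusion, since it is the only point where the specific structure of $\CC_F$ is used in an essential way. Concretely, one must verify both the identification $(u\downarrow\xi)\cong\big(F(c)\downarrow F(f)(x')\big)$ of comma categories—which rests on the definition of morphisms in the Grothendieck construction and the universality of Cartesian lifts—and the contractibility of over-categories of groupoids. The remaining ingredients, namely the pointwise derived formula and the Bousfield--Kan evaluation, are formal once the projective model structures have been fixed and one has checked that $x\mapsto N_\ast\big(B(x\downarrow F(c)),k\big)$ is indeed a cofibrant resolution of the constant diagram.
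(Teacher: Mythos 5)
Your proposal is correct and follows essentially the same route as the proof in \cite{Benini:2018oeh,Benini:2016jfs}, to which the paper defers: the pointwise description of $\mathrm{hoRan}_\pi$ as a homotopy limit over the comma category $(c\downarrow\pi)$, homotopy initiality of the fiber inclusion $F(c)\hookrightarrow (c\downarrow\pi)$ established from the explicit form of morphisms in the Grothendieck construction, and the Bousfield--Kan/Hirschhorn end formula for the homotopy limit over the fiber $F(c)$. Two cosmetic remarks: over-categories $F(c)\downarrow z$ have terminal objects for \emph{any} category (groupoid-ness is not needed at that point), and the standard projectively cofibrant weight for homotopy limits is already the over-category nerve $x\mapsto N_\ast\big(B(F(c)\downarrow x),k\big)$, so your detour through under-categories (which are only covariant in $x$ because $F(c)$ is a groupoid) and the final inversion step can be omitted.
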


\begin{rem}
Note that $\AAA_\pi(c)$ can also be understood as the homotopy limit
\begin{flalign}
\AAA_\pi(c)\,\simeq\, \holim\Big(\AAA\big\vert_{\pi^{-1}(c)} : F(c)\to \Ch(k)\Big)
\end{flalign}
of the restriction of $\AAA$ to the groupoid fiber $\pi^{-1}(c)\simeq F(c)$.
This is important for our interpretation of $\AAA_\pi = \mathrm{hoRan}_\pi\AAA$ 
as forming fiber-wise homotopy invariants.
\end{rem}

The main result of this section is that 
the collection of chain complexes $\AAA_\pi(c)$, for $c \in \CC$, 
obtained with the construction above carries the structure 
of a homotopy AQFT in a canonical way. The key ingredient for the proof
is the result in \cite{Berger:0109158} that the normalized chain complex
$N_\ast(S,k)$ of a simplicial set $S$ carries a canonical $\E_\infty$-{\em co}action. 
This fact, combined with the original 
$\O_{\ovr{\CC_F}}$-algebra structure on $\AAA\in\QFT(\ovr{\CC_F})$, 
leads to the next theorem. 
We refer to \cite{Benini:2018oeh} for a detailed proof.
\begin{theo}\label{theo:homotopyinvariants}
Let $\ovr{\CC}$ be an orthogonal category and
$F : \CC^\op\to\Grpd$ a presheaf of groupoids.
Consider the orthogonal category fibered in groupoids
$\pi: \ovr{\CC_F}\to\ovr{\CC}$ that is obtained
by the Grothendieck construction of $F$.
For every strict $\Ch(k)$-valued AQFT $\AAA\in\QFT(\ovr{\CC_F})$
on the total category, the family of chain complexes
$\AAA_\pi(c)\in\Ch(k)$ from Proposition \ref{prop:hoRan}
carries canonically the structure of an $\O_{\ovr{\CC}}\otimes \E_\infty$-algebra.
Hence, $\AAA_\pi\in \QFT_\infty(\ovr{\CC})$ is a homotopy AQFT
in the sense of Definition \ref{def:EinftyhAQFT}.
\end{theo}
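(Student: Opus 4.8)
The plan is to build the $\O_{\ovr{\CC}}\otimes\E_\infty$-algebra structure on the family $\{\AAA_\pi(c)\}_{c\in\CC}$ directly, out of three ingredients: the $\O_{\ovr{\CC_F}}$-algebra structure of $\AAA$ (its multiplications, pushforwards and strict $\perp$-commutativity on the total orthogonal category), the explicit end model of Proposition \ref{prop:hoRan}, and the Berger--Fresse fact \cite{Berger:0109158} that the normalized chains $N_\ast(S,k)$ of a simplicial set $S$ carry a natural, functorial coaction of the Barratt--Eccles operad $\E_\infty$. By the strictification results of \cite{Hollander:0110247} recalled before Proposition \ref{prop:hoRan}, I may assume $\pi:\ovr{\CC_F}\to\ovr{\CC}$ is the Grothendieck construction of a presheaf $F:\CC^\op\to\Grpd$, so that the groupoid fibre over $c\in\CC$ is $F(c)$ and $\AAA_\pi(c)=\int_{x\in F(c)}[N_\ast(B(F(c)\downarrow x),k),\AAA(c,x)]$.

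First I would define, for each operation $[\sigma,\und{f}]\otimes\xi\in(\O_{\ovr{\CC}}\otimes\E_\infty)(\substack{t\\\und{c}})$ with $\und{f}=(f_1,\dots,f_n)$ and $\xi\in\E_\infty(n)$, a structure map $\bigotimes_{i=1}^n\AAA_\pi(c_i)\to\AAA_\pi(t)$. Its value at a fibre object $y\in F(t)$ is assembled as follows: using $F$ I pull $y$ back to $x_i:=F(f_i)(y)\in F(c_i)$ and evaluate the $i$-th argument there; the $\CC_F$-morphism $(f_i,\id):(c_i,x_i)\to(t,y)$ covering $f_i$ lets me push the coefficients forward through $\AAA(f_i,\id):\AAA(c_i,x_i)\to\AAA(t,y)$, while the induced functor on over-categories gives a simplicial map $B(F(t)\downarrow y)\to B(F(c_i)\downarrow x_i)$ and hence a restriction of the $N_\ast$-arguments; finally the $\E_\infty$-coaction of $\xi$ supplies an $n$-fold comultiplication $N_\ast(B(F(t)\downarrow y),k)\to N_\ast(B(F(t)\downarrow y),k)^{\otimes n}$, whose $n$ tensor factors are fed into the $n$ pushed-forward functions and the resulting elements of $\AAA(t,y)$ are multiplied in the order dictated by $\sigma$. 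Taking the end over $y$ produces the desired map. The unary case $n=1$ (where $\E_\infty(1)=k$) is exactly the functoriality map of the homotopy right Kan extension, so $\AAA_\pi$ is in particular a strict functor $\CC\to\Ch(k)$, consistent with the description of these homotopy AQFTs as strictly functorial but homotopy-coherently $\perp$-commutative.

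The bulk of the work is then to verify that these maps define a genuine $\O_{\ovr{\CC}}\otimes\E_\infty$-algebra. Well-definedness on the ends (dinaturality in $y$ and in the $x_i$) uses naturality of the over-category and $N_\ast$ constructions together with functoriality of $\AAA$; independence of the chosen representative $[\sigma,\und{f}]$ of an operadic equivalence class uses precisely the strict $\perp$-commutativity of $\AAA$ on $\ovr{\CC_F}$, which applies because the pullback orthogonality satisfies $f_1\perp_{\CC_F}f_2$ whenever $\pi(f_1)\perp_\CC\pi(f_2)$. Compatibility with operadic units reduces to counitality of the Barratt--Eccles coaction and unitality of $\AAA$, while equivariance matches the $\Sigma_n$-action on $\O_{\ovr{\CC}}$-operations with the free $\Sigma_n$-action on $\E_\infty(n)$.

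I expect the main obstacle to be the associativity axiom, i.e.\ compatibility with operadic composition in $\O_{\ovr{\CC}}\otimes\E_\infty$. This forces one to reconcile the block-permutation composition formula of Theorem \ref{theo:AQFToperad} for $\O_{\ovr{\CC}}$ and the associativity of the multiplication in $\AAA$ with the coassociativity and operadic coherence of the Barratt--Eccles coaction on the various $N_\ast(B(F(c)\downarrow x),k)$, all while tracking the pullbacks of over-categories along composites in $\CC$. Here the homotopy-coherence is essential: the higher cells of $\E_\infty(n)$ furnish exactly the homotopies repairing the failure of the naive fibre-wise-invariants product to be strictly associative and $\perp$-commutative, so the verification is a lengthy but mechanical coherence check internal to $\E_\infty$ rather than an equality of maps. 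This is the point at which I would follow the detailed computations of \cite{Benini:2018oeh}.
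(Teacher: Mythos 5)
Your proposal follows essentially the same route as the paper: the paper's proof consists precisely of combining the Berger--Fresse $\E_\infty$-coaction on normalized chains $N_\ast(-,k)$ with the strict $\O_{\ovr{\CC_F}}$-algebra structure of $\AAA$ (whose strict $\perp$-commutativity applies exactly because the pullback orthogonality on $\ovr{\CC_F}$ covers that of $\ovr{\CC}$), deferring the detailed coherence verification to \cite{Benini:2018oeh}, just as you do. Your explicit assembly of the structure maps on the end model of Proposition \ref{prop:hoRan} --- pulling back fibre objects along $F(f_i)$, pushing forward coefficients via $\AAA(f_i,\id)$, restricting chains along over-category functors, comultiplying via $\E_\infty$ and multiplying in the order dictated by $\sigma$ --- is a faithful elaboration of that same construction.
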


\begin{rem}
From a mathematical perspective, the homotopy AQFT
$\AAA_\pi \in \QFT_\infty(\ovr{\CC})$ from Theorem \ref{theo:homotopyinvariants}
can be interpreted in terms of fiber-wise normalized cochain algebras on
$\pi: \ovr{\CC_F}\to\ovr{\CC}$ with coefficients in the strict $\Ch(k)$-valued
AQFT $\AAA\in\QFT(\ovr{\CC_F})$. In other words, it is the fiber-wise
groupoid cohomology of $\pi: \ovr{\CC_F}\to\ovr{\CC}$ with coefficients 
$\AAA\in\QFT(\ovr{\CC_F})$. Similarly to ordinary groupoid cohomology,
the results of this construction can be interesting even when the coefficients
are concentrated in degree $0$.  Physical examples of this type
have been discussed in  \cite{Benini:2016jfs} and they include e.g.\ 
the case of Dirac fields on the groupoid of all possible spin structures over a space-time.
\end{rem}


\appendix

\section{\label{app:cosheaf}On the cosheaf condition in AQFT}
In this appendix we shall analyze an analogue of the 
cosheaf condition \eqref{eqn:cosheafAQFT} for a simple toy-model 
of an AQFT. Our main message will be that it is very hard to find covers
for which this condition holds true. This motivates and justifies
our alternative descent condition that we have 
sketched at the end of Section \ref{sec:background} and 
stated precisely in Definition \ref{def:jlocal}.

The toy-model we consider is given by the scalar field on the circle 
$\bbS^1$, which is {\em not} a Lorentzian AQFT in the sense of 
Definition \ref{def:LCQFT}, but rather a chiral conformal AQFT
on the compactified light ray. (In particular, it is an AQFT in the general
sense of Definition \ref{def:QFTcats}.) Denoting by
$\Open(\bbS^1)$ the category of all open subsets of the circle
$\bbS^1$, our model is described by a functor
$\AAA : \Open(\bbS^1)\to \astAlg$ to the category of $\ast$-algebras.
To an open subset $U\subseteq \bbS^1$, it assigns
the $\ast$-algebra $\AAA(U)$ presented by the following 
generators and relations:
\begin{enumerate}[i)]
\item {\em Generators:} $\Phi_U(\varphi)$, for all compactly supported functions $\varphi\in C^\infty_c(U)$;
\item {\em Relations:}
\begin{enumerate}[i)]
\item {\em $\bbR$-linearity:} $\Phi_U(\alpha\varphi + \beta\psi) = \alpha \Phi_U(\varphi) + \beta\Phi_U(\psi)$,
for all $\alpha,\beta\in\bbR$ and $\varphi,\psi\in C^\infty_c(U)$;
\item {\em Hermiticity:} $\Phi_U(\varphi)^\ast = \Phi_U(\varphi)$, for all $\varphi\in C^\infty_c(U)$;
\item {\em CCR:} $\big[\Phi_U(\varphi),\Phi_U(\psi)\big] = 
\mathrm{i} \,\int_{\bbS^1} \varphi\dd\psi\,\oone$,
 for all  $\varphi,\psi\in C^\infty_c(U)$.
\end{enumerate}
\end{enumerate}
To an open subset inclusion $\iota_U^V : U \subseteq V$,
the functor assigns the $\ast$-homomorphism $\AAA(\iota_U^V):\AAA(U)\to\AAA(V)$
that is defined on the generators by $\Phi_U(\varphi)\mapsto \Phi_V(\varphi)$.
We note that $\AAA$ satisfies the following variant of a causality condition 
(or $\perp$-commutativity condition in the sense of Definition \ref{def:QFTcats});
For every pair of {\em disjoint} open subsets $U_1,U_2\subseteq V$ of 
some open subset $V\subseteq\bbS^1$, the induced commutator
\begin{multline}
\big[\AAA(\iota_{U_1}^V)\big(\Phi_{U_1}(\varphi)\big) ,\AAA(\iota_{U_2}^V) \big(\Phi_{U_2}(\psi)\big)\big]_{\AAA(V)}\\
= \big[\Phi_{V}(\varphi) ,\Phi_{V}(\psi)\big]_{\AAA(V)} = \mathrm{i}\,\int_{\bbS^1}\varphi\dd\psi\,\oone =0
\end{multline}
is zero, for all $\varphi\in C^\infty_c(U_1)$ and $\psi\in C^\infty_c(U_2)$,
because the integrand is zero as a consequence of $U_1\cap U_2=\emptyset$.

For any open cover $\{U_i\subseteq \bbS^1\}$ of the circle,
we define the $\ast$-algebra
\begin{flalign}
\colim\,\AAA(U_\bullet)\,:=\, \colim
\Bigg(
\!\!\xymatrix@C=1em{
\coprod\limits_{ij} \AAA(U_{ij}) \ar@<0.5ex>[r]\ar@<-0.5ex>[r]&
 \coprod\limits_{i} \AAA(U_i) }\!\!
\Bigg)\quad,
\end{flalign}
which also admits a simple presentation by generators and relations:
\begin{enumerate}[i)]
\item {\em Generators:} $\Phi_{U_i}(\varphi)$, for all $i$ and all  $\varphi\in C^\infty_c(U_i)$;
\item {\em Relations:} 
\begin{enumerate}[i)]
\item {\em $\bbR$-linearity:} $\Phi_{U_i}(\alpha\varphi + \beta\psi) = \alpha \Phi_{U_i}(\varphi) + \beta\Phi_{U_i}(\psi)$,
for all $i$, all $\alpha,\beta\in\bbR$ and all $\varphi,\psi\in C^\infty_c(U_i)$;
\item {\em Hermiticity:} $\Phi_{U_i}(\varphi)^\ast = \Phi_{U_i}(\varphi)$, for all $i$ and all $\varphi\in C^\infty_c(U_i)$;
\item {\em CCR:} $\big[\Phi_{U_i}(\varphi),\Phi_{U_i}(\psi)\big] = 
\mathrm{i}\,\int_{\bbS^1} \varphi\dd\psi\,\oone$,
for all $i$ and all  $\varphi,\psi\in C^\infty_c(U_i)$;
\item {\em Overlap relations:} $\Phi_{U_i}(\varphi) = \Phi_{U_j}(\varphi)$, for all $i,j$
and all $\varphi \in C^\infty_c(U_{ij})$.
\end{enumerate}
\end{enumerate}
Note that there are no a priori commutation relations between
$\Phi_{U_i}(\varphi)$ and $\Phi_{U_j}(\psi)$ for different $i\neq j$.
Depending on the cover, there however exist certain induced commutation 
relations that result by combining the CCRs for individual $i$'s 
and the overlap relations. 

There exists a canonical $\ast$-homomorphism
\begin{flalign}\label{eqn:colimmapCFT}
\colim\,\AAA(U_\bullet)\longrightarrow \AAA(\bbS^1)~,~~\Phi_{U_i}(\varphi) \longmapsto \Phi_{\bbS^1}(\varphi)
\end{flalign}
to the $\ast$-algebra on the full circle. We would like to answer the question
for which covers $\{U_i\subseteq\bbS^1\}$ this is an isomorphism, i.e.\ for which covers
the cosheaf condition holds true for our example.
\begin{propo}
\eqref{eqn:colimmapCFT} is an isomorphism if and only if
the open cover $\{ U_i\subseteq \bbS^1\}$ 
satisfies the condition
\begin{flalign}\label{eqn:covercondition}
\forall i,j \,\,\exists k \,:\, U_i\cup U_j \subseteq U_k\quad. 
\end{flalign}
\end{propo}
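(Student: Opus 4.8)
The plan is to recognize both $\ast$-algebras as canonical commutation relation (CCR) algebras and to reduce the statement to a comparison of their defining relations. Writing $\tau(\varphi,\psi) := \mathrm{i}\int_{\bbS^1}\varphi\,\dd\psi$, integration by parts on the boundaryless $\bbS^1$ shows that $\tau$ is antisymmetric, so $\AAA(\bbS^1)$ is precisely the CCR algebra of the pair $(C^\infty_c(\bbS^1),\tau)$: the free unital $\ast$-algebra on $C^\infty_c(\bbS^1)$ modulo $\bbR$-linearity, hermiticity and $[\Phi(\varphi),\Phi(\psi)] = \tau(\varphi,\psi)\,\oone$ for \emph{all} pairs. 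First I would analyze $\colim\AAA(U_\bullet)$: since compactly supported smooth functions form a cosheaf on $\bbS^1$, the $\bbR$-linearity and overlap relations collapse the generating space $\coprod_i C^\infty_c(U_i)$ exactly to $C^\infty_c(\bbS^1)$, so the colimit has the \emph{same} generators but, by its given presentation, imposes the CCR only for \emph{co-localizable} pairs $(\varphi,\psi)$, i.e.\ those with $\mathrm{supp}\,\varphi\cup\mathrm{supp}\,\psi\subseteq U_k$ for some $k$. Thus \eqref{eqn:colimmapCFT} is the quotient by an inclusion of relation ideals, it is always surjective (decompose any $\chi$ via a partition of unity subordinate to the cover), and the entire problem reduces to deciding when it is injective.

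A clarifying preliminary observation is that condition \eqref{eqn:covercondition} is equivalent to $\bbS^1$ itself being a member of the cover: if $\bbS^1=U_{i^\ast}$ then $U_i\cup U_j\subseteq U_{i^\ast}$ for all $i,j$; conversely, extracting a finite subcover by compactness and iterating the pairwise bound produces a single patch containing the finite union $\bbS^1$, forcing it to equal $\bbS^1$. This yields a clean proof of sufficiency: if $U_{i^\ast}=\bbS^1$, then every generator $\Phi_{U_i}(\varphi)$ equals $\Phi_{U_{i^\ast}}(\varphi)$ by the overlap relation (as $U_i=U_i\cap U_{i^\ast}$), so the structure map $\AAA(\bbS^1)=\AAA(U_{i^\ast})\to\colim\AAA(U_\bullet)$ is surjective and is a section of \eqref{eqn:colimmapCFT}; a surjection admitting a surjective section is an isomorphism.

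For necessity I would assume $\bbS^1$ is not in the cover, so every $U_i$ is a proper open subset, and produce a nonzero kernel element. The topological input is that one cannot then co-localize everything: if every compact subset of every pairwise union $U_i\cup U_j$ were contained in a single patch, a short induction using the shrinking lemma and a finite subcover $\{U_1,\dots,U_n\}$ would let me enlarge $\overline{U_1'}$ step by step (each enlargement $\overline{U_1'}\cup\dots\cup\overline{U_t'}$ sits in a pairwise union of the previously found patch with $U_{t+1}$, hence in a single patch) until a single patch contains $\bigcup_i\overline{U_i'}=\bbS^1$, contradicting properness. Hence some compact $K\subseteq U_i\cup U_j$ lies in no patch, and choosing bump functions $\varphi\in C^\infty_c(U_i),\psi\in C^\infty_c(U_j)$ with $\mathrm{supp}\,\varphi\cup\mathrm{supp}\,\psi=K$ gives a genuinely non-co-localizable pair. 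The candidate kernel element is the defect $[\Phi_{U_i}(\varphi),\Phi_{U_j}(\psi)] - \tau(\varphi,\psi)\,\oone$, which maps to $0$ in $\AAA(\bbS^1)$; the task is to show it is nonzero in $\colim\AAA(U_\bullet)$.

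The hard part is precisely this nonvanishing. The universal property of the colimit only manufactures homomorphisms \emph{out} of it, and the obvious scalar-shift representations $\Phi_{U_i}(\varphi)\mapsto\Phi(\varphi)+\lambda_i(\varphi)\oone$ are useless here, since compatible families $\{\lambda_i\}$ automatically glue to a global functional (again by the cosheaf property) and such maps preserve all commutators. One therefore needs genuine internal control of $\colim\AAA(U_\bullet)$: I would establish a Poincar\'e--Birkhoff--Witt type normal form for this partially-commutative CCR algebra (for instance via Bergman's Diamond Lemma, with the per-patch CCR and the overlap identifications as rewriting rules), certifying that the commutator of two non-co-localizable generators is \emph{not} reducible to a scalar. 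Equivalently, one can build a Fock-type representation twisted by the winding around the circle, in which separated generators fail to commute up to $\tau$; morally the surviving defect is the zero-mode obstruction measured by $H^1(\bbS^1)\cong\bbR$. I expect the construction of this normal form, rather than either the reduction or the topological extraction of $K$, to be the essential technical obstacle.
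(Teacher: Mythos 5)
Your reduction of the colimit to a ``partial CCR algebra'' on $C^\infty_c(\bbS^1)$ and your sufficiency argument are both correct; the route via the observation that \eqref{eqn:covercondition} forces $\bbS^1$ itself to be a member of the cover, followed by the section/retraction argument, is a legitimate alternative to the paper's explicit inverse $\Phi_{\bbS^1}(\varphi)\mapsto\widetilde{\Phi}(\varphi)=\sum_i\Phi_{U_i}(\chi_i\varphi)$.

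The necessity direction, however, contains a genuine gap, and it is not one that a Diamond-Lemma normal form or a twisted Fock representation can close, because the claim you are trying to certify is false: a pair $(\varphi,\psi)$ whose joint support lies in no patch can nevertheless have vanishing defect in $\colim\,\AAA(U_\bullet)$. The relations generate a two-sided ideal bilinearly, so the cross-commutator $\big[\Phi_{U_i}(\varphi),\Phi_{U_j}(\psi)\big]$ may be evaluated after chopping $\varphi=\sum_\alpha\rho_\alpha\varphi$ and $\psi=\sum_\beta\rho_\beta\psi$ with a partition of unity: it suffices that every pair of \emph{pieces} be co-localizable, which does not require the original pair to be. Concretely, let $C_1,C_2,C_3\subset\bbS^1$ be pairwise disjoint closed arcs and set $U_k:=\bbS^1\setminus C_k$ for $k=1,2,3$. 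This is an open cover by proper arcs, and \eqref{eqn:covercondition} fails because $U_1\cup U_2=\bbS^1$ is contained in no $U_k$. Yet all cross-CCRs hold in the colimit: chop both functions into pieces supported in arcs shorter than the minimal gap between the $C_k$'s; such an arc meets at most one $C_k$, so any two of them avoid a common $C_{k_0}$ and hence lie in $U_{k_0}$; the overlap relations move both pieces into $\AAA(U_{k_0})$, the patch CCR evaluates each term to $\mathrm{i}\int_{\bbS^1}\rho_\alpha\varphi\,\dd(\rho_\beta\psi)\,\oone$, and summing over $\alpha,\beta$ yields the full commutation relation. Since the map \eqref{eqn:colimmapCFT} is an isomorphism precisely when all cross-CCRs hold (this equivalence, via $\widetilde{\Phi}$, is sound), it \emph{is} an isomorphism for this cover. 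Your candidate kernel element is therefore zero, no rewriting system can certify otherwise, and the $H^1(\bbS^1)$ heuristic misidentifies the obstruction.

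Two further remarks. First, you got stuck exactly where the paper's own proof is silent: its final assertion, that the cross-commutation relations hold ``if and only if'' condition \eqref{eqn:covercondition} holds, is justified only in the ``if'' direction, and the cover above shows that the ``only if'' direction --- hence the necessity part of the proposition, and the subsequent corollary excluding all covers by intervals --- fails as stated. Second, your twisted-representation idea does succeed for the \emph{correct} criterion: \eqref{eqn:colimmapCFT} is an isomorphism if and only if every pair of points of $\bbS^1$ lies in a common patch. Indeed, if some pair $(x,y)$ has no common patch, the antisymmetric form $\delta(\varphi,\psi)=\varphi(x)\psi(y)-\varphi(y)\psi(x)$ vanishes on all co-localizable pairs, so the CCR algebra of the twisted form $\tau+\delta$ receives a homomorphism from $\colim\,\AAA(U_\bullet)$ exhibiting a nonzero defect; conversely, pointwise co-location plus a Lebesgue-number argument makes the chopping above work for an arbitrary cover.
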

\begin{proof}
Choosing a partition of unity $\sum_i \chi_i =1$ subordinate to 
$\{ U_i\subseteq \bbS^1\}$, we define for each 
$\varphi\in C^\infty_c(\bbS^1)$ an element
\begin{flalign}\label{eqn:Phitilde}
\widetilde{\Phi}(\varphi) \,:=\, \sum_i \Phi_{U_i}(\chi_i\varphi)~\in \colim\,\AAA(U_\bullet)\quad.
\end{flalign}
These elements are independent of the choice of partition of unity:
For any other choice $\sum_i \rho_i =1$, we obtain
\begin{flalign}
\nonumber \sum_i \Phi_{U_i}(\rho_i\varphi) &= \sum_{i,j}\Phi_{U_{i}}(\rho_i\chi_j \varphi)=
\sum_{i,j}\Phi_{U_{j}}(\rho_i\chi_j \varphi)\\
&= \sum_j \Phi_{U_j}(\chi_j\varphi)=
\widetilde{\Phi}(\varphi) \quad,
\end{flalign}
where in the second step we used the overlap relations for 
$\rho_i\chi_j \varphi\in C^\infty_c(U_{ij})$. The elements in \eqref{eqn:Phitilde}
are clearly $\bbR$-linear in $\varphi\in C^\infty_c(\bbS^1)$ and Hermitian.
Moreover, we have that $\widetilde{\Phi}(\varphi)\mapsto \Phi_{\bbS^1}(\varphi)$
under the map \eqref{eqn:colimmapCFT}. It follows that \eqref{eqn:colimmapCFT}
is an isomorphism, with inverse $\Phi_{\bbS^1}(\varphi) \mapsto \widetilde{\Phi}(\varphi)$,
if and only if the $\widetilde{\Phi}(\varphi)$'s, for $\varphi \in C^\infty_c(\mathbb{S}^1)$, satisfy the CCR. These are equivalent to the 
commutation relations
\begin{flalign}
\big[\Phi_{U_i}(\varphi), \Phi_{U_j} (\psi)\big] \,=\, \mathrm{i} \int_{\bbS^1} \varphi\dd\psi~\oone
\end{flalign}
in $\colim\,\AAA(U_\bullet)$, for all $i,j$ and all $\varphi\in C^\infty_c(U_i)$ and $\psi\in C^\infty_c(U_j)$,
which are satisfied if and only if condition \eqref{eqn:covercondition} holds.
\end{proof}

\begin{lem}
An open cover $\{ U_i\subseteq \bbS^1\}$ satisfies \eqref{eqn:covercondition}
if and only if one of its members is the whole circle $\bbS^1$.
\end{lem}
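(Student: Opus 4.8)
The plan is to prove the two implications separately, with the nontrivial direction resting entirely on the compactness of $\bbS^1$.

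First I would dispose of the easy direction. If one member of the cover equals the whole circle, say $U_k = \bbS^1$, then trivially $U_i \cup U_j \subseteq \bbS^1 = U_k$ for all indices $i,j$, so condition \eqref{eqn:covercondition} holds with this single $k$ serving as witness for every pair.

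For the converse, the key observation is that \eqref{eqn:covercondition} says precisely that the family $\{U_i\}$ is \emph{directed} under inclusion: any two members are jointly dominated by a third member of the same family. I would first upgrade this from pairs to arbitrary finite subfamilies by a straightforward induction. Given $U_{i_1},\dots,U_{i_n}$, one picks $U_{k_1}$ with $U_{i_1}\cup U_{i_2}\subseteq U_{k_1}$, then $U_{k_2}$ with $U_{k_1}\cup U_{i_3}\subseteq U_{k_2}$, and so on, until a single member of the family contains the entire union $U_{i_1}\cup\dots\cup U_{i_n}$.

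The main step is then to invoke compactness. Since $\{U_i\}$ is an open cover of the compact space $\bbS^1$, there is a finite subcover $U_{i_1},\dots,U_{i_n}$ with $\bbS^1 = U_{i_1}\cup\dots\cup U_{i_n}$. By the finite directedness just established, there is some member $U_k$ of the family with $U_{i_1}\cup\dots\cup U_{i_n}\subseteq U_k$. Combining these, $\bbS^1 = U_{i_1}\cup\dots\cup U_{i_n}\subseteq U_k\subseteq \bbS^1$, which forces $U_k = \bbS^1$. I do not expect any genuine obstacle: the entire content is the interplay between directedness and compactness, and no feature of the circle is used beyond its compactness, so the argument applies verbatim to any compact space.
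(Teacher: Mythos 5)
Your proof is correct and follows essentially the same route as the paper: the forward direction is handled by taking a finite subcover (compactness) and then iterating the pairwise condition \eqref{eqn:covercondition} to find a single member containing the whole circle, which is exactly the paper's argument with the iteration spelled out as an induction on directedness. Your closing observation that only compactness of $\bbS^1$ is used is accurate but does not change the substance of the argument.
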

\begin{proof}
The direction `$\Leftarrow$' is obvious. To prove `$\Rightarrow$',
note that each open cover of the compact space $\bbS^1$ has a finite subcover, say 
$U_1,\dots,U_N\subseteq \bbS^1$. Applying \eqref{eqn:covercondition} iteratively, 
we obtain
$\bbS^1 =  U_1\cup U_2 \cup \cdots \cup U_N \subseteq U_i \cup U_3 \cup\cdots\cup 
U_N\subseteq \cdots\subseteq U_k$, i.e.\ there exists $k$ such that $U_k=\bbS^1$ is the circle.
\end{proof}

\begin{cor}
The AQFT $\AAA : \Open(\bbS^1)\to \astAlg$ describing the
scalar field on $\bbS^1$ satisfies the cosheaf condition 
for an open cover $\{ U_i\subseteq \bbS^1\}$ 
if and only if $\bbS^1$ is a member of this cover.
In particular, it does {\em not} satisfy the cosheaf condition
for any open cover $\{I_i \subseteq\bbS^1\}$ by intervals.
\end{cor}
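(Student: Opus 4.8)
The plan is to obtain the Corollary as an immediate consequence of the Proposition and Lemma established just above it, so that no new computation is required. First I would recall that, by definition, the cosheaf condition for a cover $\{U_i \subseteq \bbS^1\}$ is precisely the assertion that the canonical $\ast$-homomorphism \eqref{eqn:colimmapCFT} is an isomorphism. The preceding Proposition identifies this invertibility with the combinatorial condition \eqref{eqn:covercondition}, namely that for all $i,j$ there exists $k$ with $U_i \cup U_j \subseteq U_k$. The preceding Lemma in turn shows that \eqref{eqn:covercondition} holds if and only if one of the members of the cover equals the whole circle $\bbS^1$. Chaining these two equivalences yields the first claim verbatim: the scalar field AQFT $\AAA : \Open(\bbS^1)\to\astAlg$ satisfies the cosheaf condition for $\{U_i\subseteq\bbS^1\}$ if and only if $\bbS^1$ is itself a member of the cover.

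For the second (``in particular'') statement I would argue simply by exclusion. A cover $\{I_i \subseteq \bbS^1\}$ by intervals consists, by definition of an interval, of connected \emph{proper} open subsets of the circle; in particular $I_i \neq \bbS^1$ for every $i$, since $\bbS^1$ is not an interval. Hence no member of such a cover is the whole circle, and by the first part of the Corollary the cosheaf condition necessarily fails for every cover by intervals. This is exactly the negative result that motivates abandoning the cosheaf condition in favour of the $j$-locality descent condition of Definition \ref{def:jlocal}.

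The only point requiring any care is cosmetic rather than mathematical: one must make sure that the notion of ``cosheaf condition'' invoked in the statement is literally the invertibility of the map \eqref{eqn:colimmapCFT}, and that ``interval'' is read as a proper (hence non-surjective) open arc of $\bbS^1$. Once these two identifications are fixed, the Corollary is a formal two-step composition of the Proposition and the Lemma with no residual calculation, which is why I would keep the proof to these few lines.
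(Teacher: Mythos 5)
Your proposal is correct and follows exactly the paper's (implicit) route: the Corollary is stated there as an immediate consequence of chaining the preceding Proposition (the map \eqref{eqn:colimmapCFT} is an isomorphism iff condition \eqref{eqn:covercondition} holds) with the Lemma (condition \eqref{eqn:covercondition} holds iff some member of the cover is all of $\bbS^1$), plus the observation that an open interval $I\subset\bbS^1$ is never the whole circle. Nothing is missing; your remark about reading ``interval'' as a proper open arc is the right and only point of care.
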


\begin{rem}
Clearly, a cosheaf condition for covers containing the entire space 
holds trivially and hence has no power. From the discussion above 
we deduce that even for an elementary prototypical example of AQFT, such 
as the scalar field on the circle $\bbS^1$, the cosheaf condition 
holds only for covers that contain the circle itself. 
A similar behavior arises more generally also in AQFTs on Lorentzian manifolds. 
The alternative descent condition {\em $j$-locality}, 
inspired by Fredenhagen's universal algebra construction \cite{Fredenhagen:1989pw,Fredenhagen:1993tx,Fredenhagen:1992yz} 
and formalized in Definition \ref{def:jlocal}, is better behaved 
in standard examples of AQFTs, cf.\ Example \ref{ex:jlocal}.
\end{rem}


\bibliography{allbibtex}

\bibliographystyle{prop2015}

\end{document}